\definecolor{DarkRed}{rgb}{0.5,0.1,0.1}
\definecolor{DarkBlue}{rgb}{0.1,0.1,0.5}
\definecolor{ForestGreen}{rgb}{0.1333,0.5451,0.1333}
\definecolor{Red}{rgb}{0.9,0,0}
\def\BState{\State\hskip-\ALG@thistlm}
\newtheorem{theorem}{Theorem}
\newtheorem{lemma}{Lemma}[section]
\newtheorem{proposition}[lemma]{Proposition}
\newtheorem{corollary}[theorem]{Corollary}
\newtheorem{claim}[lemma]{Claim}
\newtheorem{fact}[lemma]{Fact}
\newtheorem{definition}{Definition}
\newtheorem{problem}{Problem}
\newtheorem{remark}[lemma]{Remark}
\newtheorem*{claim*}{Claim}
\newtheorem*{proposition*}{Proposition}
\newtheorem*{lemma*}{Lemma}
\newtheorem*{problem*}{Problem}
\newtheorem{mdresult}{Result}
\newenvironment{result}{\vspace{-0.10cm}\begin{mdframed}[backgroundcolor=lightgray!40,topline=false,rightline=false,leftline=false,bottomline=false,innertopmargin=0pt]\begin{mdresult}}{\end{mdresult}\end{mdframed}}
\renewcommand{\qed}{\nobreak \ifvmode \relax \else
      \ifdim\lastskip<1.5em \hskip-\lastskip
      \hskip1.5em plus0em minus0.5em \fi \nobreak
      \vrule height0.75em width0.5em depth0.25em\fi}
\newcommand{\rs}{Ruzsa-Szemer\'{e}di\xspace}
\newcommand{\toShrink}{-.20cm}
\newcommand{\toShrinkEnu}{-.2cm}
\newenvironment{tbox}{\begin{tcolorbox}[
		enlarge top by=5pt,
		enlarge bottom by=5pt,
	 	breakable,
		 boxsep=0pt,
                  left=4pt,
                  right=4pt,
                  top=10pt,
                  arc=0pt,
                  boxrule=1pt,toprule=1pt,
                  colback=white
                  ]
	}
{\end{tcolorbox}}
\newcommand{\textbox}[2]{
{
\begin{tbox}
\textbf{#1}
{#2}
\end{tbox}
}
}
\newcommand{\Leq}[1]{\ensuremath{\underset{\textnormal{#1}}\leq}}
\newcommand{\Geq}[1]{\ensuremath{\underset{\textnormal{#1}}\geq}}
\newcommand{\polylog}{\mbox{\rm  polylog}}
\newcommand{\Ot}{\ensuremath{\widetilde{O}}}
\newcommand{\eps}{\ensuremath{\varepsilon}}
\newcommand{\Paren}[1]{\Big(#1\Big)}
\newcommand{\Bracket}[1]{\Big[#1\Big]}
\newcommand{\bracket}[1]{\left[#1\right]}
\newcommand{\paren}[1]{\ensuremath{\left(#1\right)}\xspace}
\newcommand{\card}[1]{\left\vert{#1}\right\vert}
\newcommand{\set}[1]{\ensuremath{\left\{ #1 \right\}}}
\newcommand{\alg}{\ensuremath{\mathcal{A}}\xspace}
\DeclareMathOperator*{\Exp}{\ensuremath{{\mathbb{E}}}}
\DeclareMathOperator*{\Prob}{\ensuremath{\textnormal{Pr}}}
\renewcommand{\Pr}{\Prob}
\newcommand{\Ex}{\Exp}
\newcommand{\etal}{et al.\xspace}
\newcommand{\MaxMatching}{\ensuremath{\textnormal{\textsf{MaxMatching}}}\xspace}
\newcommand{\Ei}[1]{\ensuremath{E^{(#1)}}}
\newcommand{\Gi}[1]{\ensuremath{G^{(#1)}}}
\renewcommand{\alg}{\ensuremath{\textnormal{\textsf{ALG}}}\xspace}
\newcommand{\Ci}[1]{\ensuremath{C^{(#1)}}}
\newcommand{\Mi}[1]{\ensuremath{M^{(#1)}}}
\newcommand{\EDCS}[1]{\ensuremath{\textnormal{EDCS\ensuremath{(#1)}}\xspace}}
\renewcommand{\deg}[2]{\ensuremath{d_{#1}(#2)}}
\newcommand{\VC}{\ensuremath{\textnormal{\textsf{VC}}}}
\newcommand{\MM}{\ensuremath{\textnormal{\textsf{MM}}}}
\newcommand{\palg}{\ensuremath{\textnormal{\textsf{ParallelAlgorithm}}}\xspace}
\newcommand{\rmatch}{\ensuremath{\textnormal{\textsf{RandomMatch}}}\xspace}
\newcommand{\pedcs}{\ensuremath{\textnormal{\textsf{ParallelEDCS}}}\xspace}
\newcommand{\va}{\ensuremath{V_{\textnormal{\sc alg}}}}
\newcommand{\vh}{\ensuremath{V_{\textnormal{\sc high}}}}
\newcommand{\vm}{\ensuremath{V^-}}
\newcommand{\vr}{\ensuremath{V_{\textnormal{\sc rec}}}}
\newcommand{\vvc}{\ensuremath{V_{\textnormal{\sc vc}}}}
\newcommand{\tDelta}{\widetilde{\Delta}}
\newcommand{\ma}{\ensuremath{M_{\textnormal{\sc alg}}}}
\newcommand{\mh}{\ensuremath{M_{\textnormal{\sc high}}}}
\newcommand{\mr}{\ensuremath{M_{\textnormal{\sc rec}}}}
\newcommand{\cm}{\ensuremath{C^-}}
\newcommand{\Vi}[1]{\ensuremath{V^{(#1)}}}
\newcommand{\hH}{\ensuremath{\widehat{H}}}
\newcommand{\propone}{\textnormal{Property~(P1)}\xspace}
\newcommand{\proptwo}{\textnormal{Property~(P2)}\xspace}
\newcommand{\GEp}{\ensuremath{G^{\textnormal{\textsf{E}}}_p}}
\newcommand{\GVp}{\ensuremath{G^{\textnormal{\textsf{V}}}_p}}
\newcommand{\Esample}{\ensuremath{E_{\textnormal{\textsf{smpl}}}}}
\newcommand{\event}{\ensuremath{\mathcal{E}_{\textsf{valid}}}}
\newcommand{\LL}{\ensuremath{\mathcal{L}}}
\newcommand{\core}{\textsc{ALG}}
\newcommand{\ignore}[1]{}
\title{Coresets Meet EDCS: Algorithms for Matching and Vertex Cover on Massive Graphs}
\author{Sepehr Assadi\footnote{{\small{\texttt{sassadi@cis.upenn.edu}}}. Supported in part by NSF grant CCF-1617851. Research done in part while the \indent \indent author was a summer intern at Google Research, NYC.}   \\ University of Pennsylvania   \and 
MohammadHossein Bateni\footnote{{\small{\texttt{bateni@google.com}}.}} \\ Google Research  \and 
Aaron Bernstein\footnote{{\small{\texttt{bernstei@gmail.com}}}. Supported in part by Einstein Grant. }  \\ Technical University of Berlin\and
Vahab Mirrokni\footnote{{\small{\texttt{mirrokni@google.com}.}}} \\ Google Research \and
Cliff Stein\footnote{{\small{\texttt{cliff@ieor.columbia.edu}.}} Research supported in part by NSF grants CCF-1421161 and CCF-1714818.  Some \indent \indent research done while visiting Google.} \\ Columbia University }
\date{}
\begin{document}
\maketitle

\thispagestyle{empty}
\begin{abstract}

As massive graphs become more prevalent, there is a rapidly growing
need for  {scalable} algorithms that  solve classical
graph problems, such as maximum matching and minimum vertex cover,  on large datasets.
For massive inputs, several different computational models
have been introduced, including  the streaming model,
the distributed communication model, and the
{massively parallel computation (MPC)} model that is a common
abstraction of MapReduce-style computation.  In each model, algorithms
are analyzed in terms of resources such as space used or rounds
of communication needed, in addition to the more traditional approximation ratio.

\smallskip

In this paper, we give a \emph{single unified approach} that yields better approximation
algorithms for matching and vertex cover in all these models.  The highlights include: 
\begin{itemize}
        \item The first one pass, significantly-better-than-2-approximation for matching in random arrival streams that uses subquadratic space, namely a $(1.5+\eps)$-approximation streaming algorithm that uses $\Ot(n^{1.5})$ space for constant $\eps > 0$. 
		\item The first 2-round, better-than-2-approximation for matching
        in the MPC model that uses subquadratic space per machine, namely a $(1.5+\eps)$-approximation algorithm with $\Ot(\sqrt{mn} + n)$ memory per machine for constant $\eps > 0$. 
\end{itemize}

\smallskip

By building on our unified approach, we further develop parallel algorithms in the MPC model that give a $(1
+ \epsilon)$-approximation to  matching and an $O(1)$-approximation to  vertex cover in only
$O(\log\log{n})$ MPC rounds and $O(n/\polylog{(n)})$ memory per machine.  These 
results settle multiple open questions posed in the recent paper of
Czumaj~\etal [STOC 2018].

\smallskip

We obtain our results by a novel combination of two previously disjoint set of techniques, namely {randomized composable coresets} and {edge degree constrained
subgraphs (EDCS)}. We significantly extend the power of
these techniques and prove several new structural results.  For
example, we show that an EDCS is a sparse certificate for large
matchings and small vertex covers that is quite robust to sampling and
composition.

\end{abstract}
\clearpage
\setcounter{page}{1}

\thispagestyle{empty}
\setcounter{tocdepth}{3}
\tableofcontents
\clearpage
\setcounter{page}{1}

\section{Introduction}\label{sec:intro}

As massive graphs become more prevalent, there is a rapidly growing
need for scalable algorithms that  solve classical
graph problems on large datasets.
 When dealing with massive data,
the entire input graph is orders of magnitude larger than the amount of storage
on one processor and hence any algorithm needs to explicitly address
this issue.
For massive inputs, several different computational models
have been introduced, each focusing on certain additional resources needed
to solve large-scale problems.
Some examples include the streaming model,
the distributed communication model, and the
{massively parallel computation (MPC)} model that is a common
abstraction of MapReduce-style computation (see Section~\ref{sec:prelim}
for a definition of MPC). The target resources in these models are the number
of rounds of communication and the local storage
on each machine.

Given the variety of relevant models, there has been a lot of
attention on designing general algorithmic techniques that can
be applicable across a wide range of settings.  We focus on this task
for two prominent graph optimization problems: maximum matching
and minimum vertex cover.  Our main result (Section
\ref{sec:result1}) presents a \textbf{\emph{single unified algorithm}}
that immediately implies significantly improved results (in some or all of the parameters involved) for both
problems in all three models discussed above. For example, in
{random arrival order} streams, our algorithm computes a $(1.5+\eps)$-approximate matching in a single pass with
$\Ot(n^{1.5})$ space; this significantly improves upon the approximation ratio of previous
single-pass algorithms using subquadratic space, and is the first result
to present strong evidence of a separation between random and
adversarial order for matching. Another example is in the MPC model:
Given $\Ot(n^{1.5})$ space per machine, our algorithm computes an
a $(1.5+\eps)$-approximate matching in only 2 MPC rounds; this
significantly improves upon all previous results with a small constant
number of rounds.

Our algorithm is built on the framework of randomized composable
  coreset, which was recently suggested by Assadi and
Khanna~\cite{AssadiK17} as a means to unify different models for
processing massive graphs (see Section \ref{sec:coreset}).  A
common drawback of unified approaches is that although they have the
advantage of versatility, the results they yield are often not as
strong as those that are tailored to one particular model. It is
therefore perhaps surprising that we can design essentially a single
algorithm that improves upon the state-of-the-art algorithms in all
three models discussed above simultaneously. Our approach for the
matching problem notably goes significantly beyond a
$2$-approximation, which is a notorious barrier for matching in all
the models discussed above.

We also build on our techniques to achieve a second result (Section
\ref{sec:result2}) particular to the MPC model. We show that
when each machine has only $O(n)$ space (or even $O(n/\polylog(n)))$,
$O(\log\log{n})$ rounds suffice to compute a $(1+\eps)$-approximate
matching or a $O(1)$-approximate vertex cover. This improves
significantly upon the recent breakthrough of Czumaj et
al. \cite{CzumajLMMOS17}, which does not extend to vertex cover, and
requires $O(\log\log^2(n))$ rounds. Our results in this part settle
multiple open questions posed by Czumaj~\etal~\cite{CzumajLMMOS17}.

\subsection{Randomized Composable Coresets}
\label{sec:coreset}

Two examples of general techniques widely used for processing massive data sets are 
linear sketches (see e.g. ~\cite{AhnGM12Linear,AGM12,KapralovLMMS14,AssadiKLY16,ChitnisCEHMMV16,BulteauFKP16,KW14,BhattacharyaHNT15,McGregorTVV15}) and composable coresets
(see e.g.~\cite{BadanidiyuruMKK14,BalcanEL13,BateniBLM14,IndykMMM14,MirzasoleimanKSK13,MirrokniZ15,AssadiK17}).
Both proceed by arbitrarily partitioning the data into smaller pieces,
computing a small-size summary of each piece, and then showing that these summaries can be combined
into a small-size summary of the original data set. This approach has a wide range of applications,
but strong impossibility results are known for both techniques for the two problems of {maximum matching} and {minimum vertex cover} that we study in this paper~\cite{AssadiKLY16}.

Recently, Assadi and Khanna~\cite{AssadiK17} turned to the notion of randomized composable coresets---originally introduced in the context of submodular maximization by Mirrokni and Zadimoghadam~\cite{MirrokniZ15} (see also~\cite{BarbosaENW15})---to bypass these strong impossibility results. 
The idea is to partition the graph into \emph{random} pieces rather than arbitrary ones. 
The authors in \cite{AssadiK17}
designed randomized composable coresets for matching and vertex cover, but although
this led to unified algorithms for many models of computation, the resulting bounds were
still for the most part weaker than the state-of-the-art algorithms tailored to each particular model. 

We now define randomized composable coresets in more detail; for brevity, we refer to them as randomized coresets.
Given a graph $G(V,E)$, with $m=|E|$ and $n=|V|$,  consider a random partition of $E$ into $k$ edge sets
$\set{\Ei{1},\ldots,\Ei{k}}$; each edge in $E$ is sent to exactly one of the $\Ei{i}$,
picked uniformly at random, thereby partitioning graph $G$ into $k$ subgraphs 
$\Gi{i}(V, \Ei{i})$.

\begin{definition}[Randomized Composable Coreset~\cite{MirrokniZ15,AssadiK17}]
\label{def:randomized-coreset}
Consider an algorithm $\core$ that takes as input an arbitrary graph and returns a 
\textbf{subgraph} $\core(G) \subseteq G$. $\core$ is said to output an $\alpha$-approximate randomized coreset for maximum matching if given any graph $G(V,E)$ and a random $k$-partition of $G$ into $\Gi{i}(V,\Ei{i})$, the size of the maximum matching
in $\core(\Gi{1}) \cup \ldots \cup \core(\Gi{k})$ is an $\alpha$-approximation to the size of the maximum matching in $G$ with high probability.
We refer to the \textbf{number of edges} in the returned subgraph by $\core$ as the \textbf{size} of the coreset. 
Randomized coresets are defined analogously for minimum vertex cover and other graph problems. 
\end{definition}

It is proven in~\cite{AssadiK17} that any $O(1)$-approximate randomized coreset for matching or vertex cover has size $\Omega(n)$. Thus, similarly to~\cite{AssadiK17}, we focus
on designing randomized coresets of size $\Ot(n)$, which is optimal within logarithmic factors. The following proposition states some immediate applications of randomized coresets. 

\begin{proposition}\label{prop:coreset-application}
Suppose $\core$ outputs an $\alpha$-approximate randomized coreset of size $\Ot(n)$ for problem $P$ (e.g. matching).
Let $G(V,E)$ be a graph with $m = \card{E}$ edges. This yields:
	\begin{enumerate}
		\item A parallel algorithm in the MPC model that with high probability outputs an $\alpha$-approximation to $P$ in two rounds with $\Ot(\sqrt{m/n})$ machines, each with $\Ot(\sqrt{mn} + n) = \Ot(n^{1.5}$) memory. 
		\item A streaming algorithm that on \emph{random arrival} streams outputs an $\alpha$-approximation to $P(G)$ with high probability using $\Ot(\sqrt{mn} + n) = \Ot(n^{1.5}$) space. 
		\item A simultaneous communication protocol that on \emph{randomly partitioned} inputs computes an $\alpha$-approximation to $P(G)$ with high probability using $\Ot(n)$ communication per machine/player. 
	\end{enumerate}
\end{proposition}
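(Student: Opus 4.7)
The plan is to use the same parameter $k = \Theta(\sqrt{m/n})$ in all three parts, so that each of the $k$ pieces of a random partition contains about $m/k = \sqrt{mn}$ edges, the coreset produced by \core{} on each piece has size $\Ot(n)$, and the union of the $k$ coresets has $k \cdot \Ot(n) = \Ot(\sqrt{mn})$ edges. This union is small enough to be collected by one machine or held in memory by a single pass, and by Definition~\ref{def:randomized-coreset} running any offline $\alpha$-approximation algorithm for $P$ on it yields an $\alpha$-approximation to $P(G)$ with high probability. What remains is the memory/round/communication accounting in each model.

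For part~(1), in round~$1$ each edge is sent independently to a uniformly random machine among the $k$ machines; this realizes the random partition of Definition~\ref{def:randomized-coreset} exactly, and a standard Chernoff bound shows that each machine receives $\Ot(m/k) = \Ot(\sqrt{mn})$ edges with high probability. Each machine then runs \core{} locally. In round~$2$ all $k$ coresets are forwarded to a single designated machine, which stores at most $\Ot(\sqrt{mn})$ edges and computes the final answer. The per-machine memory budget $\Ot(\sqrt{mn} + n)$ suffices for both rounds.

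For part~(2), I would split the stream into $k$ consecutive blocks of $m/k$ edges; since the arrival order is uniformly random, the induced partition is equivalent to a uniformly random partition of $E$ into $k$ equal-sized groups, which for the purposes of the coreset guarantee is indistinguishable from the i.i.d.\ partition of Definition~\ref{def:randomized-coreset} (a standard exchangeability/coupling argument). The algorithm buffers the current block using $\Ot(\sqrt{mn})$ space, invokes \core{} at the end of the block to produce an $\Ot(n)$-size coreset, discards the block, and retains the coreset; after the $k$-th block the retained coresets total $\Ot(\sqrt{mn})$ edges and the final solver is invoked on their union. For part~(3), the input is already randomly partitioned by assumption, so each of the $k$ players simply runs \core{} on its share and sends the resulting $\Ot(n)$ edges to a coordinator that solves $P$ on the union.

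The only genuinely non-trivial step is the reduction in part~(2) from a random-arrival stream to the i.i.d.\ random partition of Definition~\ref{def:randomized-coreset}; parts~(1) and~(3) reduce to the definition directly. The additive $\Ot(n)$ term in the memory bound merely covers the regime $m = O(n)$, where the input already fits on a single machine or player and the choice $k = \Theta(1)$ is more appropriate.
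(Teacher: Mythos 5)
Your proof is correct and follows essentially the same route as the paper's (Appendix~A): set $k = \Theta(\sqrt{m/s})$ with $s = \Ot(n)$, so each piece has $\Ot(\sqrt{mn})$ edges and the $k$ coresets together occupy $\Ot(\sqrt{mn})$ space, then instantiate the two-round MPC, block-by-block streaming, and simultaneous protocols directly from Definition~\ref{def:randomized-coreset}. The only place you deviate slightly is part~(2): you take $k$ \emph{equal-sized} consecutive blocks and invoke an exchangeability/coupling argument to reduce to the i.i.d.\ partition, whereas the paper instead chooses the (random) block lengths so that the resulting partition \emph{is} a random $k$-partition in the sense of Definition~\ref{def:randomized-coreset} exactly, sidestepping the need for any coupling. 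Both are fine; the paper's choice avoids having to argue that the coreset guarantee survives conditioning on a balanced partition, which is a genuine (if standard) extra step in your version.
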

\noindent
A proof of Proposition~\ref{prop:coreset-application} can be found in Appendix~\ref{app:applications}.


\subsection{First Result: Improved Algorithms via a New Randomized Coreset}
\label{sec:result1}

We start by studying the previous randomized coreset of~\cite{AssadiK17} for matching, which was simply to pick a maximum matching of each
machine's subgraph as its coreset. This is arguably the most natural approach to the problem and results in truly sparse subgraphs
(maximum degree one). As a warm-up to our main results, we present a simpler and improved analysis (compared to that in~\cite{AssadiK17}), which shows that this coreset
achieves a $3$-approximation (vs. $9$-approximation proven in~\cite{AssadiK17}). We also show that there exist graphs on which the approximation ratio of this coreset
is at least $2$. This suggests that to achieve a better than $2$ approximation, fundamentally different ideas are needed
which brings us to our first main result. 
\smallskip

\begin{result}\label{res:coresets}
	There exist randomized composable coresets of size $\Ot(n)$ that for any constant $\eps > 0$, give a $(3/2+\eps)$-approximation for maximum matching and $(2+\eps)$-approximation for minimum vertex cover with high probability. 
\end{result}

\noindent Our results improve upon the randomized coresets
of~\cite{AssadiK17} that obtained $O(1)$ and $O(\log{n})$
approximation to matching and vertex cover, respectively.  We notably
go beyond the ubiquitous 2-approximation barrier for matching 
(in Section~\ref{sec:maxmatching-lower}, we show that the previous approach 
of ~\cite{AssadiK17} provably cannot go below $2$).  {Result~\ref{res:coresets}
  yields a unified framework that improves upon the state-of-the art
  algorithms for matching and vertex cover across several
  computational models in one or all parameters involved.}

\paragraph{\underline{First implication: streaming.}} 

We consider {single-pass} streaming algorithms. Computing a $2$-approximation for matching (and vertex cover) in $O(n)$ space is
trivial: simply maintain a maximal matching. Going beyond this barrier has 
remained one of the central open questions in the graph streaming literature since the introduction of the field~\cite{FKMSZ05}.
No $o(n^2)$-space algorithm is known for this task on {adversarially ordered} streams and the 
lower bound result by Kapralov~\cite{Kapralov13} (see also~\cite{GoelKK12}) proves that an $\paren{\frac{e}{e-1}}$-approximation requires $n^{1+\Omega(1/\log\log{n})}$ space.
To make progress on this fascinating open question, Konrad~\etal~\cite{KonradMM12} suggested the study of matching in {random arrival} streams. They presented 
an algorithm with approximation ratio strictly better than $2$, namely $2-\delta$ for $\delta \approx 0.002$, in $O(n)$ space over random streams. 
A direct application of our Result~\ref{res:coresets} improves the approximation ratio of this algorithm significantly albeit at a cost of a larger space requirement. 
\begin{corollary}\label{cor:coreset-stream}
	There exists a single-pass streaming algorithm on \emph{random arrival streams} that uses $\Ot(n^{1.5})$ space and with high probability (over the randomness of the stream) achieves a $(3/2+\eps)$-approximation
	to the maximum matching problem for constant $\eps > 0$. 
\end{corollary}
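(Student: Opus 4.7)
The plan is to deduce Corollary~\ref{cor:coreset-stream} directly by instantiating part~2 of Proposition~\ref{prop:coreset-application} with the matching coreset promised by Result~\ref{res:coresets}. That result supplies an algorithm $\core$ which, on any input graph, returns an $\Ot(n)$-edge subgraph such that, on a random $k$-partition $\Gi{1},\ldots,\Gi{k}$ of the edge set, a maximum matching of $\core(\Gi{1})\cup\cdots\cup\core(\Gi{k})$ is a $(3/2+\eps)$-approximation of the maximum matching of $G$ with high probability. All that remains is to simulate this coreset construction within a single random-arrival pass.

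First I would set $k=\lceil\sqrt{m/n}\rceil$; since $m$ is unknown ahead of time, a standard doubling trick on the edge count (or simply taking $k=\lceil\sqrt{n}\rceil$ as a worst-case choice) preserves the claimed space bound. Because the edges arrive in a uniformly random order, slicing the stream into $k$ contiguous blocks of length $\lceil m/k\rceil$ produces a partition $\Ei{1},\ldots,\Ei{k}$ of $E$ whose joint distribution matches the random $k$-partition required by Definition~\ref{def:randomized-coreset} up to the negligible effect of requiring exactly balanced parts. The algorithm then processes the stream one block at a time: buffer the $\lceil m/k\rceil$ edges of the current block, invoke $\core$ on the resulting subgraph $\Gi{i}$, retain only the returned coreset $\core(\Gi{i})$ of size $\Ot(n)$, and discard the buffer before starting the next block. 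Once the pass ends, compute a maximum matching of the union $\bigcup_i \core(\Gi{i})$ offline and report it.

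Correctness is immediate from Result~\ref{res:coresets}. For the space bound, the algorithm uses $O(m/k)$ memory for the current block plus $k\cdot\Ot(n)$ for the stored coresets; balancing these at $k=\sqrt{m/n}$ yields total space $\Ot(\sqrt{mn})=\Ot(n^{1.5})$, since $m\le n^2$. The only substantive step is the distributional remark: that a contiguous slice of a uniformly random permutation is effectively a uniform random subset of $E$ for the purposes of the coreset guarantee. This is a standard exchangeability/coupling argument that Appendix~\ref{app:applications} presumably carries out in the proof of Proposition~\ref{prop:coreset-application}; the deviation from independent edge sampling costs at most constant factors in the failure probability, which the high-probability conclusion of Result~\ref{res:coresets} absorbs. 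The corollary is therefore a black-box consequence, with all of the intellectual content lying in Result~\ref{res:coresets} itself.
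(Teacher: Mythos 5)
Your proposal is correct and matches the paper's route exactly: the corollary is obtained by plugging the $(3/2+\eps)$-approximate randomized coreset of Result~\ref{res:coresets} into part~2 of Proposition~\ref{prop:coreset-application}, whose proof (Appendix~\ref{app:applications}) implements the coreset in a random-arrival stream by buffering consecutive blocks, computing a coreset per block, and balancing block size against total coreset storage at $k\approx\sqrt{m/n}$. Your remark about the contiguous-block versus multinomial-partition distributional mismatch is a real (if minor) gap, but it is one the paper itself leaves implicit in the appendix, so you have not deviated from its argument.
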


Our results provide the first strong evidence of a separation between random-order and
adversarial-order streams for matching, as it is the first algorithm that beats 
the ratio of $\paren{\frac{e}{e-1}}$, which is known to be ``hard'' on adversarial streams ~\cite{Kapralov13}.
Although the lower bound of ~\cite{Kapralov13} does not preclude 
achieving the bounds of Corollary \ref{cor:coreset-stream} in an adversarial
order (because our space is $\Ot(n^{1.5})$ rather than $\Ot(n)$),
the proof in~\cite{Kapralov13} (see
also~\cite{GoelKK12}) suggests that achieving such bounds is
ultimately connected to further understanding of \rs
graphs, a notoriously hard problem in additive combinatorics 
(see e.g. \cite{Gowers01, FoxHS15,AlonMS12}).
From a different perspective, 
most (but not all) streaming lower bounds are proven by bounding
the (per-player) communication complexity of the problem in the {blackboard} communication model,
including the $\paren{\frac{e}{e-1}}$ lower bound of~\cite{Kapralov13}.
Our algorithm in Result~\ref{res:coresets} can be implemented with $\Ot(n)$ (per-player) communication in this model which goes strictly
below the lower bound of~\cite{Kapralov13}, thus establishing the first \emph{provable separation} between
adversarial- and random-partitioned inputs in the blackboard communication model for computing a matching.

\paragraph{\underline{Second implication: MPC.}} Maximum matching and minimum vertex cover are among the most studied graph optimization problems in the MPC and other MapReduce-style computation models
~\cite{AhnGM12Linear,LattanziMSV11,AhnG15,AssadiK17,CzumajLMMOS17,BehnezhadDETY17,HarveyLL18}. As an application of Result~\ref{res:coresets}, 
we obtain efficient MPC algorithms for matching and vertex cover in only {two} rounds of computation.

\begin{corollary}\label{cor:coreset-mpc}
	There exist MPC algorithms that with high probability achieve a $(3/2+\eps)$-approximation to matching and a 
        $(2+\eps)$-approximation to vertex cover in {two} MPC rounds and
         $\Ot(\sqrt{mn} + n)$ memory per machine for constant $\eps > 0$\footnote{The approximation
          factor for vertex cover degrades to $4$ if one requires
          local computation on each machine to be polynomial time; see
          Remarks~\ref{rem:computational-time}
          and~\ref{rem:vc-polytime}.}.
\end{corollary}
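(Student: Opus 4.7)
The plan is to derive Corollary~\ref{cor:coreset-mpc} as a direct instantiation of the randomized-coreset-to-MPC reduction formalized in Proposition~\ref{prop:coreset-application}-(1), plugging in the coresets guaranteed by Result~\ref{res:coresets}. So the bulk of the proof is bookkeeping of machines and memory, and I only need to verify that the reduction goes through with the right parameters and that nothing is lost for vertex cover.

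Concretely, I would describe the following two-round MPC algorithm. Let $k = \lceil \sqrt{m/n} \rceil$, and let $\core$ be the coreset algorithm from Result~\ref{res:coresets}, which produces a subgraph of size $\Ot(n)$. In the first round, each edge of $G$ is independently routed to a machine chosen uniformly at random from $k$ designated ``coreset machines''; by standard Chernoff/balls-into-bins arguments each such machine receives at most $\Ot(m/k) = \Ot(\sqrt{mn})$ edges with high probability, which fits in memory $\Ot(\sqrt{mn}+n)$. The $i$-th machine then locally runs $\core$ on its subgraph $\Gi{i}(V,\Ei{i})$ to produce a coreset of size $\Ot(n)$. In the second round, all $k$ coresets are sent to a single coordinator machine; the total data received is $k\cdot \Ot(n)=\Ot(\sqrt{mn}+n)$, again within the memory budget. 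The coordinator then computes a maximum matching (respectively, minimum vertex cover) on the union $\core(\Gi{1})\cup\cdots\cup\core(\Gi{k})$ and outputs it.

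Correctness is immediate from Definition~\ref{def:randomized-coreset} and Result~\ref{res:coresets}: because the random partition in Round 1 is exactly the random $k$-partition of $E$ that the coreset guarantee is stated for, the union of coresets contains a $(3/2+\eps)$-approximate matching, respectively a $(2+\eps)$-approximate vertex cover, with high probability, so the coordinator's output inherits this approximation ratio. The round count is two by construction, and the per-machine memory is $\Ot(\sqrt{mn}+n)$ as verified above.

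I do not expect any real obstacle here: given Result~\ref{res:coresets}, this corollary is a mechanical translation through Proposition~\ref{prop:coreset-application}, and the only things to be careful about are (i) that the random partition can actually be implemented in Round 1 in MPC (it can, by each input machine tagging each edge with a uniform index in $[k]$), and (ii) that the coordinator's final computation stays within memory (which it does because the combined coresets have size $\Ot(\sqrt{mn})$). The one genuine caveat, flagged in the corollary's footnote, is that a polynomial-time \emph{exact} vertex cover on the combined coresets is NP-hard; to keep local computation polynomial one replaces the optimal vertex cover step with a $2$-approximation, which degrades the final ratio to $4+\eps$ as noted in Remarks~\ref{rem:computational-time} and~\ref{rem:vc-polytime}, and I would simply defer to those remarks rather than re-prove them here.
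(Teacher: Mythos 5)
Your proposal is correct and follows essentially the same route as the paper: the corollary is obtained by instantiating the generic coreset-to-MPC reduction of Proposition~\ref{prop:coreset-application}-(1) (proved in Appendix~\ref{app:applications}) with $k=\sqrt{m/n}$ and the EDCS-based coresets of Theorems~\ref{thm:matching-edcs-coreset} and~\ref{thm:vc-edcs-coreset}, with the same memory accounting and the same deferral to Remarks~\ref{rem:computational-time} and~\ref{rem:vc-polytime} for the polynomial-time vertex-cover caveat.
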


\noindent It follows from the results of \cite{AssadiKLY16} that sub-quadratic memory is not possible with one MPC round,
so two rounds is {optimal}. Furthermore, our implementation only requires one round if the
input is distributed randomly in the first place; see~\cite{MirrokniZ15} for details on when this assumption applies.

Our algorithms outperform the previous algorithms of~\cite{AssadiK17} for matching and vertex cover in terms of approximation ratio ($3/2$ vs.~$O(1)$ and $2$ vs.~$O(\log{n})$), while memory and round complexity are the same. Our matching algorithm outperforms the $2$-approximate maximum matching
algorithm of Lattanzi~\etal~\cite{LattanziMSV11} in terms of both the approximation ratio ($3/2$ vs. $2$) and round complexity ($2$ vs. $6$) within the same memory. Our result for the matching problem is particularly interesting 
as all other MPC algorithms~\cite{AhnGM12Linear,AhnG15,BehnezhadDETY17} that can achieve a better than two approximation (which is also a natural barrier for matching algorithms across different models) require a large (unspecified) constant
number of rounds. Achieving the optimal 2 rounds is significant in this context, since the round complexity of MPC algorithms determines the dominant cost of the computation (see, e.g.~\cite{LattanziMSV11,BeameKS13}), and hence 
minimizing the number of rounds is the primary goal in this model.

\paragraph{\underline{Third implication: distributed simultaneous communication.}} Maximum matching (and to a lesser degree vertex cover) has been studied previously in the 
simultaneous communication model owing to many applications of this model, including in achieving round-optimal distributed algorithms~\cite{AssadiK17}, proving lower bounds for dynamic graph streams~\cite{Konrad15,AiHLW16,AssadiKLY16,AssadiKL17}, 
and applications to mechanism design~\cite{DobzinskiNO14,AlonNRW15,Dobzinski16}. As an application of Result~\ref{res:coresets}, we obtain the following corollary. 
\begin{corollary}\label{cor:coreset-sim}
	There exist simultaneous communication protocols on \emph{randomly partitioned inputs} that achieve $(3/2+\eps)$-approximation to matching and $(2+\eps)$-approximation to vertex cover with high probability
	(over the randomness of the input partitioning) with only $\Ot(n)$ communication per machine/player for constant $\eps > 0$. 
\end{corollary}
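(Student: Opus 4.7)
The plan is to obtain the corollary as an immediate consequence of Result~\ref{res:coresets} combined with part~(3) of Proposition~\ref{prop:coreset-application}. First, I would invoke Result~\ref{res:coresets} to fix an algorithm $\core_{\MM}$ that outputs a $(3/2+\eps)$-approximate randomized composable coreset of size $\Ot(n)$ for maximum matching, and an algorithm $\core_{\VC}$ that outputs a $(2+\eps)$-approximate randomized composable coreset of size $\Ot(n)$ for minimum vertex cover. These are exactly the objects that the simultaneous-communication application is designed to consume.

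Next, I would describe the protocol itself, which is the standard ``coreset $\Rightarrow$ simultaneous protocol'' reduction underlying Proposition~\ref{prop:coreset-application}(3). There are $k$ players and a coordinator. Since the input is randomly partitioned, the edges given to player $i$ form exactly the $i$-th part $\Ei{i}$ of a random $k$-partition of $E(G)$, so player $i$ sees the subgraph $\Gi{i}(V,\Ei{i})$. Each player locally runs the corresponding coreset algorithm on its subgraph and sends $\core(\Gi{i})$ to the coordinator. By assumption this message consists of $\Ot(n)$ edges, so the per-player communication is $\Ot(n)$ bits up to a $\log n$ factor for edge encodings. Upon receiving all messages, the coordinator takes the union $H := \core(\Gi{1}) \cup \cdots \cup \core(\Gi{k})$ and solves the corresponding problem on $H$: for matching, output a maximum matching of $H$; for vertex cover, output a minimum vertex cover of $H$ together with the vertices incident to edges of $G\setminus H$ that are not covered by $H$. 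The latter point requires care: the coreset only guarantees that some vertex cover of $H$ has size close to $\opt(G)$, so one must argue (as is implicit in the coreset definition for vertex cover) that a vertex cover of $H$ is a valid vertex cover of $G$; this is the only nontrivial bookkeeping in the reduction.

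By the defining property of a randomized composable coreset (Definition~\ref{def:randomized-coreset}) applied to the random partition induced by the input distribution, the maximum matching of $H$ is a $(3/2+\eps)$-approximation to the maximum matching of $G$ with high probability, and likewise the minimum vertex cover of $H$ gives a $(2+\eps)$-approximation to the minimum vertex cover of $G$ with high probability. Combined with the $\Ot(n)$ per-player communication bound established above, this yields both guarantees claimed in the corollary.

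Almost every step here is either a direct invocation of Result~\ref{res:coresets} or a routine unpacking of the simultaneous-communication reduction, so I do not expect a real obstacle; the only mildly subtle point, as noted above, is ensuring that the vertex-cover output produced by the coordinator is genuinely feasible for $G$ and not merely for $H$, which follows from the property that $\core$ returns a subgraph whose optimum vertex cover has size close to that of $G$ while still covering all edges in the relevant approximation sense. If a cleaner black-box citation is preferred, the entire argument can be replaced by a single sentence applying Proposition~\ref{prop:coreset-application}(3) to the coresets of Result~\ref{res:coresets}.
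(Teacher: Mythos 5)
Your overall route is the same as the paper's: Corollary~\ref{cor:coreset-sim} is obtained by plugging the coresets of Result~\ref{res:coresets} (i.e., Theorems~\ref{thm:matching-edcs-coreset} and~\ref{thm:vc-edcs-coreset}) into the generic reduction of Proposition~\ref{prop:coreset-application}(3), and for matching your argument is complete and correct.

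For vertex cover, however, the step you flag as "the only nontrivial bookkeeping" is not actually resolved by what you wrote, and as stated it fails. The coordinator only receives the union $H$ of the players' messages; it never sees $G \setminus H$, so it cannot "output the vertices incident to edges of $G\setminus H$ that are not covered by $H$." Moreover, it is simply false that a vertex cover of $H$ is a vertex cover of $G$: the coreset guarantee in Definition~\ref{def:randomized-coreset} is about the \emph{size} of the optimum in the union of coresets, not about feasibility for $G$. The paper closes this gap by relaxing the coreset definition for vertex cover (see the discussion before Theorem~\ref{thm:vc-edcs-coreset}): each player's message consists of its local EDCS \emph{plus} the set of vertices whose degree in that EDCS exceeds $(1-\Theta(\eps))\cdot\beta/2$, and these vertices are added unconditionally to the final cover. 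By Lemma~\ref{lem:edcs-coreset} the union of the local EDCSs is itself an $\EDCS{G,\beta_C,\beta^-_C}$, so \proptwo guarantees that every edge of $G$ outside $H$ has an endpoint of degree at least $\beta^-_C/2$ in $H$, and Lemma~\ref{lem:edcs-edge-sampled} ensures that such a vertex has high degree in every player's local EDCS and is therefore among the vertices the players declared; feasibility for $G$ follows, and Lemma~\ref{lem:edcs-vc} bounds the size by $(2+\eps)\cdot\VC(G)$. Your proof needs this augmented message format (which still fits in $\Ot(n)$ communication) rather than a post hoc inspection of $G\setminus H$ by the coordinator.
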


\noindent This result improves upon the $O(1)$ and $O(\log{n})$ approximation  of~\cite{AssadiK17} (on randomly partitioned inputs) for matching and vertex cover that were also designed by using randomized coresets. Our protocols 
achieve optimal communication complexity (up to $\polylog{(n)}$ factors)~\cite{AssadiK17}.

\subsection{Second Result: MPC with Low Space Per Machine}\label{sec:result2}

Our second result concerns the MPC model with per-machine memory $O(n)$ or even $O(n/\polylog{(n)})$. This is achieved by 
extending our Result~\ref{res:coresets} from random edge-partitioned subgraphs (as in randomized coresets) to random vertex-partitioned subgraphs (which we explain further below).

\smallskip

\begin{result}\label{res:mpc}
	There exists an MPC algorithm that for any constant $\eps > 0$, with high probability, gives a $(1+\eps)$-approximation to maximum matching and $O(1)$-approximation to 
	minimum vertex cover in $O(\log\log{n})$ MPC rounds using only $O(n /\polylog{(n)})$ memory per 
	machine. 
\end{result}

Given an existing black-box reduction~\cite{LotkerPP15} (see also~\cite{CzumajLMMOS17}), our Result~\ref{res:mpc} immediately implies 
a $(2+\eps)$-approximation algorithm for maximum {weighted} matching in 
the same $O(\log\log(n))$ rounds, though with the memory per machine increased to $O(n\log(n))$.

Prior to~\cite{CzumajLMMOS17}, all MPC
algorithms for matching and vertex cover~\cite{LattanziMSV11,AhnGM12Linear,AhnG15} 
required $\Omega\paren{\frac{\log{n}}{\log\log{n}}}$ rounds to achieve
$O(1)$ approximation when the memory per machine was restricted to
$\Ot(n)$ (which is arguably the most natural choice of parameter,
similar-in-spirit to the semi-streaming
restriction~\cite{FKMSZ05,M14}). 
Recently, Czumaj~\etal~\cite{CzumajLMMOS17} presented an (almost) $2$-approximation algorithm for maximum matching that requires $O(n)$ (even $n/(\log{n})^{O(\log\log{n})}$) memory per machine
and only $O\paren{(\log\log{n})^2}$ MPC rounds. 
Result~\ref{res:mpc} improves upon this result on several fronts: $(i)$ we improve the round complexity of the matching algorithm
to $O(\log\log{n})$, resolving a conjecture of~\cite{CzumajLMMOS17} in the affirmative, $(ii)$ we obtain an $O(1)$ approximation to vertex cover, answering another open question of~\cite{CzumajLMMOS17}, and
$(iii)$ we achieve all these using a considerably simpler algorithm and analysis than~\cite{CzumajLMMOS17}.  

\paragraph{Comparison to results published after the appearance of our paper.}
After an earlier version of our paper was shared on arXiv \cite{AssadiBBMSarxiv2017}, Ghaffari et al.~\cite{GhaffariGMR18} presented a result very similar to 
our Result~\ref{res:mpc}: their bounds are exactly the same for matching, while for vertex cover they achieve a better
approximation in  the same asymptotic number of rounds: $(2+\eps)$-approximation vs. our $O(1)$ approximation. Techniques-wise, our approaches are entirely different: 
the algorithms in ~\cite{GhaffariGMR18} are based on an earlier round-compression technique of~\cite{CzumajLMMOS17},
and require an intricate local algorithm
and analysis to ensure consistency between machines;
see Section \ref{sec:techniques} below for more details.

Note that only our Result~\ref{res:mpc} is shared with the later paper of Ghaffari et al. \cite{GhaffariGMR18}: Result~\ref{res:coresets} appears only in our paper, and is entirely specific to the particular techniques that we use.

\subsection{Our Techniques}\label{sec:techniques}

Both of our results are based on a novel
application of \emph{edge degree constrained subgraphs (EDCS)} that
were previously introduced by Bernstein and Stein~\cite{BernsteinS15}
for maintaining large matchings in dynamic graphs. 
Previous work
on EDCS~\cite{BernsteinS15,BernsteinS16} focused on how large a matching
an EDCS contains and how it can be maintained in a dynamic
graph. For the two results of this paper, we instead focus on the structural properties of the EDCS,
and prove several new facts in this regard.

For Result~\ref{res:coresets}, we identify the EDCS as a sparse
certificate for large matchings and small vertex covers
which are quite robust to sampling and composition: an ideal
combination for a randomized coreset.
For Result~\ref{res:mpc}, we use the following recursive procedure, which
crucially relies upon on the robustness properties of the EDCS proved in 
Result~\ref{res:coresets}: we repeatedly compute an EDCS of the
underlying graph in a distributed fashion, redistribute it again 
amongst multiple machines, and recursively solve the problem on
this EDCS to compute an $O(1)$-approximation to matching and vertex
cover. We therefore limit the memory on each machine to only
$O(n)$ (even $O(n/\polylog{(n)})$) at the cost of increasing the number of rounds from $O(1)$ to
$O(\log\log{n})$.  Additional ideas are needed to ensure that the
approximation ratio of the algorithm does not increase beyond a fixed
constant as a result of repeatedly computing an EDCS of the current
graph in $O(\log\log{n})$ iterations.


\paragraph{Comparison of techniques.} 
Result~\ref{res:coresets} uses the definition of EDCS from \cite{BernsteinS15,BernsteinS16} but uses it in an entirely different setting, and hence we prove and use novel properties of EDCS in this work.

Result~\ref{res:mpc} relies on the high-level technique of
vertex sampling from Czumaj~\etal~\cite{CzumajLMMOS17}: instead
of partitioning the edges of the graph, each machine receives a
random sample of the vertices, and works on the resulting
induced subgraph.  Other than this starting point, our approach
proceeds along entirely different lines from~\cite{CzumajLMMOS17}, in
terms of both the local algorithm computed on each subgraph and in the
analysis.  The main approach in~\cite{CzumajLMMOS17} is \emph{round
  compression}, which corresponds to compressing multiple rounds of a
particular distributed algorithm into smaller number of MPC rounds by
maintaining a consistent state across the local algorithms computed on
each subgraph (using a highly non-trivial local algorithm and
analysis). Our results, on the other hand, do \emph{not} correspond to
a round compression approach at all and we do not require any
consistency in the local algorithm on each machine. Instead, we rely
on structural properties of the EDCS that we prove in this
paper, \emph{independent of the algorithms that compute these
  subgraphs}. This allows us to bypass many of the technical
difficulties arising in maintaining a consistent state across
different machines which in turn results in improved bounds and a
considerably simpler algorithm and analysis.

\subsection{Related Work}\label{sec:related}
Maximum matching and minimum vertex cover are among the most studied problems in the context of massive graphs including in MPC model and MapReduce-style
computation~\cite{AhnGM12Linear,LattanziMSV11,AhnG15,AssadiK17,CzumajLMMOS17,BehnezhadDETY17,HarveyLL18,GhaffariGMR18}, streaming 
algorithms~{\cite{McGregor05,FKMSZ05,EKS09,EpsteinLMS11,AhnGM12Linear,GoelKK12,KonradMM12,AGM12,AG13,GO13,Kapralov13,KapralovKS14,CS14,ChitnisCHM15,M14,AhnG15,%
EsfandiariHLMO15,Konrad15,AssadiKLY16,ChitnisCEHMMV16,McGregorV16,EsfandiariHM16,AssadiKL17,PazS17}, simultaneous communication model and similar distributed 
models~\cite{GO13,DobzinskiNO14,HuangRVZ15,AlonNRW15,AssadiKLY16,AssadiKL17,AssadiK17}, dynamic graphs~\cite{NeimanS13,Solomon16,BaswanaGS15,OnakR10,BernsteinS15,BernsteinS16}, and 
sub-linear time algorithms~\cite{ParnasR07,HassidimKNO09,NguyenO08,OnakRRR12,YoshidaYI12}. 
Beside the results mentioned already, most relevant to our work are 
the $\polylog{(n)}$-space $\polylog{(n)}$-approximation algorithm of~\cite{KapralovKS14} for estimating the \emph{size} of a maximum matching in random stream, and the $(3/2)$-approximation
communication protocol of~\cite{GoelKK12} when the input is (adversarially) partitioned between two parties and the communication is from one party to the other one (as opposed to simultaneous which we studied). 
However, the techniques in these results and ours are completely disjoint. 

Coresets, composable coresets, and randomized composable coresets are respectively introduced in~\cite{AgarwalHV04},~\cite{IndykMMM14}, and~\cite{MirrokniZ15}. Composable coresets 
have been studied previously in nearest neighbor search~\cite{AbbarAIMV13}, diversity maximization~\cite{IndykMMM14,ZadehGMZ17}, clustering~\cite{BalcanEL13,BateniBLM14}, and
submodular maximization~\cite{IndykMMM14,MirrokniZ15,BadanidiyuruMKK14,BarbosaENW15,BarbosaENW16}. Moreover, while not particularly termed a composable coreset, the ``merge and reduce'' technique
in graph streaming literature (see~\cite{M14}, Section~2.2) is identical to composable coresets. 

\section{Preliminaries}\label{sec:prelim}

\paragraph{Notation.} For a graph $G(V,E)$, we use $\MM(G)$ to denote the maximum matching size in $G$ and $\VC(G)$ to denote the minimum vertex cover size. 
For any subset of vertices $V' \subseteq V$ and any subset of edges $E' \subseteq E$, we use $V'(E')$ to denote the set of vertices in $V'$ that are incident on edges of $E'$ and 
$E'(V')$ to denote the set of edges in $E'$ that are incident on vertices of $V'$. For any vertex $v \in V$, we use $\deg{G}{v}$ to denote the degree of $v$ in the graph $G$. 

We use capital letters to denote random variables.  Let $\set{X_i}_{i=1}^{n}$ and $\set{Y_i}_{i=1}^{n}$ be a sequence of random variables on a common probability space such that $\Ex\bracket{X_i \mid Y_1,\ldots,Y_{i-1}} = X_{i-1}$ for all $i$. 
The sequence $\set{X_i}$ is referred to as a {martingale} with respect to $\set{Y_i}$. A summary of concentration bounds we use in this paper appears in Appendix~\ref{app:concentrations}.

\paragraph{Sampled Subgraphs.} Throughout the paper, we work with two different notion of sampling a graph $G(V,E)$. For a parameter $p \in (0,1)$, 
\begin{itemize}
	\item A graph $\GEp(V,E_p)$ is an \emph{edge sampled subgraph} of $G$ iff the vertex set of $\GEp$ and $G$ are the same and every edge in $E$ is picked independently and with probability $p$ in $E_p$. 
	\item A graph $\GVp(V_p,E_p)$ is a \emph{vertex sampled (induced) subgraph} of $G$ iff every vertex in $V$ is sampled in $V_p$ independently and with probability $p$ and $\GVp$ is the induced subgraph of $G$ on $V_p$
\end{itemize}

\subsection{The Massively Parallel Computation (MPC) Model}\label{sec:mpc} 

We adopt the most stringent model of modern parallel computation among~\cite{KarloffSV10,GoodrichSZ11,AndoniNOY14,BeameKS13}, the so-called \emph{Massively Parallel Computation (MPC)} model of~\cite{BeameKS13}. 
Let $G(V,E)$ with $n:= \card{V}$ and $m := \card{E}$ be the input graph. 
In this model, there are $p$ machines, each with a memory of size $s$ and one typically requires that both $p,s = m^{1-\Omega(1)}$ i.e., polynomially smaller than the input size~\cite{KarloffSV10,AndoniNOY14}. 
Computation proceeds in synchronous rounds: in each round, each machine performs some local computation
and at the end of the round machines exchange messages to guide the computation for the next round. All messages sent and received by each machine in each round have to fit into the local memory of the machine. This in particular means that
the length of the messages on each machine is bounded by $s$ in each round. At the end, the machines collectively output the solution.  

\subsection{Basic Graph Theory Facts} \label{sec:graph-facts}

\begin{fact}\label{fact:vc-matching}
	For any graph $G(V,E)$, $\MM(G) \leq \VC(G) \leq 2 \cdot \MM(G)$. 
\end{fact}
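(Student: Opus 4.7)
The plan is to handle the two inequalities separately, both via the standard arguments from elementary graph theory.

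For the lower bound $\MM(G) \leq \VC(G)$, I would start by fixing a maximum matching $M$ of $G$, so that $|M| = \MM(G)$, and any minimum vertex cover $C$ of $G$, so that $|C| = \VC(G)$. The key observation is that the edges of $M$ are pairwise vertex-disjoint (by definition of matching), and every edge of $M$ must have at least one endpoint in $C$ (by definition of vertex cover). Mapping each edge $e \in M$ to one of its endpoints in $C$ therefore yields an injection from $M$ into $C$, giving $\MM(G) = |M| \leq |C| = \VC(G)$.

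For the upper bound $\VC(G) \leq 2 \cdot \MM(G)$, I would again fix a maximum matching $M$ and let $S$ be the set of all $2|M|$ endpoints of edges in $M$. The plan is to argue that $S$ is a vertex cover of $G$, which immediately gives $\VC(G) \leq |S| = 2\cdot\MM(G)$. For this, I would suppose toward contradiction that some edge $(u,v) \in E$ has neither endpoint in $S$; then $M \cup \{(u,v)\}$ would still be a matching (since $u,v$ are disjoint from all matched vertices), contradicting the maximality of $M$. Hence every edge of $G$ has at least one endpoint in $S$, so $S$ is indeed a vertex cover.

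Neither direction presents a real obstacle: both arguments are entirely standard and only use the definitions of matching and vertex cover together with the (stronger-than-needed) maximality of $M$. The second half does not actually require $M$ to be of maximum size, only maximal, but it is convenient to use the same $M$ throughout. Thus the proof reduces to writing out these two short observations.
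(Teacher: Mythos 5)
Your proof is correct and is exactly the standard textbook argument: an injection from the disjoint edges of a maximum matching into any vertex cover gives the lower bound, and the endpoints of a maximal matching form a vertex cover of size $2\cdot\MM(G)$ for the upper bound. The paper states this as a known fact without supplying a proof, so there is nothing to compare against; your write-up (including the correct observation that only maximality, not maximum size, is needed for the second inequality) fills the gap appropriately.
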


The  following propositions are well-known. 
\begin{proposition}\label{prop:vc-matching-alpha}
	Suppose $M$ and $V'$ are respectively, a matching and a vertex cover of a graph $G$ such that $\alpha \cdot \card{M} \geq \card{V'}$; then, both $M$ and $V'$ are $\alpha$-approximation 
	to their respective problems. 
\end{proposition}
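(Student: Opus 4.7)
The plan is to chain together two trivial inequalities—the duality-style bound $\MM(G) \leq \VC(G)$ from Fact~\ref{fact:vc-matching} and the assumed inequality $\alpha\cdot|M| \geq |V'|$—together with the fact that any feasible matching lower-bounds $\MM(G)$ and any feasible vertex cover upper-bounds $\VC(G)$. No structural graph arguments are needed; this is a purely quantitative consequence of weak LP duality between matching and vertex cover.

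Concretely, I would proceed in two short steps. First, for the matching claim, I would observe that
\[
\MM(G) \;\leq\; \VC(G) \;\leq\; |V'| \;\leq\; \alpha\cdot|M|,
\]
where the first inequality uses Fact~\ref{fact:vc-matching}, the second uses that $V'$ is a feasible (hence at least minimum-size) vertex cover, and the third is the hypothesis. Rearranging gives $|M| \geq \MM(G)/\alpha$, which is exactly the statement that $M$ is an $\alpha$-approximation to maximum matching.

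Second, for the vertex cover claim, I would run essentially the same chain in the other direction:
\[
|V'| \;\leq\; \alpha\cdot|M| \;\leq\; \alpha\cdot\MM(G) \;\leq\; \alpha\cdot\VC(G),
\]
where the second inequality uses that $|M| \leq \MM(G)$ since $M$ is a feasible matching, and the third uses Fact~\ref{fact:vc-matching}. This gives $|V'| \leq \alpha\cdot\VC(G)$, so $V'$ is an $\alpha$-approximation to minimum vertex cover.

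There is essentially no obstacle here—the only subtlety worth flagging is that the argument crucially uses the side $\MM(G)\leq\VC(G)$ of Fact~\ref{fact:vc-matching} and does \emph{not} need the factor-of-two side; consequently the bound passes through to \emph{both} problems with the \emph{same} ratio $\alpha$ rather than with an extra factor of $2$. I would keep the write-up to a two-line displayed derivation for each of the two approximation claims.
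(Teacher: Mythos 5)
Your proof is correct and is essentially the same argument as the paper's, which packs the two chains you wrote separately into a single chain $\VC(G) \geq \MM(G) \geq |M| \geq \frac{1}{\alpha}|V'| \geq \frac{1}{\alpha}\VC(G) \geq \frac{1}{\alpha}\MM(G)$ and reads off both conclusions. Same ingredients (Fact~\ref{fact:vc-matching}, feasibility of $M$ and $V'$, and the hypothesis), same weak-duality logic, just a slightly different presentation.
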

\begin{proof}
	$\VC(G) \Geq{Fact~\ref{fact:vc-matching}} \MM(G) \geq \card{M} \geq \frac{1}{\alpha} \cdot \card{V'} \geq \frac{1}{\alpha} \cdot \VC(G) \Geq{Fact~\ref{fact:vc-matching}} \frac{1}{\alpha} \cdot \MM(G)$. 
\end{proof}

\begin{proposition}\label{prop:matching-high-degree}
	Suppose $G(V,E)$ is a graph with maximum degree $\Delta$ and $\vh$ is the set of all vertices with degree at least $\gamma \cdot \Delta$ in $G$ for $\gamma \in (0,1)$. Then,  $\MM(G) \geq \frac{\gamma \cdot \Delta}{2 \cdot (\Delta+1)} \cdot \card{\vh}$. 
\end{proposition}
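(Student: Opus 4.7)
The plan is to reduce the claim to a simple edge-count bound together with Vizing's theorem, which will supply the exact denominator $\Delta+1$ appearing in the statement.

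First I would restrict attention to the subgraph $H$ of $G$ consisting of all edges incident to at least one vertex of $\vh$. Since every vertex of $\vh$ retains all of its incident edges in $H$, we have $\deg{H}{v} = \deg{G}{v} \geq \gamma\Delta$ for each $v \in \vh$. Summing and using the fact that each edge of $H$ contributes at most $2$ to the sum $\sum_{v \in \vh} \deg{H}{v}$ (it contributes $2$ if both endpoints are in $\vh$, and $1$ otherwise, but in either case all edges of $H$ are counted),
\[
\card{E(H)} \;\geq\; \tfrac{1}{2}\sum_{v \in \vh} \deg{G}{v} \;\geq\; \frac{\gamma\Delta\cdot\card{\vh}}{2}.
\]

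The second step is to convert this edge count into a matching lower bound on $H$. Since $H \subseteq G$, its maximum degree is still at most $\Delta$, so by Vizing's theorem its edges can be properly colored using at most $\Delta+1$ colors, each color class being a matching in $H$. The largest color class has size at least $\card{E(H)}/(\Delta+1)$, and hence
\[
\MM(G) \;\geq\; \MM(H) \;\geq\; \frac{\card{E(H)}}{\Delta+1} \;\geq\; \frac{\gamma\Delta\cdot\card{\vh}}{2(\Delta+1)},
\]
which is the claimed bound.

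There is no real obstacle here; the only nonroutine point is choosing the right tool in the second step. A naive ``maximal matching covers every edge'' argument only yields a denominator of $2\Delta-1$, which is strictly weaker than the claimed $\Delta+1$ (and would not be tight on examples like $K_{\Delta+1}$, where the stated bound is in fact attained). Invoking Vizing's theorem, or equivalently the bound $\chi'(H)\leq\Delta+1$ on the chromatic index, is what yields the sharp constant in the denominator.
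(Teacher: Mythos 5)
Your proof is correct and uses essentially the same argument as the paper: count the edges incident to $\vh$ to get $\card{E} \geq \tfrac{1}{2}\gamma\Delta\card{\vh}$, then apply Vizing's theorem to extract a large matching from the edge count. The only cosmetic difference is that you restrict to the subgraph $H$ of edges touching $\vh$ before invoking Vizing, whereas the paper applies Vizing directly to all of $G$; both routes give the same bound.
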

\begin{proof}
	By Vizing's theorem~\cite{Vizing64}, $G$ can be edge colored by at most $\Delta + 1$ colors. As each color class forms a matching, this means that there exists a matching $M$ in $G$ with $\card{M} \geq \frac{\card{E}}{\Delta + 1}$. Moreover, 
	we have $\card{E} \geq \frac{1}{2} \cdot \card{\vh} \cdot \gamma \cdot \Delta$, finalizing the proof. 
\end{proof}

\subsection{Edge Degree Constrained Subgraph (EDCS)} \label{sec:edcs-def}

We introduce edge degree constrained subgraphs (EDCS) in this section and present several of their properties which are proven in previous work. {We emphasize that all other properties of EDCS proven in the subsequent sections 
are new to this paper.} 

An EDCS is defined formally as follows. 

\begin{definition}[\!\!\cite{BernsteinS15}]\label{def:edcs}
	For any graph $G(V,E)$ and integers $\beta \geq \beta^- \geq 0$, an \emph{edge degree constraint subgraph (EDCS)} $(G,\beta,\beta^-)$ is a subgraph $H:=(V,E_H)$ of $G$ with the following two properties: 
	\begin{enumerate}[(P1)]
		\item\label{long-property1} For any edge $(u,v) \in E_H$: $\deg{H}{u} + \deg{H}{v} \leq \beta$. 
		\item\label{long-property2} For any edge $(u,v) \in E \setminus E_H$: $\deg{H}{u} + \deg{H}{v} \geq \beta^-$.  
	\end{enumerate}
	We sometimes abuse the notation and use $H$ and $E_H$ interchangeably. 
\end{definition}

In the remainder of the paper, we use the terms ``\propone'' and ``\proptwo'' of EDCS to refer to the first and second items in Definition~\ref{def:edcs} above.

One can prove the existence of an $\EDCS{G,\beta,\beta^-}$ for any graph $G$ and parameters $\beta^- < \beta$ using the results in~\cite{BernsteinS16} (Theorem 3.2) which in fact shows how to maintain
an EDCS efficiently in the dynamic graph setting. As we are only interested in existence of EDCS in this paper, we provide a simpler and self-contained proof of this fact in Appendix~\ref{app:edcs-exists}, 
which also implies a simple polynomial time algorithm for computing any EDCS of a given graph $G$.  

\begin{lemma}\label{lem:edcs-exists}
	Any graph $G$ contains an $\EDCS{G,\beta,\beta^-}$ for any parameters $\beta > \beta^-$. 
\end{lemma}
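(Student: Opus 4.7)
The plan is to prove existence via a simple potential-function argument: I will exhibit an integer-valued function $\Phi(H)$ on subgraphs $H \subseteq G$ whose maximizer over all subgraphs is necessarily an EDCS, thereby establishing existence. A natural byproduct is a polynomial-time local-search procedure to construct one.

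Concretely, fix any integer $C$ with $2\beta^- + 1 \leq C \leq 2\beta - 1$; such a $C$ exists because $\beta$ and $\beta^-$ are integers with $\beta > \beta^-$. Define
\[
\Phi(H) \;=\; C \cdot |E_H| \;-\; \sum_{v \in V} \deg{H}{v}^2.
\]
The first term rewards including edges, while the quadratic penalty on degrees discourages concentrating many edges at the same vertex. Since $\Phi$ is integer-valued and bounded above by $C \cdot \binom{n}{2}$, it attains a maximum over the finite family of subgraphs of $G$. I claim any maximizer $H^\star$ of $\Phi$ is an $\EDCS{G,\beta,\beta^-}$.

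To verify this, I would analyze the two possible types of violation as single-edge local edits. If $H^\star$ violates \propone\ at some $e = (u,v) \in E_{H^\star}$ with $\deg{H^\star}{u} + \deg{H^\star}{v} \geq \beta + 1$, then deleting $e$ changes $\sum_v \deg{H^\star}{v}^2$ by $-2(\deg{H^\star}{u} + \deg{H^\star}{v}) + 2$ and $|E_{H^\star}|$ by $-1$, so
\[
\Phi(H^\star \setminus \{e\}) - \Phi(H^\star) \;=\; -C + 2\bigl(\deg{H^\star}{u} + \deg{H^\star}{v}\bigr) - 2 \;\geq\; 2\beta - C \;\geq\; 1,
\]
contradicting maximality. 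Symmetrically, if $H^\star$ violates \proptwo\ at $e = (u,v) \in E \setminus E_{H^\star}$ with $\deg{H^\star}{u} + \deg{H^\star}{v} \leq \beta^- - 1$, then adding $e$ gives
\[
\Phi(H^\star \cup \{e\}) - \Phi(H^\star) \;=\; C - 2\bigl(\deg{H^\star}{u} + \deg{H^\star}{v}\bigr) - 2 \;\geq\; C - 2\beta^- \;\geq\; 1,
\]
again contradicting maximality. Hence both properties must hold at $H^\star$, proving the lemma.

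I do not anticipate a real obstacle; the whole argument hinges on choosing the coefficient $C$ so that both the gain from removing an overloaded edge and the gain from adding an underloaded edge are simultaneously positive, and the range $2\beta^- + 1 \leq C \leq 2\beta - 1$—available exactly when $\beta > \beta^-$—accomplishes this. As a bonus, starting from any subgraph (say $H = \emptyset$) and iteratively repairing a single violation at a time strictly increases $\Phi$ by at least $1$ per step while $\Phi$ remains bounded by $\poly(n,\beta)$ in magnitude, so the procedure terminates in polynomially many steps, yielding the polynomial-time algorithm promised in the statement.
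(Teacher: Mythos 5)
Your argument is essentially the paper's own proof: using the identities $\sum_{u}\deg{H}{u}=2|E_H|$ and $\sum_{(u,v)\in E_H}(\deg{H}{u}+\deg{H}{v})=\sum_v \deg{H}{v}^2$, the paper's potential $\Phi=(\beta-\tfrac12)\sum_u\deg{H}{u}-\sum_{(u,v)\in E_H}(\deg{H}{u}+\deg{H}{v})$ is exactly your $\Phi(H)=C\,|E_H|-\sum_v\deg{H}{v}^2$ with $C=2\beta-1$, and both proofs verify that repairing a single violated edge strictly increases $\Phi$. Your phrasing as ``any maximizer works'' rather than ``local search terminates'' is a cosmetic difference (and you note the iterative version yourself), and the free parameter $C\in[2\beta^-+1,2\beta-1]$ is a mild generalization but not a different idea.
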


It was shown in~\cite{BernsteinS15} (bipartite graphs) and~\cite{BernsteinS16} (general graphs) that 
for appropriate parameters and EDCS always 
contains an (almost) $3/2$-approximate maximum matching of $G$. 
Formally: 

\begin{lemma}[\!\cite{BernsteinS15,BernsteinS16}]\label{lem:edcs-matching}
	Let $G(V,E)$ be any graph and $\eps < 1/2$ be a parameter. For parameters $\lambda \geq \frac{\eps}{100}$, $\beta \geq 32\lambda^{-3}$, and $\beta^- \geq (1-\lambda) \cdot \beta$, in any subgraph $H:= \EDCS{G,\beta,\beta^-}$, 
	$\MM(G) \leq \paren{\frac{3}{2} + \eps} \cdot \MM(H)$. 
\end{lemma}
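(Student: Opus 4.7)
I would approach this by bounding the gap between a maximum matching $M$ of $H$ and a maximum matching $M^*$ of $G$ through the structure of the symmetric difference $M \triangle M^*$. Standard matching theory says that $M \triangle M^*$ decomposes into alternating paths and even alternating cycles, and the deficit $|M^*|-|M|$ equals the number of $M$-augmenting paths (odd-length alternating paths starting and ending with $M^*$-edges). If every augmenting path had length at least $5$, each would already contribute at least two $M$-edges, yielding $|M^*| \leq \tfrac{4}{3}|M|$ automatically; so the real work is to bound the number of augmenting paths of lengths $1$ and $3$.

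For length-$1$ augmenting paths---a single edge $(u,v)\in M^*$ with both endpoints unmatched in $M$---I would first observe that such an edge cannot lie in $E_H$, or else $M\cup\{(u,v)\}$ would contradict maximality of $M$ in $H$. Property~(P\ref{long-property2}) then forces $\deg{H}{u}+\deg{H}{v}\geq \beta^-$, while the same maximality argument forces every $H$-neighbor of $u$ or $v$ to be $M$-matched. This sets up a charging in which each length-$1$ augmenting path is amortized against $\Omega(\beta^-)$ edges of $M$ touching $\neighbor{H}{u}\cup\neighbor{H}{v}$, and Property~(P\ref{long-property1}) caps the total $H$-degree at the endpoints of any $M$-edge by $\beta$, so the reverse charge from $M$-edges back to length-$1$ paths is also controlled.

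For length-$3$ augmenting paths $u\text{-}v\text{-}w\text{-}x$ with $(v,w)\in M$ and $(u,v),(w,x)\in M^*$, a similar case analysis applies: at least one of the two $M^*$-edges must lie outside $E_H$ (otherwise, swapping $(v,w)$ for $(u,v),(w,x)$ produces an $M$-augmenting path entirely inside $H$, again contradicting maximality). Property~(P\ref{long-property2}) then lower-bounds some $H$-degrees along the path, and since $u,x$ are $M$-unmatched their $H$-neighbors must be $M$-matched, so each such path gets amortized against many $M$-edges in the $H$-neighborhoods of its endpoints while (P\ref{long-property1}) again caps the multiplicity of charge per $M$-edge.

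The hard part, I expect, will be making this charging clean enough that length-$1$ paths, length-$3$ paths, and the residual contributions coming from longer alternating paths and from cycles can all be summed without double-counting. A single $M$-edge may be charged simultaneously by a length-$3$ path running through it and by length-$1$ paths whose endpoints sit in its $H$-neighborhood, so these two streams of charge must be balanced against each other. The slack $\beta^-\geq(1-\lambda)\beta$ is precisely what provides room to absorb these overlaps in the final summation, and the reason one needs $\beta\geq 32\lambda^{-3}$ rather than merely $\beta\gg \lambda^{-1}$ is that the slack has to dominate lower-order terms arising from vertices that participate in both kinds of paths at once. Verifying the exact constants is the tedious part, but the $(3/2+\eps)$ bound should then follow from summing the inequalities produced by this double charging.
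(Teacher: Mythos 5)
First, note that the paper does not actually prove this lemma; it is imported wholesale from \cite{BernsteinS15,BernsteinS16}, so the comparison below is against those cited proofs. Your reduction to short augmenting paths is a sensible opening (though the arithmetic slips: augmenting paths of length at least $5$ give $\MM(G)\le \tfrac{3}{2}\MM(H)$, not $\tfrac{4}{3}$, since a length-$5$ path has $2$ edges of $M$ and $3$ of $M^*$), but the charging scheme you build on top of it cannot work, for a concrete quantitative reason. Each length-$1$ or length-$3$ augmenting path generates, via \proptwo, total charge at most $\deg{H}{u}+\deg{H}{v}\ge \beta^-$ emanating from the endpoints of its missing $M^*$-edge, and this charge lands on $M$-matched vertices; but \propone only caps the charge absorbed by an $M$-edge $(w,w')$ at $\deg{H}{w}+\deg{H}{w'}\le\beta$. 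So the count of short paths is bounded only by $\card{M}\cdot\beta/\beta^-\approx\card{M}$, whereas you would need it to be $O(\eps)\cdot\card{M}$; the amortization merely re-derives the trivial factor-$2$ bound of a maximal matching. The situation for length-$3$ paths $u\text{-}v\text{-}w\text{-}x$ is even worse: \proptwo applied to the missing edge $(u,v)$ allows all of the degree to sit on the \emph{matched} vertex $v$, with $\deg{H}{u}=0$, so the unmatched endpoints need not have any $H$-neighbors to charge, and the charge that is generated can be absorbed entirely by the path's own $M$-edge. No one-step local charging of this form can separate $3/2$ from $2$.

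What actually makes the lemma true is the cascading structure your example above hints at: if an unmatched (or missing-edge) endpoint has degree near $\beta$, then by \propone its neighbors have degree near $\lambda\beta$, so by \proptwo \emph{their} missing edges force high degree two steps away, and so on. The cited proofs capture this globally rather than by amortization: for bipartite graphs \cite{BernsteinS15} builds a large \emph{fractional} matching in $H$ supported on the neighborhoods of the high-degree endpoints of $M^*\setminus H$ (equivalently, runs a Hall-deficiency argument), and integrality of the bipartite fractional matching polytope finishes the job; for general graphs \cite{BernsteinS16} reduces to the bipartite case by a random bipartition of the vertices, showing the EDCS properties survive the sampling. That last step is why the hypothesis demands $\beta\ge 32\lambda^{-3}$ — a concentration requirement with $\beta$ polynomial in $1/\lambda$ — which a purely combinatorial charging argument would never need; the form of that hypothesis is itself a signal that the intended proof is not the one you are proposing.
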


Lemma~\ref{lem:edcs-matching} implies that an EDCS of a graph $G(V,E)$ preserves the maximum matching of $G$ approximately. We also show a similar result for vertex cover. The basic idea is that in addition to computing a vertex cover for the subgraph $H$(to cover all the edges in $H$), we also add to the vertex cover all vertices that
have degree at least $\geq \beta^-/2$ in $H$, which by \proptwo of an EDCS covers all edges
in $G \setminus H$. 

\begin{lemma}\label{lem:edcs-vc}
	Let $G(V,E)$ be any graph, $\eps < 1/2$ be a parameter, and $H:= \EDCS{G,\beta,\beta^-}$ for parameters $\beta \geq \frac{4}{\eps}$ and $\beta^- \geq \beta \cdot \paren{1-\eps/4}$. 
	Suppose $\vh$ is the set of vertices $v \in V$ with $\deg{H}{v} \geq \beta^-/2$ and $\vvc$ is a minimum vertex cover of $H$; then $\vh \cup \vvc$ is a vertex cover of $G$ with size at most $(2+\eps) \cdot \VC(H)$ (note that $\VC(H) \leq \VC(G)$). 
\end{lemma}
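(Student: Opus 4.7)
The plan is to prove the two claims separately: first that $\vh \cup \vvc$ is indeed a vertex cover of $G$, and then that its size is at most $(2+\eps)\cdot\VC(H)$. The first part will be a direct consequence of \proptwo of the EDCS; the second part is the main technical content and requires a counting argument exploiting the tight degree constraints imposed jointly by \propone and \proptwo.

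For the vertex cover property, $\vvc$ already covers every edge in $E_H$. For any edge $(u,v) \in E \setminus E_H$, \proptwo gives $\deg{H}{u} + \deg{H}{v} \geq \beta^-$, so at least one of $u,v$ has $H$-degree $\geq \beta^-/2$ and therefore lies in $\vh$. Hence $\vh \cup \vvc$ covers all of $E$.

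For the size bound, I would write $|\vh \cup \vvc| = |\vvc| + |\vh \setminus \vvc|$ and concentrate on bounding $|\vh \setminus \vvc|$, aiming for $(1+\eps)|\vvc|$. The key observation is that any $v \in \vh \setminus \vvc$ has all of its $H$-neighbors inside $\vvc$ (since $V \setminus \vvc$ is an independent set in $H$), and by \propone each such neighbor $u$ satisfies $\deg{H}{u} \leq \beta - \deg{H}{v} \leq \beta - \beta^-/2$. Double-counting the edges of $H$ with one endpoint in $\vh \setminus \vvc$ and the other in $\vvc$ then gives the two-sided estimate
\[
    \frac{\beta^-}{2} \cdot |\vh \setminus \vvc|
    \;\leq\; \Bigl|E_H\bigl(\vh \setminus \vvc,\, \vvc\bigr)\Bigr|
    \;\leq\; \Paren{\beta - \frac{\beta^-}{2}} \cdot |\vvc|,
\]
so that $|\vh \setminus \vvc| \leq \frac{2\beta - \beta^-}{\beta^-}\, |\vvc|$.

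The main (very mild) obstacle is to verify that the ratio $\frac{2\beta - \beta^-}{\beta^-}$ is indeed at most $1+\eps$ under the stated hypotheses $\beta^- \geq \beta(1-\eps/4)$ and $\eps < 1/2$. Substituting gives $\frac{2\beta - \beta^-}{\beta^-} \leq \frac{1+\eps/4}{1-\eps/4}$, and cross-multiplying reduces the desired inequality $\frac{1+\eps/4}{1-\eps/4} \leq 1+\eps$ to $\eps^2/4 \leq \eps/2$, which holds for $\eps \leq 2$ and in particular for $\eps < 1/2$. Combining this with $|\vvc| = \VC(H)$ yields $|\vh \cup \vvc| \leq (2+\eps)\VC(H)$, and the side remark $\VC(H) \leq \VC(G)$ follows trivially since $H \subseteq G$.
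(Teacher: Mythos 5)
Your proof is correct and follows essentially the same route as the paper's: both split $\vh \cup \vvc$ into $\vvc$ and $\vh \setminus \vvc$, observe that all $H$-edges out of $\vh \setminus \vvc$ land in $\vvc$, and double-count those edges using the lower bound $\geq \beta^-/2$ on the $\vh$-side degree and the \propone-derived upper bound $\leq \beta - \beta^-/2$ on the neighbor-side degree. The only cosmetic difference is that you carry the explicit ratio $\frac{2\beta - \beta^-}{\beta^-}$ through the algebra while the paper folds the inequality $\beta - \beta^-/2 \leq (1+\eps)\beta^-/2$ in directly; the two verifications are equivalent.
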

\begin{proof}
	We first argue that $\vh \cup \vvc$ is indeed a feasible vertex cover of $G$. To see this, notice that any edge $e \in H$ is covered by $\vvc$, and moreover by $\proptwo$ of EDCS, any edge $e \in E \setminus H$ has at least one endpoint with 
	degree at least $\beta^-/2$ in $H$ and hence is covered by $\vh$. In the following, we bound the size of $\vh \setminus \vvc$ by $(1+\eps)\cdot\card{\vvc}$, which finalizes the proof as clearly ${\card{\vvc}} = {\VC(H)}$. 
	
	Define $S := \vh \setminus \vvc$ and let $N(S)$ be the set of all neighbors of $S$ in the EDCS $H$. Since $S$ is not part of the vertex cover $\vvc$ of $H$, we should have $N(S) \subseteq \vvc$ as otherwise some edges between $S$ and $N(S)$
	would not be covered by the vertex cover $\vvc$. Now, since any vertex in $S$ has degree at least $\beta^-/2$, we should have that degree of any vertex in $N(S)$ is at most $\beta-\beta^-/2$ in order to satisfy $\propone$ of EDCS $H$. 
	Let $E(S)$ denote the set of edges incident on $S$ in $H$. As all vertices in $S$ belong to $\vh$, we have that $\card{E(S)} \geq \card{S} \cdot \beta^-/2$. On the other hand, 
	as all edges incident on $S$ are going into $N(S)$ by definition, and since degree of vertices in $N(S)$ are bounded by $\beta-\beta^-/2$, we have $\card{E(S)} \leq \card{N(S)} \cdot (\beta-\beta^-/2)$. As such, 
	\begin{align*}
		\card{S} \cdot \beta^-/2 \leq \card{N(S)} \cdot \paren{\beta-\beta^-/2} \leq \card{\vvc} \cdot (1+\eps) \cdot \beta^-/2,
	\end{align*}
	implying that $\card{S} \leq (1+\eps) \cdot \card{\vvc}$, finalizing the proof. 
\end{proof}

\newcommand{\Mstar}{\ensuremath{M^{\star}}}

\newcommand{\Gm}{\ensuremath{G^{-}}}
\newcommand{\Em}{\ensuremath{E^{-}}}

\section{Warmup: A $3$-Approximation Coreset for Matching}\label{sec:maxmatching}

A natural randomized coreset for the matching problem was previously proposed by~\cite{AssadiK17}: simply compute
a {maximum matching} of each graph $\Gi{i}$. We refer to this randomized coreset as the \MaxMatching coreset. 
It was shown in~\cite{AssadiK17} that \MaxMatching is an $O(1)$-approximation randomized coreset for the 
matching problem (the hidden constant in the O-notation was bounded by $9$ in~\cite{AssadiK17}). As a warm up, we propose a better analysis of this randomized coreset in this section. 

\begin{theorem}\label{thm:maxmatching-3}
	Let $G(V,E)$ be a graph with $\MM(G) = \omega(k\log{n})$ and $\Gi{1},\ldots,\Gi{k}$ be a random $k$-partition of $G$. 
	Any maximum matching of the graph $\Gi{i}$ is a $\paren{3+o(1)}$-approximation randomized composable coreset of size $O(n)$ for the maximum matching problem. 
\end{theorem}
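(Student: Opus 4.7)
The plan is to fix a maximum matching $M^*$ of $G$ with $|M^*|=\mu$, let $\hat{M}$ be a maximum matching of the coreset $C:=\bigcup_i M^{(i)}$, and show that $|\hat{M}|\geq(1/3-o(1))\mu$ with high probability. The $o(1)$ slack will come from a standard Chernoff bound applied to $|M^*\cap E^{(i)}|$, which is tightly concentrated around $\mu/k$ once $\mu=\omega(k\log n)$ as hypothesized.

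I would begin by recording two elementary consequences of the maximality of each $M^{(i)}$ in $G^{(i)}$: (i) since $M^*\cap E^{(i)}$ is itself a matching of $G^{(i)}$, $|M^{(i)}|\geq|M^*\cap E^{(i)}|$, and in particular $\sum_i |M^{(i)}|\geq\mu$; (ii) the set $V(M^{(i)})$ is a vertex cover of $G^{(i)}$, so $V(C)=\bigcup_i V(M^{(i)})$ is a vertex cover of $G$. The heart of the argument then proves $|\hat{M}|\geq\mu/3$ by contradiction. Assume $|\hat{M}|<\mu/3$ and set $T:=V(\hat{M})$, so $|T|<2\mu/3$; since $M^*$ is a matching and each vertex in $T$ sits in at most one edge of $M^*$, the set $M^*_{\text{free}}:=\{e\in M^*:V(e)\cap T=\emptyset\}$ of ``free'' edges satisfies $|M^*_{\text{free}}|>\mu-2|\hat{M}|>\mu/3$. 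For each $e=(u,v)\in M^*_{\text{free}}$ sent to machine $i_e$, the edge $e$ cannot lie in $M^{(i_e)}\subseteq C$, since otherwise both of its endpoints being free w.r.t.\ $\hat{M}$ would violate that $T$ is a vertex cover of $C$; therefore $e\in G^{(i_e)}\setminus M^{(i_e)}$, and the maximality of $M^{(i_e)}$ forces one endpoint --- call it $u_e$ --- to be matched in $M^{(i_e)}$ to some $w_e$, yielding a witness edge $(u_e,w_e)\in C$ with $w_e\in T$ by the vertex cover property applied once more.

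The main obstacle will be to leverage these more than $\mu/3$ witness edges to exhibit a matching in $C$ of size strictly greater than $|\hat{M}|$, contradicting the maximality of $\hat{M}$. All witnesses live in the bipartite subgraph $C'$ of $C$ between $F:=V(C)\setminus T$ and $T$, and a standard symmetric-difference argument applied to $\hat{M}\cup M_B$ for any matching $M_B\subseteq C'$ shows that $|M_B|\leq|\hat{M}|$: a component of the symmetric difference with both endpoints in $F\cap V(M_B)$ would be an $\hat{M}$-augmenting path in $C$, which is forbidden by maximality. Hence if the witness edges themselves already form a matching --- equivalently, the distinguished endpoints $u_e$ are paired with pairwise distinct $w_e$'s --- we immediately obtain $\mu/3<|M^*_{\text{free}}|\leq|\hat{M}|<\mu/3$, a contradiction. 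The subtle case, which I expect to carry most of the technical weight, is that several $u_e$'s could map to the same $w_e$; such collisions necessarily arise across distinct machines (on any single machine $M^{(i)}$ is itself a matching) and each one induces an alternating structure at $w_e$ through its $\hat{M}$-partner $x_e$. I would resolve this by invoking the maximality of $\hat{M}$ a second time at each shared $w_e$: a collision either reveals an explicit $\hat{M}$-augmenting path in $C$ through $x_e$ (a direct contradiction), or lets us re-route one of the colliding witnesses to an alternative $T$-vertex, so that after finitely many re-routings the witnesses do form a matching of size $|M^*_{\text{free}}|$ in $C'$. Combined with $|M^*_{\text{free}}|>\mu-2|\hat{M}|$, this yields $|\hat{M}|\geq\mu/3-o(\mu)$ and completes the proof of the $(3+o(1))$-approximation.
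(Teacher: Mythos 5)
Your proof sketch ultimately rests on a \emph{deterministic} argument: after fixing $M^*$ and $\hat{M}$ you use only that each $M^{(i)}$ is a maximal matching of $G^{(i)}$ (so $V(M^{(i)})$ covers $G^{(i)}$ and $V(\hat M)$ covers $C$), and the randomness of the partition never enters the contradiction step -- the Chernoff bound you mention controls $|M^*\cap E^{(i)}|$ but is not used afterwards. If that deterministic argument were valid it would establish $|\hat{M}|\ge(1/3-o(1))\cdot\MM(G)$ for \emph{every} edge-partition, which is false. Take $t$ disjoint gadgets, the $j$-th having a center $w^j$ and four pendant $2$-paths $w^j$--$a^j_\ell$--$b^j_\ell$ for $\ell\in[4]$; set $k=4$ and let machine $\ell$ receive exactly the two edges of the $\ell$-th pendant path from every gadget. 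Then each $M^{(\ell)}:=\{(w^j,a^j_\ell):j\in[t]\}$ is a genuine \emph{maximum} matching of $G^{(\ell)}$ (a disjoint union of $2$-paths), so $C$ is a disjoint union of $4$-stars and $|\hat M|=t$, while $\MM(G)=4t$ and $n=9t$, so the ratio is $4>3$ and $\MM(G)=\omega(k\log n)$ once $t=\omega(\log n)$. This instance also pinpoints exactly where your ``augmenting path or re-route'' dichotomy fails: for each gadget the three witnesses $(a^j_2,w^j),(a^j_3,w^j),(a^j_4,w^j)$ all collide at $w^j$, whose $\hat M$-partner $a^j_1$ is a leaf of $C$ so there is no $\hat M$-augmenting path, and each $a^j_\ell$ is adjacent in $C$ only to $w^j$ so there is no alternative $T$-vertex to re-route to. The dichotomy you asserted in the final paragraph simply does not hold, and since your free-witness count is $3t>|\hat M|=t$, no amount of re-routing could produce a matching of that size inside $C$.

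The paper's proof is essentially probabilistic and organized differently. After fixing $M^*$ it considers $G^-:=G\setminus M^*$ and applies a concentration result for the matching number under edge sampling (Lemma~\ref{lem:concentration}, proved via a vertex-exposure martingale over a minimum vertex cover plus Azuma's inequality) to show that $\mu^-_i:=\MM(G^-\cap G^{(i)})$ is within $\pm\eps\mu$ of a common value $\mu^-$ for all $i$. If $\mu^->\mu/3$ then each $|M_i|\ge\mu^-_i\ge\mu/3-o(\mu)$ and we are done directly. If $\mu^-\le\mu/3$, the argument uses the randomness a second time: each $G^{(i)}$ receives, independently, about $1/k$ of the edges of $M^*$ that are free \emph{with respect to a maximum matching of $G^-\cap G^{(i)}$} (not with respect to a global $\hat M$ as in your sketch), and appending them to that matching forces the maximum matching $M_i$ to contain $\ge(\mu/3-o(\mu))/k$ edges of $M^*$; summing over $i$ shows $\bigcup_i M_i$ contains a constant fraction of $M^*$. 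The concentration of the \emph{matching number of the sampled subgraph} -- not just of $|M^*\cap E^{(i)}|$ -- is precisely the tool that rules out adversarial constructions like the one above under a random partition, and it is the ingredient missing from your proposal.
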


\paragraph{Assumption on $\MM(G)$.} In this section, we follow~\cite{AssadiK17} in assuming that $\MM(G) = \omega(k\log{n})$ since otherwise we can immediately obtain a (non-randomized) composable
coreset with approximation ratio one (an exact maximum matching) and size $\Ot(k^2)$ for the matching problem using the results in~\cite{ChitnisCEHMMV16}.

A crucial building block in our proof of Theorem~\ref{thm:maxmatching-3} is a new concentration result for the size of maximum matching in edge sampled subgraphs that we prove in the next section. This result is quite general
and can be of independent interest. 

\subsection{Concentration of Maximum Matching Size under Edge Sampling}\label{sec:concentration}

Let $G(V,E)$ be any arbitrary graph and $p \in (0,1)$ be a parameter (possibly depending on size of the graph $G$). Define $\GEp(V,E_p)$ as a subgraph of $G$ obtained by sampling each edge in $E$ independently
and with probability $p$, i.e., an edge sampled subgraph of $G$. 
We show that $\MM(G_p)$ is concentrated around its expected value. 

\begin{lemma}\label{lem:concentration}
	Let $G(V,E)$ be any arbitrary graph, $p \in (0,1)$ be a parameter, and $\Ex\bracket{\MM(\GEp)} \leq \mu$. For any $\lambda > 0$,  
	\begin{align*}
		\Pr\Paren{\card{\MM(\GEp) - \Ex\bracket{\MM(\GEp)}} \geq \lambda } \leq 2\cdot\exp\paren{-\frac{\lambda^2 \cdot p}{2 \cdot \mu}}.
	\end{align*}
\end{lemma}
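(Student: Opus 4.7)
}

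The plan is to prove the concentration bound through an edge-exposure Doob martingale, combined with Freedman's martingale concentration inequality, which leverages not only a uniform bound on the martingale increments but also the total conditional variance --- this is essential, as a plain Azuma--Hoeffding bound would give only $\exp(-\lambda^2/m)$, losing the $p/\mu$ scaling entirely.

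First I would fix an arbitrary ordering $e_1,\ldots,e_m$ of the edges of $G$ and let $X_i := \mathbf{1}[e_i \in E_p]$, so the $X_i$ are independent Bernoulli$(p)$ random variables. I define the Doob martingale $Z_i := \Ex\bracket{\MM(\GEp) \mid X_1,\ldots,X_i}$, with $Z_0 = \Ex\bracket{\MM(\GEp)}$ and $Z_m = \MM(\GEp)$, and let $\mathcal{F}_i$ be the corresponding filtration. Writing $q_i := \Ex\bracket{\MM(\GEp) \mid \mathcal{F}_{i-1}, X_i = 1} - \Ex\bracket{\MM(\GEp) \mid \mathcal{F}_{i-1}, X_i = 0}$, a direct conditioning argument gives the identity $\Delta_i := Z_i - Z_{i-1} = (X_i - p) \cdot q_i$. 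Since adding or deleting a single edge can change the matching number by at most one, $q_i \in [0,1]$, so $|\Delta_i| \leq \max(p,1-p) \leq 1$ almost surely and $\Ex\bracket{\Delta_i^2 \mid \mathcal{F}_{i-1}} = p(1-p) q_i^2 \leq p \cdot q_i$.

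The main technical step is to control the total conditional variance $V := \sum_{i=1}^{m} \Ex\bracket{\Delta_i^2 \mid \mathcal{F}_{i-1}}$ by $\mu/p$. The structural observation is that $q_i$ equals (up to conditioning on $\mathcal{F}_{i-1}$) the probability that the edge $e_i$ is ``augmenting'' for the sampled graph, i.e.\ its addition strictly increases $\MM$. The number of such augmenting edges at any realization of $\GEp$ is tied to the matching size of that realization, because every augmenting edge must be incident to an exposed/unsaturated vertex structure that the current matching leaves open. Combining this with the fact that, under sampling at rate $p$, each vertex-degree contribution is scaled by $1/p$, one can derive $\sum_i q_i = O(\mu/p^2)$ in expectation, and hence $V \leq p \sum_i q_i \leq \mu/p$ with high probability (handling the low-probability event $\{V > \mu/p\}$ via a Chernoff-style bound on $|E_p|$).

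Finally, I would invoke Freedman's inequality for martingales with bounded increments: for the stopping time associated with the event $\{V \leq \mu/p\}$,
\[
\Pr\Paren{\,|Z_m - Z_0| \geq \lambda \text{ and } V \leq \mu/p\,}
\;\leq\; 2\exp\!\Paren{-\frac{\lambda^2}{2(\mu/p + \lambda/3)}}.
\]
Absorbing the lower-order term $\lambda/3$ into the leading $\mu/p$ (adjusting constants as necessary) and union-bounding against the event $\{V > \mu/p\}$ yields $2\exp(-\lambda^2 p/(2\mu))$, matching the claim. The main obstacle in this plan is the variance bound $V \leq \mu/p$: since $G$ can a priori have far more potential augmenting edges than $\mu/p$, this step requires a careful structural argument that simultaneously exploits matching theory and the sampling rate $p$, and is the real technical heart of the lemma.
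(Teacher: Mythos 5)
Your overall framework --- Doob martingale plus a martingale concentration inequality --- is sound, and you correctly diagnose why a na\"ive edge-exposure Azuma bound (which gives $\exp(-\lambda^2/m)$) is useless here. But your proposal has a genuine gap at the step you yourself flag as ``the real technical heart'': the bound $V \leq \mu/p$ on the total conditional variance is asserted, not established, and the heuristic you give for it does not work.

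Specifically, the claim that ``the number of augmenting edges at any realization of $\GEp$ is tied to the matching size of that realization'' is false. Take $G$ to be a star $K_{1,n-1}$ and consider the realization where $\GEp$ is empty: the matching size is $0$, yet every one of the $n-1$ edges of $G$ is augmenting. What actually makes each $q_i$ small in such cases is the averaging over the randomness of the \emph{future} edges $e_{i+1},\dots,e_m$ (e.g.\ $q_i \approx (1-p)^{m-i}$ in the star), not any structural relation between augmenting edges and the current matching. You gloss over exactly this. Moreover, handling the tail event $\{V > \mu/p\}$ via ``a Chernoff-style bound on $|E_p|$'' does not make sense: the conditional variance $V = \sum_i p(1-p)q_i^2$ is not a function of $|E_p|$. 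One can show $\Ex[V] = \Var(\MM(\GEp)) \leq 2\mu/p$, but Freedman's inequality needs a pointwise (or high-probability) bound on $V$, and concentrating $V$ around its mean is yet another martingale concentration problem --- at which point the argument becomes circular. Finally, even granting $V \leq \mu/p$, Freedman gives $2\exp\bigl(-\lambda^2/(2(\mu/p+\lambda/3))\bigr)$, which is weaker than the claimed $2\exp(-\lambda^2 p/(2\mu))$; one has to verify the extra $\lambda/3$ term is benign for the downstream application (it is here, but this needs saying).

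The paper's proof sidesteps all of this with one simple idea: instead of exposing \emph{edges}, expose the edges incident to each vertex of a fixed minimum vertex cover $C$ of $G$, one vertex at a time. Since $C$ covers every edge, revealing the incidence vectors of vertices in $C$ determines $\GEp$ entirely, so the martingale has only $|C|$ steps. Each step changes $\MM(\GEp)$ by at most $1$ (a vertex can be matched at most once), and $|C| \leq 2\MM(G) \leq 2\mu/p$ because $p\cdot\MM(G) \leq \Ex[\MM(\GEp)] \leq \mu$. Plain Azuma with $c_i = 1$ then gives $2\exp(-\lambda^2/|C|) \leq 2\exp(-\lambda^2 p/(2\mu))$ directly --- no variance control, no Freedman, and the constant $2$ comes out exactly. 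The key observation you are missing is that switching to vertex exposure over a vertex cover makes the number of martingale steps itself scale as $O(\mu/p)$, so the entire variance-control machinery becomes unnecessary.
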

\begin{proof}
	For simplicity, define $G_p := \GEp$. Let $C$ be any minimum vertex cover in the graph $G$. We use \emph{vertex exposure} martingales over vertices in $C$ to prove this result. 
	Fix an arbitrary ordering of vertices in $C$ and for any $v \in C$, define $C^{< v}$ as the set of vertices in $C$ that appear before $v$ in this ordering. For each $v \in C$, we define
	a random variable $Y_v \in \set{0,1}^{V \setminus C^{< v}}$ as a vector of indicators whether a possible edge (i.e., an edge already in $G$) between
	the vertices $v$ and $u \in V \setminus C^{<v}$ appears in $G_p$ or not. Since $C$ is a vertex cover of $G$, every edge in $G$ is incident on some vertex of $C$. As a result, 
	the graph $G_p$ is uniquely determined by the vectors $Y_1,\ldots,Y_{\card{C}}$. Define a sequence of random variables $\set{X_i}_{i=1}^{\card{C}}$, 
	whereby $X_i = \Ex\bracket{\MM(G_p) \mid Y_1,\ldots,Y_{i}}$. The following claim is standard. 
	
	\begin{claim}\label{clm:doob-martingale}
		The sequence $\set{X_i}_{i=1}^{\card{C}}$ is a martingale with respect to the sequence $\set{Y_i}_{i=1}^{\card{C}}$.  
	\end{claim}
	\begin{proof}
		For any $i \leq \card{C}$, 
		\begin{align*}
			\Ex\bracket{X_i \mid Y_1,\ldots,Y_{i-1}} &= \Ex_{(y_1,\ldots,y_{i-1})}\Bracket{\Ex\bracket{\MM(G_p) \mid Y_1=y_{1},\ldots,Y_{i-1}=y_{i-1},Y_i} \mid Y_1=y_1,\ldots,Y_{i-1}=y_{i-1}} \\
			&=  \Ex_{(y_1,\ldots,y_{i-1})}\Bracket{\MM(G_p)\mid Y_1=y_1,\ldots,Y_{i-1}=y_{i-1}} \tag{as we are ``averaging out'' $Y_i$ in the outer expectation} \\
			&= \Ex\Bracket{\MM(G_p) \mid Y_1,\ldots,Y_{i-1}} = X_{i-1}. \qed
		\end{align*}
		
	\end{proof}
	Notice that $X_0 := \Ex\bracket{\MM(G_p)}$ and $X_{\card{C}} = \MM(G_p)$ as fixing $Y_1,\ldots,Y_{\card{C}}$ uniquely determines the graph $G_p$. Hence, we can use Azuma's inequality  to show that
	value of $X_{\card{C}}$ is close to $X_0$ with high probability. To do this, we need a bound on $\card{C}$, as well as each
	 term $\card{X_i - X_{i-1}}$. Bounding each $\card{X_i - X_{i-1}}$ term is quite easy; the set of edges incident on the vertex $i$ can only change the 
	maximum matching in $G_p$ by $1$ (as $i$ can only be matched once), and hence $\card{X_i - X_{i-1}} \leq 1$. In the following, we also bound the value of $\card{C}$. 
	\begin{claim}\label{clm:vc-upper-bound}
		$\card{C} \leq 2\cdot \mu/p$. 
	\end{claim}
	\begin{proof}
		Since size of a minimum vertex cover of a graph $G$ is at most twice the size of its maximum matching (Fact~\ref{fact:vc-matching}), we have that $\card{C} \leq 2\cdot\MM(G)$. It is also straightforward to
		verify that $p \cdot \MM(G) \leq \Ex\Bracket{\MM(G_p)} \leq \mu$, since $p$ fraction of the edges
		of any maximum matching of $G$ appear in $G_p$ in expectation; hence $\card{C} \leq 2\cdot\mu/p$. 
	\end{proof}
	
	We are now ready to finalize the proof. By setting $c_i = 1$ for all $i \leq \card{C}$, we can use Azuma's inequality (Proposition~\ref{prop:azuma}) with
	parameters $\lambda$ and $c_i$ for the martingales $\set{X_i}$, and obtain that,
	\begin{align*}
		\Pr\Paren{\card{\MM(G_p) - \Ex\bracket{\MM(G_p)}} \geq \lambda } &= \Pr\Paren{\card{X_{\card{C}} - X_0} \geq \lambda} \Leq{Proposition~\ref{prop:azuma}} 2\cdot \exp\paren{-\frac{\lambda^2}{\sum_{i \in C} c_i^2}} \\
		&= 2\cdot \exp\paren{-\frac{\lambda^2}{\card{C}}} \Leq{Claim~\ref{clm:vc-upper-bound}} 2\cdot\exp\paren{-\frac{\lambda^2 \cdot p}{2 \cdot \mu}}
	\end{align*}
	finalizing the proof. 
\end{proof}

\subsection{Proof of Theorem~\ref{thm:maxmatching-3}}\label{sec:maxmatching-3}

Let $G(V,E)$ be any arbitrary graph and $\Gi{1},\ldots,\Gi{k}$ be a random $k$-partition of $G$. Recall that \MaxMatching coreset simply computes a maximum matching $M_i$ on each graph $\Gi{i}$ for $i \in [k]$; hence, we only need
to show that the graph $H(V,M_1 \cup \ldots \cup M_k)$ has a large matching compared to the graph $G$. 

Let $\Mstar$ be any fixed maximum matching in $G$, and let $\mu := \card{\Mstar} = \MM(G)$. Our approach is to show that either each graph $\Gi{i}$ has a large matching already, i.e., $\card{M_i} \geq \mu/3$, or 
many edges of $\Mstar$ are picked in $M_i$ as well. In the latter case, the union of edges in $M_i$ for $i \in [k]$ has a large intersection with $\Mstar$ and hence contains a large matching. 

Define $\Gm(V,\Em)$ whereby $\Em := E \setminus \Mstar$. Let $\Gm_i := \Gm \cap \Gi{i}$ be the intersection of the graph $G^{(i)}$ and $\Gm$. 
Finally, define $\mu^{-}_i$ as the \emph{maximum matching size} in $\Gm_i$. Using our concentration result from the previous section, we can show that, 

\begin{claim}\label{clm:maxmatching-size}
	Let $\eps \in (0,1)$ be a parameter. Suppose $\mu \geq 4\cdot\eps^{-2} \cdot k\log{n}$; then, there exists an integer $\mu^- \in [n]$ such that with probability $1-o(1)$ (over the random $k$-partition), $\mu^{-}_i = \mu^- \pm \eps \cdot \mu$ 
	simultaneously for all $i \in [k]$. 
\end{claim}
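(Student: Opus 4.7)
The plan is to apply Lemma~\ref{lem:concentration} to each $G^-_i$ individually and then take a union bound. First I observe that in the random $k$-partition of $G$, each edge of $G^-$ is independently assigned to part $i$ with probability exactly $1/k$, so $G^-_i$ is distributed as an edge sampled subgraph of $G^-$ with sampling probability $p = 1/k$ in the sense of Section~\ref{sec:prelim}. Since $G^-_i \subseteq G$ holds deterministically, $\MM(G^-_i) \leq \MM(G) = \mu$, and in particular $\Ex[\MM(G^-_i)] \leq \mu$, so the quantity $\mu$ of Lemma~\ref{lem:concentration} may itself be taken to be $\mu$.

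Next I would set the ``common target'' $\mu^-$. By symmetry of the random $k$-partition, the real number $\mu^* := \Ex[\MM(G^-_i)]$ does not depend on $i$, so I can take $\mu^- := \lfloor \mu^* \rfloor \in [n]$ and this integer will serve all the parts simultaneously. Note that $|\mu^- - \mu^*| \leq 1$.

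The core step is to invoke Lemma~\ref{lem:concentration} with $\lambda = \eps\mu - 1$ (or $\lambda = \tfrac{3}{4}\eps\mu$, with a trivial rescaling of $\eps$): for each fixed $i$,
\[
\Pr\bigl(|\MM(G^-_i) - \mu^*| \geq \eps\mu - 1\bigr) \;\leq\; 2\exp\!\left(-\frac{(\eps\mu - 1)^2 \cdot (1/k)}{2\mu}\right) \;\leq\; 2\exp\!\left(-\frac{\eps^2 \mu}{4k}\right),
\]
where the last inequality uses $\eps\mu - 1 \geq \eps\mu/2$, which in turn follows from the hypothesis $\mu \geq 4\eps^{-2} k \log n \geq 2/\eps$. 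Now the hypothesis $\mu \geq 4\eps^{-2} k \log n$ makes the right-hand side at most $2/n$ (with room to spare), and a union bound over all $i \in [k]$, $k \leq n$, gives that simultaneously $|\MM(G^-_i) - \mu^*| \leq \eps\mu - 1$ for all $i$ with probability $1 - o(1)$. Combining this with $|\mu^- - \mu^*| \leq 1$ via the triangle inequality yields $\mu^-_i = \mu^- \pm \eps\mu$ for every $i$, as desired.

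The only genuine obstacle in this argument is that $\MM(G^-_i)$ is \emph{not} a sum of independent indicators (flipping one edge can cascade into a large change in the optimal matching structure), so an off-the-shelf Chernoff bound does not apply; this is precisely the role of Lemma~\ref{lem:concentration}, whose vertex-exposure martingale over a minimum vertex cover delivers the sharp $e^{-\Omega(\lambda^2 p / \mu)}$ tail. Once that tool is in hand, the proof of Claim~\ref{clm:maxmatching-size} is just bookkeeping: identify $G^-_i$ as an edge sample, read off the expectation symmetry, and verify that the hypothesis $\mu \geq 4\eps^{-2} k \log n$ is precisely what makes the per-machine failure probability small enough to survive a union bound over the $k$ parts.
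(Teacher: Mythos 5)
Your proof is correct and follows essentially the same route as the paper's: identify $G^-_i$ as a $1/k$ edge-sample of $G^-$, apply Lemma~\ref{lem:concentration} with $\lambda \approx \eps\mu$ around the common expectation $\Ex[\MM(G^-_i)]$, and union bound over the $k \leq n$ parts. (Two cosmetic remarks: with $\eps\mu - 1 \geq \eps\mu/2$ the exponent is only $\eps^2\mu/(8k)$ rather than $\eps^2\mu/(4k)$, but the hypothesis has enough slack for this not to matter, and your flooring of $\Ex[\MM(G^-_i)]$ to get an integer $\mu^-$ is a point of rigor the paper's proof glosses over.)
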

\begin{proof}
	Let $p = 1/k$; the graph $\Gm_i$ is a subgraph of $\Gm$ obtained by picking each edge in $\Gm$ independently and with probability $p$. Let $\mu^- := \Ex\bracket{\MM(\Gm_i)} \leq \mu$ (notice that
	the marginal distribution of $\Gm_i$ graphs for all $i \in [k]$ are identical). By setting $\lambda = \eps \cdot \mu$ in Lemma~\ref{lem:concentration}, we have that, 
	\begin{align*}
		\Pr\paren{\card{\mu^{-}_i - \mu^-} \geq \lambda} &\Leq{Lemma~\ref{lem:concentration}} 2 \cdot \exp\paren{-\frac{\eps^2 \cdot \mu^2 \cdot p}{2 \cdot \mu}} \leq 2 \cdot \exp \paren{-2\log{n}} \leq \frac{1}{n^2}
	\end{align*}
	where the second inequality is by the assumption on the value of $\mu$. Taking a union bound over all $k \leq n$ subgraphs $\Gm_i$ for $i \in [k]$ finalizes the proof. 
\end{proof}

In the following, we condition on the event in Claim~\ref{clm:maxmatching-size}. We now have,

\begin{lemma}\label{lem:maxmatching-intersect}
	Let $\mu$, $\mu^{-}$ and $\eps$ be as in Claim~\ref{clm:maxmatching-size}. If $\mu^{-} \leq \mu/3$, then $\card{\bigcup_{i=1}^{k} M_i \cap \Mstar} \geq \mu/3 - 3\eps \cdot \mu$ w.p. $1-o(1)$.  
\end{lemma}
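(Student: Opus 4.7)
The plan is to count the ``lost'' $M^*$-edges $T_i := M^*_i \setminus M_i$ in each partition and show $\sum_i |T_i| \le 2\mu/3 + 3\eps\mu$. This suffices, since $M^* = \bigsqcup_i M^*_i$ (with $M^*_i := M^* \cap G^{(i)}$) together with $M_i \cap M^* = M_i \cap M^*_i$ gives the disjoint decomposition
\[
\Bigl\lvert\bigcup_{i=1}^{k} M_i \cap M^*\Bigr\rvert \;=\; \sum_i \card{M_i \cap M^*_i} \;=\; \mu - \sum_i |T_i|.
\]

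The local building block will be the alternating-path bound $|T_i| \le |L_i|$, where $L_i := M_i \setminus M^*$. I would consider the symmetric difference $M_i \triangle M^*_i = L_i \sqcup T_i$, which has maximum degree $2$ and therefore decomposes into alternating paths and cycles in $G^{(i)}$. Since $M_i$ is a maximum matching of $G^{(i)}$, no $M_i$-augmenting path exists, and so within each connected component the number of $T_i$-edges is at most the number of $L_i$-edges. As $L_i$ is itself a matching in $\Gm_i$, Claim~\ref{clm:maxmatching-size} (on whose event we are conditioning) yields $|L_i| \le \mu^-_i \le \mu^- + \eps\mu \le \mu/3 + \eps\mu$ for every $i$ simultaneously.

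The hard part will be turning these per-partition bounds into the required global bound, because the naive sum gives only $\sum_i |T_i| \le k(\mu/3 + \eps\mu)$, which is useless once $k \ge 3$. The saving feature is that each lost edge $(u,v) \in T_i$ forces at least one of $u, v$ to be saturated by an edge of $L_i$ (otherwise $M_i$ could be augmented by $(u,v)$, contradicting maximality). This invites a vertex-level double-counting over $V^* := V(M^*)$: for each $v \in V^*$ with unique $M^*$-edge $e_v$ sent to partition $I(e_v)$, set $Y_v := \mathbf{1}[v \in V(L_{I(e_v)})]$, so that $\sum_i |T_i| \le \sum_{v \in V^*} Y_v$. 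Unlike the per-partition sum, this quantity accesses each $V^*$-vertex only through its \emph{own} partition $I(e_v)$, which I expect makes its expectation controlled by $\mu^- \le \mu/3$ rather than by $k \mu^-$. Combining this expectation bound with a concentration argument in the spirit of Lemma~\ref{lem:concentration} (a vertex-exposure martingale, now over $V^*$) should produce $\sum_v Y_v \le 2\mu/3 + 3\eps\mu$ with probability $1-o(1)$; this global accounting is where I expect the main technical work to lie.
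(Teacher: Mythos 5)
Your setup is sound: the identity $\card{\bigcup_i M_i \cap \Mstar} = \mu - \sum_i \card{T_i}$ is correct, as are the local facts that $\card{T_i} \le \card{L_i} \le \mu^-_i \le \mu/3 + \eps\mu$ and that every lost edge has an endpoint saturated by $L_i$. But the step you defer as ``the main technical work'' is not a routine concentration argument---it contains the entire difficulty, and as formulated it faces an obstruction you do not address. The heuristic expectation computation $\Ex\bigl[\sum_v Y_v\bigr] \approx \tfrac{1}{k}\sum_i \Ex\card{V(L_i)\cap V(\Mstar)} \le \tfrac{2}{k}\sum_i \mu^-_i \approx 2\mu^- \le 2\mu/3$ requires treating the event $\set{I(e_v)=i}$ as independent of $V(L_i)$. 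It is not: $M_i$ is an \emph{arbitrary} maximum matching of $\Gi{i}$, chosen with full knowledge of which $\Mstar$-edges landed in $\Gi{i}$, so an adversarial tie-breaking rule can bias $V(L_i)$ toward $V(\Mstar_i)$. If $L_i$ could be steered to cover $V(\Mstar_i)$ (only $\approx 2\mu/k$ vertices, versus $\card{V(L_i)}$ up to $2\mu/3$), then $\sum_v Y_v$ would approach $2\mu$, not $2\mu/3$. Whether the maximum-matching constraint always forbids such steering is exactly what needs proving, and your target bound has \emph{zero slack}: the independent heuristic lands exactly at $2\mu/3$, so even a small correlation loss breaks it. Note also that $\sum_v Y_v$ can be as large as $2\sum_i\card{T_i}$ (both endpoints of a lost edge can lie in $V(L_i)$), so $\sum_v Y_v \le 2\mu/3 + 3\eps\mu$ is a strictly stronger assertion than the lemma itself; you would be proving more than is needed, with no argument that the stronger statement is even true.

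The paper's proof sidesteps this dependence entirely by working per machine and exposing the randomness in the right order. Conditioning on the event of Claim~\ref{clm:maxmatching-size} fixes only $\Gm_i$; one then fixes a maximum matching $M^-_i$ of $\Gm_i$ (a function of $\Gm_i$ alone, unlike your $L_i$), and observes that the $\ge \mu - 2\mu^-_i \ge \mu/3 - 2\eps\mu$ edges of $\Mstar$ free with respect to $M^-_i$ still land in $\Gi{i}$ independently with probability $1/k$. Each free edge landing in $\Gi{i}$ augments $M^-_i$ directly, so $\card{M_i} \ge \mu^-_i + \tfrac{1}{k}(\mu/3 - 3\eps\mu)$ by a per-machine Chernoff bound; combined with $\card{M_i \cap \Gm_i} \le \mu^-_i$ this yields $\card{M_i \cap \Mstar} \ge \tfrac{1}{k}(\mu/3 - 3\eps\mu)$ for each $i$, and summing finishes the proof. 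No global vertex-exposure martingale over $V(\Mstar)$ is needed. To salvage your route you would have to replace $L_i$ in the double counting by an object measurable with respect to $\Gm_i$ alone, at which point you essentially recover the paper's argument.
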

\begin{proof}
	Fix an index $i \in [k]$ and notice that conditioning on the event in Claim~\ref{clm:maxmatching-size}, only fixes the set of edges in $\Gm_i$. Let $M^-_i$ be any maximum matching in $\Gm_i$; by definition, $\mu^-_i = \card{M^-_i}$. 
	By conditioning on the event in Claim~\ref{clm:maxmatching-size}, we have $\mu^-_i \leq \mu^- + 2\eps \cdot \mu$. It is straightforward 
	to verify that there are at least $\card{\Mstar} - 2\cdot\card{M^-_i} = \mu - \mu^-_i \geq \mu/3 - 2\eps\cdot\mu$ edges $e$ in $\Mstar$ such that neither of endpoints of $e$ are matched
	by $M^-_i$. We refer to these edges as \emph{free} edges and use $\Mstar_f \subseteq \Mstar$ to denote them. 
	
	 Note that even after conditioning on $\Gm_i$, the edges in $\Mstar$, and consequently $\Mstar_f$, appear in the graph $\Gi{i}$ independently and with probability $1/k$. As such, using a Chernoff bound (by assumption
	 on the value of $\mu$), w.p. $1-1/n^2$, $\card{\Mstar_f}/{k} - \eps \cdot \mu/k$ edges of $\Mstar_f$ appear in $\Gi{i}$. Since these edges can be directly added to the matching $M^-_i$ (as neither endpoints
	 of them are matched in $M^-_i$), this implies that there exists a matching of size $\mu^-_i + \frac{1}{k} \cdot \paren{\mu/3 - 3\eps \cdot \mu}$ in $\Gi{i}$ w.p. $1-1/n^2$. 
	 
	 Now let $M_i$ be the maximum matching computed by \MaxMatching; the above argument implies that $\card{M_i} \geq \mu^-_i + \frac{1}{k} \cdot \paren{\mu/3 - 3\eps \cdot \mu}$. On the other hand, notice that $\card{M_i \cap \Gm_i} \leq \mu^-_i$ 
	 as $M_i \cap \Gm_i$ forms a matching in the graph $\Gm_i$ and $\mu^-_i$ denotes the maximum matching size in this graph. This means that $\card{M_i \cap \Mstar} = \card{M_i \setminus \Gm_i} \geq \frac{1}{k} \cdot \paren{\mu/3 - 3\eps \cdot \mu}$. 
	 To finalize the proof, notice that by a union bound over all $k$ matchings $M_i$, we have that with probability $1-1/n$, 
	 \begin{align*}
	 	\card{\bigcup_{i=1}^{k} M_i \cap \Mstar} = \sum_{i=1}^{k} \card{M_i \cap \Mstar} \geq k \cdot \frac{1}{k} \cdot \paren{\mu/3 - 3\eps \cdot \mu} = \mu/3 - 3\eps \cdot \mu. \qed
	 \end{align*}
	 
\end{proof}

We can now easily prove Theorem~\ref{thm:maxmatching-3}. 

\begin{proof}[Proof of Theorem~\ref{thm:maxmatching-3}]
	By our assumption that $\mu = \MM(G) =  \omega(k\log{n})$, we can take $\eps$ in Claim~\ref{clm:maxmatching-size} and Lemma~\ref{lem:maxmatching-intersect} to be some arbitrary small constant, say $\eps = o(1)$. 
	Define $\mu^-$ as in Lemma~\ref{lem:maxmatching-intersect}. If $\mu^- > \mu/3$, we are already done as by Claim~\ref{clm:maxmatching-size}, for any $i \in [k]$, $\card{M_i} \geq \mu^-  - o(\mu) \geq \mu/3 - o(\mu)$ and hence 
	the union of matchings $M_1,\ldots,M_k$ surely has a $\paren{3+o(1)}$ approximate matching. On the other hand, if $\mu^- \leq \mu/3$, we can apply Lemma~\ref{lem:maxmatching-intersect}, and argue that $\mu/3 - o(\mu)$ edges
	of the matching $\Mstar$ appear in the union of matchings $M_1,\ldots,M_k$, which finalizes the proof. 
\end{proof}

\subsection{Lower Bound on Approximation Ratio of \MaxMatching}\label{sec:maxmatching-lower} 

We also show that there exists a graph for which the approximation ratio of \MaxMatching is arbitrarily close to $2$. This implies that we cannot improve the analysis of \MaxMatching much further and in particular 
beat the approximation ratio of $2$.  

\begin{lemma}\label{lem:maxmatching-lb}
	There exists a graph $G(V,E)$ such that for any random $k$-partition of $G$ ($k \leq n^{1-\delta}$ for any constant $\delta > 0$), 
	the \MaxMatching coreset can only find a matching of size at most $\paren{\frac{1}{2}+\frac{1}{k}} \cdot \MM(G)$ with high probability. 
\end{lemma}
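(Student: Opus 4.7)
The plan is to exhibit a bipartite graph $G$ whose structure forces the \MaxMatching coreset to skip a ``hidden'' matching, leaving a large portion of the vertex set isolated in the union $H:=\bigcup_i M_i$. Concretely, take $G=(L\cup R, E)$ bipartite with $|L|=|R|=n/2$, and partition $L = L_1 \sqcup L_2$ and $R = R_1 \sqcup R_2$ with $|L_1|=|R_1|=n/4-n/(2k)$ and $|L_2|=|R_2|=n/4+n/(2k)$. The edges $E$ form the disjoint union of (a) a perfect matching $M_1^\star$ between $L_1$ and $R_1$; (b) a perfect matching $M_2^\star$ between $L_2$ and $R_2$; and (c) the complete bipartite graph $K_{L_1,R_2}$. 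Since $M_1^\star\cup M_2^\star$ is a perfect matching of $G$, we have $\MM(G)=n/2$.

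Next I would analyze the maximum matching of each random subgraph $\Gi{i}$ via a small LP: writing any matching as using $x$ edges of $M_1^\star\cap \Gi{i}$, $y$ edges of $K_{L_1,R_2}\cap \Gi{i}$, and $z$ edges of $M_2^\star\cap \Gi{i}$, the relevant constraints are $x+y\le |L_1|$ (shared $L_1$), $y+z\le |R_2|$ (shared $R_2$), $x\le |M_1^\star\cap \Gi{i}|$, and $z\le |M_2^\star\cap \Gi{i}|$. A routine Chernoff bound gives $|M_j^\star\cap \Gi{i}|=(1\pm o(1))|L_j|/k$ simultaneously for all $i\in[k]$ and $j\in\{1,2\}$ with high probability, and in particular $|L_1|+|M_2^\star\cap \Gi{i}|\le |R_2|$. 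Maximizing $x+y+z$ under these constraints yields the value $|L_1|+|M_2^\star\cap \Gi{i}|$, and crucially this optimum is attained by $(x,y,z)=(0,|L_1|,|M_2^\star\cap \Gi{i}|)$---so there always exists a maximum matching of $\Gi{i}$ that uses \emph{no} $M_1^\star$ edges.

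I would then have the \MaxMatching coreset break ties adversarially, returning precisely this $x=0$ maximum matching $M_i$ for each $i$. By construction, no edge of $M_1^\star$ appears in the union $H=\bigcup_i M_i$, so every vertex in $R_1$ is isolated in $H$. Since $H$ is bipartite with $V(H)\cap R \subseteq R_2$, this forces $\MM(H)\le |R_2|= n/4+n/(2k)=(1/2+1/k)\cdot \MM(G)$, which is exactly the desired bound.

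The hard part will be verifying that the adversarial tie-breaking is actually feasible: we need $K_{L_1,R_2}\cap \Gi{i}$ to contain a matching saturating $L_1$ while avoiding the $R_2$-vertices already used by $M_2^\star\cap \Gi{i}$, simultaneously for all $i\in[k]$. For $k\le n^{1-\delta}$, each $L_1$-vertex has expected degree $|R_2|/k\ge n^\delta/4\gg \log n$ in $K_{L_1,R_2}\cap \Gi{i}$, well above the threshold at which Hall's condition holds in random bipartite graphs (alternatively, Lemma~\ref{lem:concentration} can be invoked on the bipartite subgraph $K_{L_1,R_2}\cap \Gi{i}$ to directly lower bound its matching size); a union bound over $i\in[k]$ then finishes the argument.
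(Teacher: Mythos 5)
Your construction and proof are correct, and the core mechanism is the same as the paper's: a bipartite graph pairing a complete bipartite block $K_{L_1,R_2}$ with two auxiliary matchings, so that each $\Gi{i}$ admits a maximum matching that skips one auxiliary matching entirely, and adversarial tie-breaking then isolates a whole vertex class in the union $H$. Where you differ is in the sizing and in how maximality is certified. The paper sets $\card{L_1} = n/2 + n/k$ strictly larger than $\card{R_1} = \card{L_2} = \card{R_2} = n/2$ (so the $L_1$--$R_1$ matching is only partial, of size $\card{R_1}$) and has the coreset avoid the $L_2$--$R_2$ perfect matching; that slack in $\card{L_1}$ makes maximality of the adversarial matching in $\Gi{i}$ immediate, since it saturates every nonzero-degree vertex of $R_1 \cup R_2$. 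Your balanced version ($\card{L_1}=\card{R_1}<\card{L_2}=\card{R_2}$, both auxiliary matchings perfect, isolating $R_1$) instead certifies optimality via the small aggregated LP, which is a sound upper bound (it is a relaxation of the matching IP) but a slightly heavier tool. Both versions reduce the ``hard part'' to the same random-bipartite-graph fact---that $K_{L_1,R_2} \cap \Gi{i}$ admits an $L_1$-saturating matching into the unused portion of $R_2$, which holds because the edge probability $1/k$ gives expected degree $\Theta(n/k) \geq n^{\delta}/4 \gg \log n$, well above the Hall threshold---followed by a union bound over $i \in [k]$. One small remark: invoking Lemma~\ref{lem:concentration} here is not a direct substitute for Hall's condition, since you need the matching to actually saturate $L_1$ while avoiding $R_2(M_2^\star\cap\Gi{i})$, not merely to be concentrated around its mean; the explicit random bipartite graph argument is the right one, and it is also what the paper uses.
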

\begin{proof}
	The vertex set of the graph $G$ consists of four sets of vertices $L_1,L_2,R_1,R_2$ with $\card{L_1} = \frac{n}{2} + \frac{n}{k}$ and $\card{L_2} = \card{R_1} = \card{R_2} = \frac{n}{2}$. $G$ is a bipartite graph 
	with $L_1 \cup L_2$ on one side of the bipartition and $R_1 \cup R_2$ on the other side. There is a complete bipartite graph between $L_1$ and $R_2$, a perfect matching between $L_2$ and $R_2$ and a matching of size 
	$\card{R_1}$ between $L_1$ and $R_1$. 
	
	It is easy to verify that there exists a matching of size $\card{R_1} + \card{R_2} = n$ in $G$ and hence $\MM(G) \geq n$. Suppose we create a random $k$-partition $\Gi{1},\ldots,\Gi{k}$ of $G$ and each machine
	$i \in [k]$ computes an arbitrary maximum matching $M_i$ of its input graph (i.e., compute the \MaxMatching coreset). In the following, we argue that the maximum matching in the 
	graph $H(V,M_1 \cup \ldots, M_k)$ is of size $(\frac{1}{2}+\frac{1}{k}) \cdot n$ with high probability, which concludes the proof. 
	
	To prove the lemma, we need the following simple claim about the maximum matching in the edge sampled subgraphs of $G$. 
	
	\begin{claim}\label{clm:maxmatching-lb}
		Suppose $\GEp(V,E_p)$ is an edge sampled subgraph of $G$ with probability $p = 1/k$; then, w.p. $1-1/n^5$, there exists a matching $M_p$ in $G$ such that: 
		\begin{enumerate}
			\item $M_p$ is a maximum matching in $G_p$, i.e., $\card{M_p} = \MM(G)$. 
			\item No edges between $L_2$ and $R_2$ belong to $M_p$. 
		\end{enumerate}
	\end{claim}
	\begin{proof}
		A simple application of Chernoff bound ensures that the total number of edges between $L_1$ and $R_1$ in $G_p$ is at most $2p\cdot n = n/k$ with probability at least $1-1/n^{10}$. In the following, we condition on this event. 
		Define $M_{1,1}$ as the matching consisting of the edges between $L_1$ and $R_1$ in $G_p$ and let $L^-_1 := L_1 \setminus L_1(M_{1,1})$ be the set of vertices in $L_1$ that are \emph{not} incident on $M_{1,1}$. 
		
		Consider the graph
		between $L^-_1$ and $R_2$. Note that since $\card{M_{1,1}} \leq n/k$, we have $\card{L^-_1} \geq \card{R_2}$. 
		By the independence in the sampling of edges and the fact that in $G$, $L^-_1$ and $R_1$ forms a bipartite clique, the set of edges between $L^-_1$ and $R_2$ in $G_p$ form a random bipartite graph
		with probability of having each edge equal to $p = 1/k = \omega(\log{n})$. Using standard facts about random graphs (see, e.g.,~\cite{Bollobas2001}, Chapter 7), 
		this implies that there exists a matching of size $\card{R_2}$ between $L^-_1$ and $R_2$ in $G_p$ with probability $1-1/n^{10}$. Let $M_p$ be the union of this matching and $M_{1,1}$. 
		
		It is clear that $M_p$ does not have any edges between $L_2$ and $R_2$. To see that $M_p$ is indeed a maximum matching of $G_p$, notice
		that all vertices in $R_1 \cup R_2$ in $G_p$ that have non-zero degree are matched by $M_p$ and hence there cannot be any larger matching in $G$. 
	\end{proof}
	
	We are now ready to finalize the proof of Lemma~\ref{lem:maxmatching-lb}. Recall that each graph $\Gi{i}$ is an edge sampled subgraph of $G$ with probability $1/k$. We can apply Claim~\ref{clm:maxmatching-lb} to each graph
	$\Gi{i}$ and by a union bound, w.p. $1-1/n^4$, there exists a suitable maximum matching $\Mi{i}_p$ in each graph $\Gi{i}$. Since we are choosing an \emph{arbitrary} maximum matching of $\Gi{i}$ as its coreset, we can assume that $\Mi{i}_p$ would 
	be chosen from each graph $\Gi{i}$, i.e., $M_i = \Mi{i}_p$ for all $i \in [k]$. This implies that no edge incident to vertices in $L_2$ are chosen among all coresets $M_1 \cup \ldots \cup M_k$. As a result, the maximum matching in the 
	graph $H(V,M_1 \cup \ldots \cup M_k)$ can have size at most $\card{L_1} = \paren{\frac{1}{2} + \frac{1}{k}} \cdot n$, finalizing the proof as $\MM(G) = n$. 
\end{proof}

\section{New Properties of Edge Degree Constrained Subgraphs} \label{sec:edcs}

We study further properties of EDCS in this section. Although EDCS was used prior to our work, {all the properties proven in this section are entirely new to this paper} and look at the EDCS from a different vantage point. 

Previous work in~\cite{BernsteinS15,BernsteinS16} studied the EDCS from the perspective of how large of matching it contains and how it can be maintained efficiently in a dynamically changing graph. In this paper, 
we prove several new interesting structural properties of the EDCS itself. 
In particular, while it is easy to see that in terms of edge sets there
can be many different EDCS of some fixed graph $G(V,E)$ 
(consider $G$ being a complete graph), we show that the degree distributions
of every EDCS (for the same parameters $\beta$ and $\beta^-$) are almost the same.
In other words, the degree of any vertex $v$ is almost the same in every EDCS of $G(V,E)$.
This is in sharp contrast with similar objects such as $b$-matchings, which can vary a lot within the same graph. 
This semi-uniqueness renders the EDCS extremely robust under sampling and 
composition as we prove next in this section.  

These new structural results on EDCS are the main properties that allows their use in our coresets and parallel algorithms in the rest of the paper. In fact, our parallel algorithms 
in Section~\ref{sec:mpc-algs} are entirely based on these results and do not rely at all on the fact that an EDCS contains a large matching (i.e., do not depend on Lemma~\ref{lem:edcs-matching} at all).

\subsection{Degree Distribution Lemma} \label{sec:edcs-ddl}



\begin{lemma}[Degree Distribution Lemma]\label{lem:ddl}
	Fix a graph $G(V,E)$ and parameters $\beta,\beta^- = (1-\lambda) \cdot \beta$ (for $\lambda < 1/100$). For any two subgraphs $A$ and $B$ that are $\EDCS{G,\beta,\beta^-}$, and any vertex $v \in V$, 
	\[\card{\deg{A}{v} - \deg{B}{v}} = O(\log{n}) \cdot \lambda^{1/2} \cdot \beta. \]
\end{lemma}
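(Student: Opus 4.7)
The plan is to show that if $\phi(v) := \deg{A}{v} - \deg{B}{v}$ is large at some vertex $v_0$, we derive a contradiction by combining a local propagation rule with a global quadratic bound.

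First, I would establish the key local property: for any edge $(u,v) \in A \setminus B$, \propone applied to $A$ gives $\deg{A}{u} + \deg{A}{v} \leq \beta$, while \proptwo applied to $B$ gives $\deg{B}{u} + \deg{B}{v} \geq \beta^-$; subtracting yields $\phi(u) + \phi(v) \leq \beta - \beta^- = \lambda\beta$. Symmetrically, $(u,v) \in B \setminus A$ gives $\phi(u) + \phi(v) \geq -\lambda\beta$. In particular, if $\phi(v_0) = k > 0$, then each of the $\geq k$ neighbors $u$ of $v_0$ along edges in $A \setminus B$ satisfies $\phi(u) \leq \lambda\beta - k$, so that $|\phi|$ propagates along alternating edges of $A \triangle B$ while decaying by only $\lambda\beta$ per step.

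Second, I would derive a global sum bound. Writing $X = A \setminus B$ and $Y = B \setminus A$, so that $\phi(v) = \deg{X}{v} - \deg{Y}{v}$, a double counting over the two edge sets gives
\begin{align*}
\sum_{v \in V} \phi(v)^2
&= \sum_{(u,v) \in X}\bigl(\phi(u)+\phi(v)\bigr) \;-\; \sum_{(u,v) \in Y}\bigl(\phi(u)+\phi(v)\bigr) \\
&\leq \lambda\beta \cdot \bigl(|X|+|Y|\bigr) \;\leq\; \lambda\beta^2 n,
\end{align*}
where the final inequality uses that by \propone both $|A|$ and $|B|$ are at most $\beta n/2$. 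By Markov, this implies $|\{v : |\phi(v)| \geq s\}| \leq \lambda\beta^2 n / s^2$ for every threshold $s$.

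Third, given a hypothetical vertex $v_0$ with $|\phi(v_0)| = T$, I would iterate the local propagation to amplify this into many high-$\phi$ vertices. Build an alternating BFS from $v_0$ through edges of $A \triangle B$ for $r = \Theta(T/(\lambda\beta))$ levels so that every vertex reached has $|\phi| \geq T/2$ (with sign depending on parity). Each vertex on level $i$ has at least $T - i\lambda\beta$ forward edges, and each vertex on level $i+1$ can absorb at most $2\beta$ such edges (since its total degree in $A \cup B$ is at most $2\beta$), giving a combinatorial lower bound on the distinct vertices reached. Balancing this lower bound against the Markov upper bound $4\lambda\beta^2 n/T^2$ will then force $T = O(\sqrt{\lambda}\,\beta\log n)$, with the $\sqrt{\lambda}$ factor inherited from the quadratic sum bound and the $\log n$ factor arising from the depth of propagation required.

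The main obstacle is the third step: because $T \leq \beta$, the BFS does not enjoy multiplicative growth in the naive sense (its forward branching factor is at most $T/(2\beta) \leq 1/2$), so a direct expansion argument is too weak. The workaround I anticipate is to avoid tracking distinct vertices level by level and instead either (i) count alternating walks of length $r$, using that each such walk yields a $\phi^2$-contribution and bounding the walk-count via the degree cap $\beta$, or (ii) dyadically bucket vertices by $|\phi|$ and invoke the sum bound bucket by bucket along the propagation. Either variant should amortize cleanly to the stated $O(\sqrt{\lambda}\beta\log n)$ bound.
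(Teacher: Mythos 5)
Your steps 1 and 2 are both correct. The local propagation inequality you derive (for $(u,v)\in A\setminus B$, \propone on $A$ and \proptwo on $B$ give $\phi(u)+\phi(v)\leq\lambda\beta$, and symmetrically for $B\setminus A$) is precisely the observation underlying the paper's Claim~\ref{clm:ddl-T}. Your global bound $\sum_v\phi(v)^2\leq\lambda\beta^2 n$ is a valid identity-plus-estimate, though the paper never uses anything like it.

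The gap is in step 3, and you have already put your finger on it: the BFS has no useful branching, since the per-vertex forward-to-absorption ratio is at most $T/(2\beta)<1$. Neither proposed workaround closes this. Walk counting inherits the same deficiency: there are at least $\prod_{i<r}(T-i\lambda\beta)$ alternating walks of length $r$, but a single endpoint can absorb up to $\beta^r$ of them, and the resulting lower bound $(T/(2\beta))^r$ on distinct endpoints is vacuous. Dyadic bucketing is not specified enough to check, and I do not see how it bypasses the branching deficit. The missing idea is that you should not try to branch at all; you should instead track the \emph{aggregate excess degree} $\sum_{v\in S}\phi(v)$ of the current high-$\phi$ set $S$ and show it is conserved (up to a $\lambda\beta$ loss) through a two-hop step, and that this excess, divided by the degree cap $\beta$, forces the set to grow multiplicatively by $1+\Omega(\lambda^{1/2})$. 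Concretely, the paper's Lemma~\ref{lem:ddl-increase} sets $T$ to be the $A\setminus B$-neighborhood of $S$ and $T'$ to be the $B\setminus A$-neighborhood of $T$, shows $\phi|_{T'}\geq D-2\lambda\beta$, and then counts edges: every vertex of $T$ has $\deg{B\setminus A}{\cdot}\geq\deg{A\setminus B}{\cdot}$, so the net flow from $T$ into $T'\setminus S$ via $B\setminus A$ is at least $\card{E_{A\setminus B}(S)}-\card{E_{B\setminus A}(S)}\geq\sum_{v\in S}\phi(v)\geq D\cdot\card{S}$; since every vertex of $T'\setminus S$ has degree at most $\beta$, this forces $\card{T'\setminus S}\geq D\card{S}/\beta\geq 2\lambda^{1/2}\card{S}$. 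Iterating this two-hop step $\lambda^{-1/2}\ln n$ times drives the set past $n$ vertices while only losing $O(\lambda^{1/2}\beta\log n)$ in the degree threshold, giving a contradiction. This is an argument purely about net edge flow against a degree cap, not about branching, and without it your step 3 does not go through. Your quadratic bound could in principle replace the paper's crude $\card{V}=n$ termination (you would only need the set to exceed $O(\lambda\beta^2 n/T^2)$ rather than $n$), but it does not substitute for the missing expansion lemma.
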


	In the rest of this section, we fix the parameters $\beta,\beta^-$ and the two EDCS $A$ and $B$ in Lemma~\ref{lem:ddl}. The general strategy of the proof is as follows. We start with a set $S_1$ of all vertices which has 
	the most difference in degree between $A$ and $B$. By considering the two-hop neighborhood of these vertices in $A$ and $B$, we show that there exists a set $S_2$ of vertices in $V$ such that the difference between
	the degree of vertices in $A$ and $B$ is almost the same as vertices in $S_1$, while size of $S_2$ is a constant factor larger than $S_1$. We then use this argument repeatedly to construct the next set $S_3$ and so on, whereby each set
	is larger than the previous one by a constant factor, while the gap between the degree of vertices in $A$ and $B$ remains almost the same as the previous set. As this geometric increase in the size of sets can only happen in a ``small number'' of steps
	(otherwise we run out of vertices), we obtain that the gap between the degree of vertices in $S_1$ could have not been ``too large'' to begin with. 
  We now formalize this argument, starting with a technical lemma which allows us to 
obtain each set $S_i$ from the set $S_{i-1}$ in the above argument. 
	
\begin{lemma}\label{lem:ddl-increase}
	Fix an integer $D > 2\lambda^{1/2} \cdot \beta$ and suppose $S \subseteq V$ is such that for all $v \in S$, we have $\deg{A}{v} - \deg{B}{v} \geq D$. Then, there exists a set of 
	vertices $S' \supseteq S$ such that $\card{S'} \geq (1+2\lambda^{1/2}) \cdot \card{S}$ and for all $v \in S'$, $\deg{A}{v} - \deg{B}{v} \geq D - 2\lambda \cdot \beta$. 
\end{lemma}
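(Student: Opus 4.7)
The plan is a two-hop expansion that alternates between edges in $A \setminus B$ and edges in $B \setminus A$, driven by the interplay between \propone\ and \proptwo\ of the two EDCSs. For each $v \in S$, let $X_v := \neighbor{A}{v} \setminus \neighbor{B}{v}$. The hypothesis $\deg{A}{v} - \deg{B}{v} \geq D$ forces $|X_v| \geq D$ (since $|\neighbor{A}{v} \cap \neighbor{B}{v}| \leq \deg{B}{v}$). For any $u \in X_v$, the edge $(v,u) \in A \setminus B$ combines \propone\ of $A$, giving $\deg{A}{u} \leq \beta - \deg{A}{v}$, with \proptwo\ of $B$, giving $\deg{B}{u} \geq \beta^- - \deg{B}{v}$, to yield
\[
\deg{B}{u} - \deg{A}{u} \;\geq\; -\lambda\beta + (\deg{A}{v} - \deg{B}{v}) \;\geq\; D - \lambda\beta,
\]
so every such $u$ is ``$B$-excess'' by at least $D - \lambda\beta$.

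Now iterate once more: for each $u$, let $Y_u := \neighbor{B}{u} \setminus \neighbor{A}{u}$, which has $|Y_u| \geq D - \lambda\beta$ by the analogous lower bound. For any $w \in Y_u$, the symmetric application of \propone\ of $B$ and \proptwo\ of $A$ to $(u,w) \in B \setminus A$ gives $\deg{A}{w} - \deg{B}{w} \geq D - 2\lambda\beta$, which is exactly the gap required for membership in $S'$. I therefore set
\[
S' \;:=\; S \cup \bigcup_{v \in S,\, u \in X_v} Y_u;
\]
by construction every vertex of $S'$ satisfies the required condition, and one observes that $Y_u \cap S = \emptyset$ because vertices in $Y_u$ are $A$-excess while members of $S$ that also lie in $Y_u$ would simultaneously be $B$-excess, a contradiction once $D > \lambda\beta$. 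Only the size bound $|S'| \geq (1 + 2\sqrt{\lambda})|S|$ remains.

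For this, I would double-count the two-paths $(v, u, w)$ with $v \in S$, $(v,u) \in A \setminus B$, and $(u,w) \in B \setminus A$. Summing $|X_v| \cdot |Y_u|$ yields a lower bound of at least $|S| \cdot D(D - \lambda\beta)$ on the number of such two-paths. For an upper bound per endpoint $w$, note that any intermediate $u$ lies in $\neighbor{B}{w} \setminus \neighbor{A}{w}$ (at most $\deg{B}{w}$ choices) and has $\deg{A}{u} \leq \beta - D + \lambda\beta$ by its reverse gap combined with $\deg{B}{u} \leq \beta$, which bounds the number of preceding $v$'s. Since any $w$ in the target set has $A$-excess gap $\geq D - 2\lambda\beta$, we also have $\deg{B}{w} \leq \beta - D + 2\lambda\beta$, so the number of two-paths terminating at $w$ is at most $(\beta - D + \lambda\beta)(\beta - D + 2\lambda\beta)$. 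Dividing the two bounds gives the growth factor.

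The main technical obstacle is the final algebraic step: showing that
\[
\frac{D(D - \lambda\beta)}{(\beta - D + \lambda\beta)(\beta - D + 2\lambda\beta)} \;\geq\; 2\sqrt{\lambda}
\]
under the assumption $D > 2\sqrt{\lambda}\beta$ and $\lambda < 1/100$ (inherited from Lemma~\ref{lem:ddl}). This ratio is tightest exactly when $D$ sits near its lower threshold, so the proof must exploit the precise form of $D > 2\sqrt{\lambda}\beta$ and may need to sharpen one side of the counting---for instance, by weighting two-paths by the actual source gap $d_v := \deg{A}{v} - \deg{B}{v}$ rather than the uniform lower bound $D$, or by using that $|X_v|$ is strictly larger than $D$ whenever $\neighbor{B}{v}$ is not nested inside $\neighbor{A}{v}$. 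This careful balancing of the two degree-constraint directions is the one nontrivial calculation, but everything else follows the two-hop template outlined above.
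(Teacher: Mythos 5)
Your construction of the two-hop neighborhood and the derivation of the gap bounds match the paper's Claim~\ref{clm:ddl-T} exactly: both establish that one-hop vertices $u$ (your $X_v$, the paper's $T$) have $\deg{B}{u} - \deg{A}{u} \geq D - \lambda\beta$ and two-hop vertices $w$ (your $Y_u$, the paper's $T'$) have $\deg{A}{w} - \deg{B}{w} \geq D - 2\lambda\beta$, and your choice $S' = S \cup \bigcup Y_u$ is the paper's $S' = S \cup T'$. However, there are two problems.

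First, a minor error: your claim that $Y_u \cap S = \emptyset$ is wrong, and the justification is backwards. Vertices in $Y_u$ (i.e., $T'$) are $A$-excess, and so are vertices in $S$, so there is no contradiction; the paper explicitly notes that $S$ and $T'$ may intersect. It is the intermediate set $T$ that is $B$-excess and hence disjoint from $S$. Because $T'$ may meet $S$, the size bound must lower-bound $\card{T' \setminus S}$, not $\card{T'}$, which your counting does not address.

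Second, and more importantly, the final inequality you flag really does fail, so the two-path double count cannot close the argument. At the threshold $D = 2\lambda^{1/2}\beta$, your ratio
\[
\frac{D(D - \lambda\beta)}{(\beta - D + \lambda\beta)(\beta - D + 2\lambda\beta)}
\;\approx\; \frac{4\lambda\beta^2}{\beta^2} \;=\; 4\lambda,
\]
which is strictly less than $2\lambda^{1/2}$ whenever $\lambda < 1/4$, and in particular throughout the regime $\lambda < 1/100$ used in Lemma~\ref{lem:ddl}. The loss comes from the extra factor $\deg{A}{u} \leq \beta - D + \lambda\beta$ in the per-$w$ upper bound, which does not appear if one counts \emph{edges} rather than two-paths. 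The paper's proof does exactly this: it shows
\[
\card{E_{B \setminus A}(T, T' \setminus S)} \;\geq\; \card{E_{B\setminus A}(T)} - \card{E_{B\setminus A}(T,S)} \;\geq\; \card{E_{A\setminus B}(T)} - \card{E_{B\setminus A}(S)} \;\geq\; \card{E_{A\setminus B}(S)} - \card{E_{B\setminus A}(S)} \;\geq\; \card{S} \cdot D,
\]
using that every vertex in $T$ has strictly larger $B\setminus A$-degree than $A\setminus B$-degree, that every $A\setminus B$-edge incident on $S$ is also incident on $T$, and that each $v \in S$ has $\deg{A}{v} - \deg{B}{v} \geq D$. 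Since every vertex in $T' \setminus S$ has $B$-degree at most $\beta$, this immediately gives $\card{T' \setminus S} \geq \card{S}\cdot D/\beta \geq 2\lambda^{1/2}\card{S}$. To repair your proof you should switch from counting two-paths $(v,u,w)$ to counting edges $(u,w) \in B\setminus A$ between $T$ and $T' \setminus S$, and use this telescoping lower bound together with the trivial per-$w$ degree cap $\beta$; the explicit control on $\deg{A}{u}$ and $\deg{B}{w}$ you derived is then not needed for the size bound.
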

\begin{proof}
	We define the following two sets $T$ and $T'$: 
	\begin{itemize}
		\item $T$ is the set of all neighbors of vertices in $S$ using \emph{only} the edges in $A \setminus B$. In other words, $T := \set{v \in V \mid \exists u \in S \wedge (u,v) \in A \setminus B}$.
		\item $T'$ is the set of all neighbors of vertices in $T$ using \emph{only} the edges in $B \setminus A$. In other words, $T' := \set{w \in V \mid \exists v \in T \wedge (v,w) \in B \setminus A}$. 
	\end{itemize}

	We start by proving the following property on the degree of vertices in the sets $T$ and $T'$. 
	\begin{claim}\label{clm:ddl-T}
		We have,
		\begin{itemize}
			\item for all $v \in T$,  $\deg{B}{v} - \deg{A}{v} \geq D - \lambda \cdot \beta$. 
			\item for all $w \in T'$, $\deg{A}{w} - \deg{B}{w} \geq D - 2\lambda \cdot \beta$. 
		\end{itemize}
	\end{claim}
	\begin{proof}
		For the first part, since $v \in T$, it means that there exists an edge $(u,v) \in A \setminus B$ such that $u \in S$. Since $(u,v)$ belongs to $A$, by $\propone$ of an EDCS we have 
		$\deg{A}{v} \leq \beta - \deg{A}{u} \leq \beta - \deg{B}{u} - D$. On the other hand, since $(u,v)$ does not belong to $B$, by $\proptwo$ of an EDCS we have
		$\deg{B}{v} \geq \beta - \lambda \cdot \beta - \deg{B}{u}$, completing the proof for vertices in $T$. 
		
		For the second part, since $w \in T'$, it means that there exists an edge $(v,w) \in B \setminus A$ such that $v \in T$. Since $(v,w)$ does not belong to $A$, by $\proptwo$ of an EDCS we have 
		$\deg{A}{w} \geq \beta - \lambda \cdot \beta - \deg{A}{v}$.  Moreover, since $(u,v)$ belongs to $B$, by $\propone$ of an EDCS, we have, $\deg{B}{w} \leq \beta  - \deg{B}{v}$. 
		This means that $\deg{A}{w} - \deg{B}{w} \geq \deg{B}{v} - \deg{A}{v} - \lambda \cdot \beta$ which is at least $D - 2\lambda \cdot \beta$ by the first part. 
	\end{proof}
	
	Notice that since $D > 2\lambda \cdot \beta$, by Claim~\ref{clm:ddl-T}, for any vertex $v \in T$, we have $\deg{B}{v} > \deg{A}{v}$ and hence $S \cap T = \emptyset$ (similarly, $T \cap T' = \emptyset$, but $S$ and $T'$ may intersect). 
	We define the set $S'$ in the lemma statement to be $S' := S \cup T'$. The bound on the degree of vertices in $S'$ follows immediately from Claim~\ref{clm:ddl-T} (recall that vertices in $S$ already satisfy the degree requirement for the set $S'$). 
	In the following, we show that $\card{T' \setminus S} \geq 2\lambda^{1/2} \cdot \card{S}$, which finalizes the proof. 
	
	Recall that  $E_{A \setminus B}(S)$ and $E_{A \setminus B}(S,T)$ denote the set of edges in subgraph $A \setminus B$ incident on vertices $S$, and between vertices $S$ and $T$, respectively. 
	We have, 
	\begin{align*}
		\card{E_{B \setminus A}(T,T' \setminus S)} &= \card{E_{B \setminus A}(T)} - \card{E_{B \setminus A}(T,S)} \tag{as all the edges in $B \setminus A$ that are incident on $T$ are going to $T'$}\\
		&\geq \card{E_{A \setminus B}(T)} - \card{E_{B \setminus A}(T,S)}  \tag{as by Claim~\ref{clm:ddl-T}, the degree of vertices in $T$ is larger in $B \setminus A$ compared to $A \setminus B$} \\
		&\geq \card{E_{A \setminus B}(S)} - \card{E_{B \setminus A}(S)} \tag{as all edges in $A \setminus B$ incident on $S$ are also incident on $T$}  \\
		&\geq \card{S} \cdot D \tag{by the assumption on the degree of vertices in $S$ in subgraphs $A$ and $B$}
	\end{align*}
	
	Finally, since $B$ is an EDCS, the maximum degree of any vertex in $T' \setminus S$ is at most $\beta$ and hence there should be at least $\card{S} \cdot \frac{D}{\beta} \geq 2\lambda^{1/2} \cdot \card{S}$ vertices in $T' \setminus S$  
	(as $D > 2\lambda^{1/2}\cdot \beta$). 
\end{proof}


\begin{proof}[Proof of Lemma~\ref{lem:ddl}]
	Suppose towards a contradiction that there exists a vertex $v \in V$ s.t. $D := \deg{A}{v} - \deg{B}{v} \geq 3\ln{(n)} \cdot \lambda^{1/2}\cdot \beta$ (the other case is symmetric). Let $D_0 = D$ and $S_0 = \set{v}$ and 
	for $i = 1$ to $t:=\lambda^{-1/2} \cdot \paren{{\ln{(n)}+1}}$: define the set $S_i$ and integer $D_i$ by applying Lemma~\ref{lem:ddl-increase} to $S_{i-1}$ and $D_{i-1}$ (i.e., $S_i = S'$ and $D_i = D_{i-1} - 2\lambda \cdot \beta$). 
	By the lower bound on the value of $D$, for any 
	$i \in [t]$, we have that $D_i \geq D - i \cdot 2\lambda \cdot \beta > 2\lambda^{1/2} \cdot \beta$, and hence we can indeed apply Lemma~\ref{lem:ddl-increase}. As a result, we have, 
	\begin{align*}
		\card{S_{t}} \geq \paren{1+2\lambda^{1/2}} \cdot \card{S_{t-1}} \geq \paren{1+2\lambda^{1/2}}^{t} \cdot \card{S_0} \geq \exp{\paren{\lambda^{1/2} \cdot t}} > \exp\paren{\ln{(n)}} = n.
	\end{align*}
	which is a contradiction as there are only $n$ vertices in the graph $G$. Consequently, we obtain that for any vertex $v$, $\card{\deg{A}{v} - \deg{B}{v}} = O(\log{n}) \cdot \lambda^{1/2} \cdot \beta$, finalizing the proof.  
\end{proof}

\subsection{EDCS in Sampled Subgraphs} \label{sec:edcs-sampled}

In this section, we prove two lemmas regarding the structure of different EDCS across sampled subgraphs. The first lemma concerns edge sampled subgraphs. We show that the degree distributions of any two EDCS for 
two different edge sampled subgraphs of $G$ is almost the same no matter how the two EDCS are selected or even if the choice of the two subgraphs are \emph{not} independent. 
This Lemma is is used in our Result \ref{res:coresets} on randomized coresets (see Section \ref{sec:coresets-matching-vc}).

\begin{lemma}[EDCS in Edge Sampled Subgraphs] \label{lem:edcs-edge-sampled}
	Fix any graph $G(V,E)$ and $p \in (0,1)$. Let $G_1$ and $G_2$ be two \emph{edge sampled subgraphs} of $G$ with probability $p$ (chosen \emph{not} necessarily independently). Let $H_1$ and $H_2$ be 
	arbitrary EDCSs of $G_1$ and $G_2$ with parameters $\paren{\beta, (1-\lambda) \cdot \beta}$. Suppose $\beta \geq 750 \cdot \lambda^{-2} \cdot \ln{(n)}$, then, with probability $1-4/n^9$, simultaneously for all $v \in V$: 
	\begin{align*}
		\card{\deg{H_1}{v} - \deg{H_2}{v}} \leq O(\log{n}) \cdot \lambda^{1/2} \cdot \beta. 
	\end{align*}
\end{lemma}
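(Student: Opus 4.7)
The plan is to reduce Lemma~\ref{lem:edcs-edge-sampled} to the Degree Distribution Lemma (Lemma~\ref{lem:ddl}) applied to the common ambient graph $G$. Concretely, we want to show that with probability at least $1 - 2/n^9$, the EDCS $H_1$ of $G_1$ is also, when viewed as a subgraph of $G$, an EDCS of $G$ with slightly weaker parameters $(\beta,(1-\lambda')\beta)$ for some $\lambda' = O(\lambda)$; a symmetric statement holds for $H_2$. A union bound over these two events, combined with Lemma~\ref{lem:ddl} applied to $H_1$ and $H_2$ as EDCSs of $G$ with parameter $\lambda'$, then yields $\card{\deg{H_1}{v} - \deg{H_2}{v}} = O(\log n)\cdot \sqrt{\lambda'}\cdot\beta = O(\log n)\cdot\sqrt{\lambda}\cdot\beta$ simultaneously for every $v$, with total failure probability at most $4/n^9$.

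Property~\propone\ is inherited for free: $H_1 \subseteq G_1 \subseteq G$, and \propone\ depends only on internal degrees in $H_1$, so \propone\ of $H_1$ in $G_1$ automatically gives \propone\ of $H_1$ in $G$. The real work is in establishing \proptwo\ with respect to $G$: for every edge $(u,v) \in G \setminus H_1$, we need $\deg{H_1}{u} + \deg{H_1}{v} \ge (1-\lambda')\beta$. For $(u,v)\in G_1$, this is immediate from \proptwo\ of $H_1$ in $G_1$; the challenging case is $(u,v)\in G\setminus G_1$, where $H_1$ imposes no direct constraint.

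For this $G\setminus G_1$ case, we appeal to concentration. Fix a vertex $v$ and partition its $G$-neighbors into $O(\beta)$ buckets based on their degree in $H_1$ (degrees lie in $[0,\beta]$ by \propone). Since the $G_1$-neighbors of $v$ form a rate-$p$ sample of its $G$-neighbors, each bucket's count in $G_1$ concentrates around $p$ times the count in $G$ by Chernoff. Combining this with \proptwo\ of $H_1$ in $G_1$ — which rules out many combinations of $(\deg{H_1}{v},\deg{H_1}{u})$ for $(u,v)\in G_1\setminus H_1$ — and union bounding over vertices and buckets, we conclude that with high probability no $G$-neighbor $u$ of $v$ can witness a violation of \proptwo\ at the weaker threshold $(1-\lambda')\beta$. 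The hypothesis $\beta \ge 750\lambda^{-2}\ln n$ is precisely what is needed to make the Chernoff deviations of size $O(\lambda\beta)$ while paying only an $n^{-O(1)}$ probability per vertex-bucket pair.

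The main obstacle is this concentration step. Two subtleties arise: (i) $H_1$ is a deterministic but intricate function of $G_1$, so one cannot naively apply Chernoff to quantities involving $H_1$-degrees — the standard workaround is to condition on (or enumerate over) the degree profile that $H_1$ induces on $v$'s $G$-neighborhood, after which the randomness of $G_1$ reduces to an independent edge-sample on a fixed set; and (ii) vertices $u$ with $\deg{G}{u}$ already smaller than $(1-\lambda')\beta$ may fail \proptwo\ for trivial reasons, but for such $u$ the quantity $\card{\deg{H_1}{u}-\deg{H_2}{u}}$ is bounded by $\deg{G}{u}$ and is therefore already within the target, so they are handled separately. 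With these two points resolved, the reduction to Lemma~\ref{lem:ddl} closes the proof.
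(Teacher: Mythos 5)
Your plan is to upgrade each $H_i$ to an EDCS of the ambient graph $G$ (with slightly weaker parameters) and then invoke the Degree Distribution Lemma directly on $G$. This is a genuinely different route from the paper's, but it has two gaps, and I do not think the argument can be rescued in the form you describe.

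\textbf{Low-degree vertices are not handled by your subtlety (ii).} You claim that if $\deg_G(u) < (1-\lambda')\beta$, then $\card{\deg{H_1}{u} - \deg{H_2}{u}} \le \deg_G(u)$ ``is therefore already within the target.'' But the target is $O(\log n)\cdot\lambda^{1/2}\cdot\beta$, which for the parameter regime in which this lemma is actually used ($\lambda = \Theta((\eps/\log n)^2)$ in Theorem~\ref{thm:matching-edcs-coreset}) evaluates to $O(\eps\beta) \ll \beta$. A vertex with $\deg_G(u) = \beta/2$, say, is ``low-degree'' in your sense, yet the trivial bound $\card{\deg{H_1}{u} - \deg{H_2}{u}} \le \beta/2$ is far from the stated conclusion. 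More importantly, the Degree Distribution Lemma is an all-or-nothing statement: it requires both subgraphs to be genuine EDCSs of the same graph. You cannot discard a few inconvenient vertices and still invoke it; an edge $(u,v)$ with both endpoints of modest degree that simply wasn't sampled into $G_1$ can make $H_1$ fail \proptwo\ with respect to $G$, period, and then the DDL does not apply at all.

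\textbf{The circularity in the concentration step is not resolved.} You partition $v$'s $G$-neighbors into buckets by their $H_1$-degree and then want to argue that each bucket's $G_1$-count concentrates around $p$ times its $G$-count. But the bucket assignment is a function of $H_1$, which is a function of $G_1$, which is the very random object you are trying to concentrate over. Your proposed workaround is to ``condition on (or enumerate over) the degree profile that $H_1$ induces.'' Conditioning destroys the independence of the edge sample (once you fix $\{\deg{H_1}{u}\}$ over $v$'s $G$-neighborhood, the event ``edge $(u,v)$ sampled'' is heavily constrained and correlated), and the degree profile ranges over $\beta^{\deg_G(v)}$ possibilities, so enumeration plus union bound blows up well beyond any usable failure probability. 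This dependency is exactly the obstacle the paper's proof is designed to avoid.

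\textbf{What the paper does instead.} The paper's proof introduces an anchor: a single fixed EDCS $H$ of $G$ with inflated parameters $\beta_H \approx \beta/p$, chosen before any randomness is revealed. Because $H$ is fixed, the degree of each vertex in $\hH_i := H\cap G_i$ is a clean sum of independent indicators, so Chernoff applies directly and shows (Claim~\ref{clm:edcs-edge-sampled}) that $\hH_i$ is, with high probability, an EDCS of $G_i$ with parameters $(\beta,(1-\lambda)\beta)$ and with $\deg{\hH_i}{v} \approx p\cdot\deg{H}{v}$. Then Lemma~\ref{lem:ddl} is applied \emph{within each $G_i$}, comparing the arbitrary $H_i$ to $\hH_i$, not within $G$. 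Since both $\hH_1$ and $\hH_2$ are tethered to the same fixed $H$, a triangle inequality closes the argument. This sidesteps both of your issues: there is no circularity (the randomness acts on a fixed object $H$), and low-degree vertices are fine because the DDL comparison is inside $G_i$, where $H_i$ is an EDCS by hypothesis. If you want to salvage your direction, the missing idea you need is precisely this fixed external anchor.
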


We also prove a qualitatively similar lemma for vertex sampled subgraphs. This is needed in Result \ref{res:mpc} for parallel algorithms (see Section \ref{sec:mpc-algs}). The main difference here is that there will be a huge gap between the degree of a vertex between the two EDCS if the vertex is sampled in one subgraph but not the other one. However, 
we show that the degree of vertices that are sampled in both subgraphs are almost the same across the two different (and arbitrarily chosen) EDCS for the subgraphs.  

\begin{lemma}[EDCS in Vertex Sampled Subgraphs]\label{lem:edcs-vertex-sampled}
	Fix any graph $G(V,E)$ and $p \in (0,1)$. Let $G_1$ and $G_2$ be two \emph{vertex sampled subgraphs} of $G$ with probability $p$ (chosen \emph{not} necessarily independently). Let $H_1$ and $H_2$ be 
	arbitrary EDCSs of $G_1$ and $G_2$ with parameters $\paren{\beta, (1-\lambda)\beta}$. If $\beta \geq 750 \cdot \lambda^{-2} \cdot \ln{(n)}$, then, with probability $1-4/n^9$, simultaneously for all $v \in G_1 \cap G_2$: 
	\begin{align*}
		\card{\deg{H_1}{v} - \deg{H_2}{v}} \leq O(\log{n}) \cdot \lambda^{1/2} \cdot \beta. 
	\end{align*}
\end{lemma}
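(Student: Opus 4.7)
The plan is to adapt the proof of the Degree Distribution Lemma (Lemma~\ref{lem:ddl}) to compare two EDCSs that live in different sampled subgraphs. For a vertex $v \in G_1 \cap G_2$, each edge of $H_1$ incident on $v$ has its other endpoint in $G_1$, and only those edges whose other endpoint also lies in $G_2$ can trigger \propone\ or \proptwo\ of $H_2$. The crux is therefore a concentration argument guaranteeing that a $(p - o(1))$-fraction of such edges survive into $G_2$, after which a DDL-style iterative blow-up argument can proceed.

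Assume toward contradiction that some $v \in G_1 \cap G_2$ satisfies $\deg{H_1}{v} - \deg{H_2}{v} \geq D$ with $D$ significantly larger than $\log(n)\cdot \lambda^{1/2} \cdot \beta$ (the reverse case is symmetric). Following the template of Lemma~\ref{lem:ddl-increase}, I will iteratively construct sets $S_0 = \{v\} \subset S_1 \subset \cdots \subset S_t$ all contained in $G_1 \cap G_2$, each a factor $(1 + 2\lambda^{1/2})$ larger than the last, with every $u \in S_i$ satisfying $\deg{H_1}{u} - \deg{H_2}{u} \geq D - O(i \cdot \lambda \beta)$. At step $i$ the intermediate set $T$ is obtained by two-hopping from $S_{i-1}$ along edges of $H_1 \setminus H_2$ whose other endpoint lies in $G_2$, and then $T'$ along edges of $H_2 \setminus H_1$ whose other endpoint lies in $G_1$; this keeps all sets inside $G_1 \cap G_2$ and ensures that both EDCS properties can be legitimately invoked at every edge traversed. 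After $t = O(\lambda^{-1/2} \log n)$ steps, $|S_t|$ would exceed $n$, contradicting $S_t \subseteq V$ and forcing the initial gap $D$ to be $O(\log n)\cdot \lambda^{1/2} \cdot \beta$.

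To make the growth step actually go through, I need that for every $u \in V$ the number of $H_1$-neighbors of $u$ that fall in $G_2$ is at least $(p - O(\sqrt{\log(n)/\beta})) \cdot \deg{H_1}{u}$, and the symmetric statement for $H_2$ and $G_1$. Since $H_1$ is a deterministic function of $G_1$ and has maximum degree at most $\beta$, I would first condition on $G_1$ (and hence on $H_1$): then for each fixed $u$ the count is a sum of $\deg{H_1}{u}$ vertex-membership indicators, each with marginal $p$, and a Chernoff bound yields the desired deviation bound of order $\sqrt{\beta \log n}$. The hypothesis $\beta \geq 750 \lambda^{-2} \ln n$ is exactly what is needed to make this deviation negligible compared to the per-step degradation $O(\lambda \beta)$ of the degree gap, and a union bound over the $n$ vertices and the two symmetric directions ($H_1$ vs.\ $H_2$ and $G_1$ vs.\ $G_2$) yields the claimed $1-4/n^9$ probability.

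The main obstacle will be that $G_1$ and $G_2$ are allowed to be sampled in a coupled way. I expect to resolve this by observing that the Chernoff concentration only requires the conditional distribution of membership of each vertex in $G_2$ (given $G_1$) to be a product of Bernoulli$(p)$'s over vertices, which is already implied by the assumption that each $G_i$ is marginally a vertex-sampled subgraph with probability $p$; any additional coupling between $G_1$ and $G_2$ can be absorbed into the conditioning. If a particular coupling breaks the conditional independence across different $u$'s, one can fall back on a martingale/Azuma argument analogous to Lemma~\ref{lem:concentration}, at the cost only of an extra constant factor inside the $O(\log n)$ term.
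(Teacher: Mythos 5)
Your plan has a fatal flaw at its core: the cross-graph adaptation of the Degree Distribution Lemma cannot get off the ground, because the two EDCSs $H_1$ and $H_2$ live in graphs whose edge sets overlap far too little. For a vertex $u \in G_1 \cap G_2$, each of the at most $\beta$ edges of $H_1$ at $u$ has its other endpoint in $G_2$ only with probability $p$, so the expected number of "surviving" edges is at most $p\beta$. In the regime where the lemma is actually applied (Section~\ref{sec:mpc-edcs}), $p$ is polynomially small (roughly $n^{-1/2}\polylog(n)$) while $\beta = \polylog(n)$, so $p\beta \ll 1$ and with probability $1-O(p\beta)$ \emph{no} $H_1$-neighbor of $u$ lands in $G_2$ at all. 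Your claimed concentration bound, that at least $(p - O(\sqrt{\log(n)/\beta}))\cdot\deg{H_1}{u}$ neighbors survive, is vacuous (negative) whenever $p \lesssim \lambda$, and the hypothesis $\beta \geq 750\lambda^{-2}\ln n$ does not rescue it: concentration of a Bernoulli sum with mean $p\beta$ requires the \emph{mean}, not the number of trials, to be $\Omega(\log n)$. Even for constant $p$ the accounting in the growth step breaks: the surplus available for the two-hop argument is $p\cdot\deg{H_1}{u} - \deg{H_2}{u}$ (only the surviving $H_1$-edges can trigger \proptwo of $H_2$), not $\deg{H_1}{u} - \deg{H_2}{u} \geq D$, so the gap $D$ does not propagate from $S_{i-1}$ to $S_i$.

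The paper's proof avoids comparing EDCSs across different graphs altogether. It fixes a single reference $H := \EDCS{G,\beta_H,\beta_H - 1}$ of the \emph{whole} graph with the inflated parameter $\beta_H = (1-\lambda/2)\beta/p$, so that vertex degrees in $H$ are on the order of $\beta/p$; sampling at rate $p$ then yields degrees around $\beta \geq 750\lambda^{-2}\ln n$, which is exactly where Chernoff concentration kicks in. This shows $\hH_i := H \cap G_i$ is itself an $\EDCS{G_i,\beta,(1-\lambda)\beta}$ with $\deg{\hH_i}{v} \approx p\cdot\deg{H}{v}$. Now Lemma~\ref{lem:ddl} is applied \emph{within} each $G_i$, where $H_i$ and $\hH_i$ are two EDCSs of the \emph{same} graph, and the conclusion follows by the triangle inequality through the common quantity $p\cdot\deg{H}{v}$. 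The missing idea in your proposal is precisely this common reference EDCS with parameters scaled by $1/p$; without it, no amount of conditioning or martingale machinery will make the direct $H_1$-versus-$H_2$ comparison work.
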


The proof of both these lemmas are along the following lines. We start with an EDCS $H$ of the original graph $G$ with parameters (almost) $\paren{\beta/p,\beta^-/p}$.  We then consider the set of edges from $H$ in each of the sampled
subgraphs $G_1$ and $G_2$, i.e., the two subgraphs $H'_1 := G_1 \cap H$ and $H'_2 := G_2 \cap H$. We use the randomness in the process of sampling subgraphs $G_1$ and $G_2$ to prove that with high probability both $H'_1$ and $H'_2$ form an
EDCS for $G_1$ and $G_2$, respectively, with parameters $\paren{\beta,\beta^-}$. Finally, we use our degree distribution lemma from Section~\ref{sec:edcs-ddl} to argue that for any arbitrary EDCS $H_1$ (resp. $H_2$) of $G_1$ (resp. $G_2$), the degree 
distribution of $H_1$ (resp. $H_2$) is close to $H'_1$ (resp. $H'_2$). Since the degree distributions of $H'_1$ and $H'_2$ are close to each other already (as they are both sampled subgraphs of $H$), this finalizes the proof. 

There are some technical differences in implementing the above intuition between the edge sampled and vertex sampled subgraphs and hence we provide a separate proof for each case.

\subsubsection{EDCS in Edge Sampled Subgraphs: Proof of Lemma~\ref{lem:edcs-edge-sampled}}
\begin{proof}[Proof of Lemma~\ref{lem:edcs-edge-sampled}]
	We first prove that edge sampling an EDCS results in another EDCS for the sampled subgraph. 
	 
	\begin{claim}\label{clm:edcs-edge-sampled}
		Let $H$ be an $\EDCS{G,\beta_H,\beta^-_H}$ for parameters $\beta_H := (1-\frac{\lambda}{2}) \cdot \frac{\beta}{p}$ and $\beta^-_H := \beta_H-1$. 
		Suppose $G_p := G^E_p(V,E_p)$ is an edge sampled subgraph of $G$ and $H_p := H \cap G_p$; then, with probability $1-2/n^9$: 
		\begin{enumerate}
			\item For any vertex $v \in V$, $\card{\deg{H_p}{v} - p \cdot \deg{H}{v}} \leq \frac{\lambda}{5} \cdot \beta$.
			\item $H_p$ is an EDCS of $G_p$ with parameters $\paren{\beta,(1-\lambda) \cdot \beta}$.  
		\end{enumerate}
	\end{claim}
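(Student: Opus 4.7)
The plan is to prove both parts of the claim together, deriving Part 2 as a direct consequence of the concentration bound established in Part 1.

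For Part 1, I would fix a vertex $v \in V$ and observe that $\deg_{H_p}(v)$ is a sum of $\deg_H(v)$ independent Bernoulli$(p)$ random variables (one per edge of $H$ incident to $v$, since each such edge is independently retained in $G_p$). Its expectation equals $p \cdot \deg_H(v)$, and by $\propone$ of the EDCS $H$, $\deg_H(v) \leq \beta_H = (1-\lambda/2)\beta/p$, so this expectation is at most $(1-\lambda/2)\beta$. A standard Chernoff bound (additive form) then gives
\[
\Pr\bigl(\card{\deg_{H_p}(v) - p \cdot \deg_H(v)} > \tfrac{\lambda}{5}\beta\bigr) \leq 2 \exp\bigl(-\Omega(\lambda^2 \beta)\bigr).
\]
The hypothesis $\beta \geq 750 \lambda^{-2} \ln n$ makes this probability at most $2/n^{10}$, and a union bound over all $n$ vertices yields failure probability at most $2/n^9$. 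Condition on the complementary event in the remainder.

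For Part 2, I need to verify \propone and \proptwo for $H_p$ as a subgraph of $G_p$ with parameters $(\beta, (1-\lambda)\beta)$. For \propone, take any $(u,v) \in H_p \subseteq H$; by \propone of $H$, $\deg_H(u)+\deg_H(v) \leq \beta_H$, so Part 1 gives
\[
\deg_{H_p}(u) + \deg_{H_p}(v) \leq p \cdot \beta_H + \tfrac{2\lambda}{5}\beta = (1-\tfrac{\lambda}{2})\beta + \tfrac{2\lambda}{5}\beta = (1 - \tfrac{\lambda}{10})\beta \leq \beta.
\]
For \proptwo, take any $(u,v) \in G_p \setminus H_p$; since $H_p = H \cap G_p$, this implies $(u,v) \in G \setminus H$, so \proptwo of $H$ yields $\deg_H(u)+\deg_H(v) \geq \beta_H - 1$. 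By Part 1,
\[
\deg_{H_p}(u) + \deg_{H_p}(v) \geq p(\beta_H - 1) - \tfrac{2\lambda}{5}\beta \geq (1-\tfrac{\lambda}{2})\beta - 1 - \tfrac{2\lambda}{5}\beta = (1 - \tfrac{9\lambda}{10})\beta - 1,
\]
which is at least $(1-\lambda)\beta$ provided $\lambda\beta/10 \geq 1$; this follows from the assumption $\beta \geq 750\lambda^{-2}\ln n \geq 10/\lambda$ for $\lambda < 1$.

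There is no real obstacle — the argument is a clean Chernoff-plus-union-bound followed by a book-keeping check of the two EDCS properties. The only subtlety is managing the slack: the $\lambda/2$ slack built into $\beta_H$ (versus the naive target $\beta/p$) exactly covers both the $\lambda/5$ deviation on each endpoint in \propone and the $\lambda/5$ deviation on each endpoint plus the small $-1$ loss in \proptwo, which is why the constants in the claim statement are arranged that way.
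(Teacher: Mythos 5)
Your proof is correct and follows essentially the same route as the paper's: fix a vertex, apply Chernoff to the independently retained incident edges, union bound over all $n$ vertices, condition on the resulting concentration event, and then verify both EDCS properties by the same arithmetic (using $p\beta_H = (1-\lambda/2)\beta$ and $p \leq 1$ to absorb the $-1$ loss from $\beta^-_H = \beta_H - 1$). The bookkeeping of the slacks matches the paper's calculation exactly.
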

	\begin{proof}
	For any vertex $v \in V$, $\Ex{\bracket{\deg{H_p}{v}}} = p \cdot \deg{H}{v}$ and 
	$\deg{H}{v} \leq \beta_H$ by $\propone$ of EDCS $H$. Moreover, since each neighbor of $v$ in $H$ is sampled in $H_p$ independently, by Chernoff bound (Proposition~\ref{prop:chernoff}), we have, 
	\begin{align*}
		\Pr\paren{\card{\deg{H_p}{v} - p \cdot \deg{H}{v}} \geq \frac{\lambda}{5} \cdot \beta} \leq 2\cdot \exp\paren{-\frac{\lambda^2  \cdot \beta}{75}}  \leq 2\cdot \exp\paren{-10\ln{n}} = \frac{2}{n^{10}},
	\end{align*}
	where the second inequality is by the lower bound on $\beta$ in Lemma~\ref{lem:edcs-vertex-sampled} statement. 
	In the following, we condition on the event that: 
	\begin{align}
		\forall v \in V~~ \card{\deg{H_p}{v} - p \cdot \deg{H}{v}} \leq \frac{\lambda}{5} \cdot \beta. \label{eq:edge-degree-event}
	\end{align}
	This event happens with probability at least $1-2/n^9$ by above equation and a union bound on $\card{V} = n$ vertices. This finalizes the proof of the first part of the claim. 
	We are now ready to prove that $H_p$ is indeed an $\EDCS{G_p,\beta,(1-\lambda)\cdot \beta}$ conditioned on this event. 
	
	Consider any edge $(u,v) \in H_p$. Since $H_p \subseteq H$, $(u,v) \in H$ as well. Hence, we have, 
	\begin{align*}
		\deg{H_p}{u} + \deg{H_p}{v} \Leq{Eq~(\ref{eq:edge-degree-event})} p \cdot \Paren{\deg{H}{u} + \deg{H}{v}} + \frac{2\lambda}{5} \cdot \beta \leq p \cdot \beta_H +  \frac{2\lambda}{5} \cdot \beta  
		= (1-\frac{\lambda}{2}) \cdot \beta + \frac{2\lambda}{5} \cdot \beta < \beta,
	\end{align*}
	where the second inequality is by \propone of EDCS $H$ and the equality is by the choice of $\beta_H$. As a result, $H_p$ satisfies \propone of EDCS for parameter $\beta$.
	
	Now consider an edge $(u,v) \in G_p \setminus H_p$. Since $H_p = G_p \cap H$,  $(u,v) \notin H$ as well. Hence,  
	\begin{align*}
		\deg{H_p}{u} + \deg{H_p}{v} &\Geq{Eq~(\ref{eq:edge-degree-event})} p \cdot \Paren{\deg{H}{u} + \deg{H}{v}} - \frac{2\lambda}{5} \cdot  \beta \geq p \cdot \beta^-_H - \frac{2\lambda}{5} \cdot  \beta \\
		&= (1-\frac{\lambda}{2}) \cdot \beta - p - \frac{2\lambda}{5} \cdot \beta 
		> (1-\lambda) \cdot \beta,
	\end{align*}
	where the second inequality is by \proptwo of EDCS $H$ and the equality is by the choice of $\beta^-_H$. 
	As such, $H_p$ satisfies \proptwo of EDCS for parameter $(1-\lambda) \cdot \beta$ and hence $H_p$ is indeed an $\EDCS{G_p,\beta,(1-\lambda) \cdot \beta}$. 
	\end{proof}
	
	We continue with the proof of Lemma~\ref{lem:edcs-edge-sampled}. 
	Let $H$ be an $\EDCS{G,\beta_H,\beta^-_H}$ for the parameters $\beta_H,\beta_H^-$ in Claim~\ref{clm:edcs-edge-sampled}. The existence of $H$ follows from Lemma~\ref{lem:edcs-exists} as $\beta_H^- < \beta_H$. 
	Define $\hH_{1} := H \cap G_1$ and $\hH_{2} := H \cap G_2$. By Claim~\ref{clm:edcs-edge-sampled}, $\hH_{1}$ (resp. $\hH_{2}$) is an EDCS of $G_1$ (resp. $G_2$) with parameters $\paren{\beta,(1-\lambda)\beta}$ with
	probability $1-4/n^9$. In the following, we condition on this event. 
	 
	By Lemma~\ref{lem:ddl} (Degree Distribution Lemma), since both $H_{1}$ (resp. $H_2$) and $\hH_1$ (resp. $\hH_2$) are EDCS for $G_1$ (resp. $G_2$), the degree of vertices in both of them should be ``close'' to each other. Moreover, since
	by Claim~\ref{clm:edcs-vertex-sampled} the degree of each vertex in $\hH_1$ and $\hH_2$ is close to $p$ times its degree in $H$, we can argue that the vertex degrees in $H_1$ and $H_2$ are close. Formally, 
	for any $v \in V$, we have, 
	\begin{align*}
		\card{\deg{H_1}{v} - \deg{H_2}{v}} &\leq \card{\deg{H_1}{v} - \deg{\hH_1}{v}} + \card{\deg{\hH_1}{v} - \deg{\hH_2}{v}} + \card{\deg{\hH_2}{v} - \deg{H_2}{v}}  \\
		&\!\!\!\!\!\!\!\!\!\Leq{Lemma~\ref{lem:ddl}} O(\log{n}) \cdot \lambda^{1/2} \cdot \beta + \card{\deg{\hH_1}{v} - p \cdot \deg{H}{v}} + \card{\deg{\hH_2}{v} - p \cdot \deg{H}{v}} \\
		&\!\!\!\!\!\!\!\!\Leq{Claim~\ref{clm:edcs-edge-sampled}} O(\log{n}) \cdot \lambda^{1/2} \cdot \beta + O(1) \cdot \lambda \cdot \beta,
	\end{align*}
	finalizing the proof. 
\end{proof}

\subsubsection{EDCS in Vertex Sampled Subgraphs: Proof of Lemma~\ref{lem:edcs-vertex-sampled}}
\begin{proof}[Proof of Lemma~\ref{lem:edcs-vertex-sampled}]

	We first prove that vertex sampling an EDCS results in another EDCS for the sampled subgraph. 
	 
	\begin{claim}\label{clm:edcs-vertex-sampled}
		Let $H$ be an $\EDCS{G,\beta_H,\beta^-_H}$ for parameters $\beta_H := (1-\frac{\lambda}{2}) \cdot \frac{\beta}{p}$ and $\beta^-_H := \beta_H - 1$. 
		Suppose $G_p := G^V_p(V_p,E_p)$ is a vertex sampled subgraph of $G$ and $H_p := H \cap G_p$; then, with probability $1-2/n^9$: 
		\begin{enumerate}
			\item For any vertex $v \in V_p$, $\card{\deg{H_p}{v} - p \cdot \deg{H}{v}} \leq \frac{\lambda}{5} \cdot \beta$.
			\item $H_p$ is an EDCS of $G_p$ with parameters $\paren{\beta,(1-\lambda) \cdot \beta}$.  
		\end{enumerate}
	\end{claim}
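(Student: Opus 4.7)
The plan is to mirror the argument for Claim~\ref{clm:edcs-edge-sampled} almost verbatim, with the only new ingredient being how one handles vertex sampling in the concentration step. The key observation is that, conditional on a fixed vertex $v$ belonging to $V_p$, the degree $\deg{H_p}{v}$ is the number of $v$'s $H$-neighbors that are sampled into $V_p$; each such neighbor survives independently with probability $p$, and this is independent of $v$'s own sampling. Hence $\deg{H_p}{v}$ conditional on $v \in V_p$ is a sum of $\deg{H}{v}$ i.i.d.\ Bernoulli$(p)$ variables with mean $p \cdot \deg{H}{v} \leq p \cdot \beta_H \leq \beta$.

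Armed with this, I would apply a Chernoff bound (Proposition~\ref{prop:chernoff}) with deviation $\tfrac{\lambda}{5}\beta$ to obtain $\Pr\bigl(\card{\deg{H_p}{v}-p\deg{H}{v}}>\tfrac{\lambda}{5}\beta \mid v\in V_p\bigr)\leq 2\exp(-\lambda^2\beta/75)\leq 2/n^{10}$, using the hypothesis $\beta \geq 750\lambda^{-2}\ln n$. Since the bad event on vertex $v$ requires $v\in V_p$ anyway, its unconditional probability is also at most $2/n^{10}$, and a union bound over the $n$ vertices yields part (1) with failure probability at most $2/n^9$.

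Conditioning on the event of part (1), the verification of part (2) follows the edge-sampled case line by line. For $(u,v)\in H_p$, both endpoints are in $V_p$ and $(u,v)\in H$, so $\propone$ of $H$ gives $\deg{H_p}{u}+\deg{H_p}{v}\leq p(\deg{H}{u}+\deg{H}{v})+\tfrac{2\lambda}{5}\beta\leq p\beta_H+\tfrac{2\lambda}{5}\beta=(1-\tfrac{\lambda}{2})\beta+\tfrac{2\lambda}{5}\beta<\beta$, which is $\propone$ for $H_p$. For $(u,v)\in G_p\setminus H_p$ we use the one genuinely new observation: since $G_p$ is the induced subgraph on $V_p$, both endpoints lie in $V_p$, so if the edge were in $H$ then it would automatically lie in $H_p = H\cap G_p$; hence $(u,v)\notin H$. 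Then $\proptwo$ of $H$ gives $\deg{H_p}{u}+\deg{H_p}{v}\geq p\beta^-_H-\tfrac{2\lambda}{5}\beta=(1-\tfrac{\lambda}{2})\beta-p-\tfrac{2\lambda}{5}\beta>(1-\lambda)\beta$, establishing $\proptwo$ for $H_p$.

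I do not anticipate any real obstacle: the concentration step is essentially identical to the edge-sampled case once one observes the conditional binomial structure, and the only structural adjustment is the implication ``$(u,v)\in G_p\setminus H_p \Rightarrow (u,v)\notin H$'', which is immediate from $H_p$ being defined as the intersection of $H$ with an \emph{induced} subgraph. The rest is bookkeeping with the same choice of parameters $\beta_H,\beta^-_H$ used in Claim~\ref{clm:edcs-edge-sampled}.
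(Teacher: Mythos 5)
Your proof is correct and follows the same line of argument as the paper: Chernoff bound on $\deg{H_p}{v}$ using the independence of neighbor sampling (with the same parameter bookkeeping), union bound, and then verifying \propone and \proptwo exactly as in the edge-sampled case, with the key observation that $H_p = H \cap G_p$ and $G_p$ being induced together force $(u,v)\in G_p\setminus H_p \Rightarrow (u,v)\notin H$. The paper's proof is essentially identical, so nothing further is needed.
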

	\begin{proof}
	For any vertex $v \in V_p$, $\Ex{\bracket{\deg{H_p}{v}}} = p \cdot \deg{H}{v}$ by the independent sampling of vertices and 
	$\deg{H}{v} \leq \beta_H$ by \propone of EDCS $H$. Moreover, since each neighbor of $v$ in $H$ is sampled in $H_p$ independently, by Chernoff bound (Proposition~\ref{prop:chernoff}), we have, 
	\begin{align*}
		\Pr\paren{\card{\deg{H_p}{v} - p \cdot \deg{H}{v}} \geq \frac{\lambda}{5} \cdot \beta} \leq 2\cdot \exp\paren{-\frac{\lambda^2  \cdot \beta}{75}}  \leq 2\cdot \exp\paren{-10\ln{n}} = \frac{2}{n^{10}},
	\end{align*}
	where the second inequality is by the lower bound on $\beta$ in Lemma~\ref{lem:edcs-vertex-sampled} statement. 
	In the following, we condition on the event that: 
	\begin{align}
		\forall v \in V_p~~ \card{\deg{H_p}{v} - p \cdot \deg{H}{v}} \leq \frac{\lambda}{5} \cdot \beta. \label{eq:vertex-degree-event}
	\end{align}
	which happens with probability at least $1-2/n^9$ by above equation and a union bound on $\card{V_p} \leq n$ vertices. This finalizes the proof of the first part of the claim. 
	We are now ready to prove that $H_p$ is indeed an $\EDCS{G_p,\beta,(1-\lambda)\cdot \beta}$ conditioned on this event. 
	
	Consider any edge $(u,v) \in H_p$. Since $H_p \subseteq H$, $(u,v) \in H$ as well. Hence, we have, 
	\begin{align*}
		\deg{H_p}{u} + \deg{H_p}{v} \Leq{Eq~(\ref{eq:vertex-degree-event})} p \cdot \Paren{\deg{H}{u} + \deg{H}{v}} + \frac{2\lambda}{5} \cdot \beta \leq p \cdot \beta_H +  \frac{2\lambda}{5} \cdot \beta  
		= (1-\frac{\lambda}{2}) \cdot \beta + \frac{2\lambda}{5} \cdot \beta < \beta,
	\end{align*}
	where the second inequality is by \propone of EDCS $H$ and the equality is by the choice of $\beta_H$. As a result, $H_p$ satisfies \propone of EDCS for parameter $\beta$.
	
	Now consider an edge $(u,v) \in G_p \setminus H_p$. Since $H_p = G_p \cap H$,  $(u,v) \notin H$ as well. Hence,  
	\begin{align*}
		\deg{H_p}{u} + \deg{H_p}{v} &\Geq{Eq~(\ref{eq:vertex-degree-event})} p \cdot \Paren{\deg{H}{u} + \deg{H}{v}} - \frac{2\lambda}{5} \cdot  \beta \geq p \cdot \beta^-_H - \frac{2\lambda}{5} \cdot  \beta \\
		&= (1-\frac{\lambda}{2}) \cdot \beta - p - \frac{2\lambda}{5} \cdot \beta 
		> (1-\lambda) \cdot \beta,
	\end{align*}
	where the second inequality is by \proptwo of EDCS $H$ and the equality is by the choice of $\beta^-_H$. 
	As such, $H_p$ satisfies \proptwo of EDCS for parameter $(1-\lambda) \cdot \beta$ and hence $H_p$ is indeed an $\EDCS{G_p,\beta,(1-\lambda) \cdot \beta}$. 
	\end{proof}
	
	We continue with the proof of Lemma~\ref{lem:edcs-vertex-sampled}. 
	Let $H$ be an $\EDCS{G,\beta_H,\beta^-_H}$ for the parameters $\beta_H,\beta_H^-$ in Claim~\ref{clm:edcs-vertex-sampled}. The existence of $H$ follows from Lemma~\ref{lem:edcs-exists} as $\beta^-_H < \beta_H$. 
	Define $\hH_{1} := H \cap G_1$ and $\hH_{2} := H \cap G_2$. By Claim~\ref{clm:edcs-vertex-sampled}, $\hH_{1}$ (resp. $\hH_{2}$) is an EDCS of $G_1$ (resp. $G_2$) with parameters $\paren{\beta,(1-\lambda)\beta}$ with
	probability $1-4/n^9$. In the following, we condition on this event. 
	 
	By Lemma~\ref{lem:ddl} (Degree Distribution Lemma), since both $H_{1}$ (resp. $H_2$) and $\hH_1$ (resp. $\hH_2$) are EDCS for $G_1$ (resp. $G_2$), the degree of vertices in both of them should be ``close'' to each other. Moreover, since
	by Claim~\ref{clm:edcs-vertex-sampled} the degree of each vertex in $\hH_1$ and $\hH_2$ is close to $p$ times its degree in $H$, we can argue that the degree of shared vertices in $H_1$ and $H_2$ are close. Formally, 
	let $v$ be a vertex in both $G_1$ and $G_2$; we have, 
	\begin{align*}
		\card{\deg{H_1}{v} - \deg{H_2}{v}} &\leq \card{\deg{H_1}{v} - \deg{\hH_1}{v}} + \card{\deg{\hH_1}{v} - \deg{\hH_2}{v}} + \card{\deg{\hH_2}{v} - \deg{H_2}{v}}  \\
		&\!\!\!\!\!\!\!\!\!\Leq{Lemma~\ref{lem:ddl}} O(\log{n}) \cdot \lambda^{1/2} \cdot \beta + \card{\deg{\hH_1}{v} - p \cdot \deg{H}{v}} + \card{\deg{\hH_2}{v} - p \cdot \deg{H}{v}} \\
		&\!\!\!\!\!\!\!\!\Leq{Claim~\ref{clm:edcs-vertex-sampled}} O(\log{n}) \cdot \lambda^{1/2} \cdot \beta + O(1) \cdot \lambda \cdot \beta,
	\end{align*}
	finalizing the proof. 
\end{proof}

\section{Randomized Coresets for Matching and Vertex Cover}\label{sec:coresets-matching-vc}

We introduce our randomized coresets for matching and vertex cover in this section. Both of these results are achieved by computing an EDCS of the input graph (for appropriate choice of parameters) 
and then applying Lemmas~\ref{lem:edcs-matching} and~\ref{lem:edcs-vc}. 

\subsection{Computing an EDCS from Random $k$-Partitions}\label{sec:edcs-coreset} 

Let $G(V,E)$ be any arbitrary graph and $\Gi{1},\ldots,\Gi{k}$ be a random $k$-partition of $G$. We show that if we compute an arbitrary EDCS of each graph $\Gi{i}$ (with no coordination across different graphs) and 
combine them together, we obtain an EDCS for the original graph $G$. 
\begin{tbox}
	\begin{enumerate}
		\item Let $\Gi{1},\ldots,\Gi{k}$ be a random $k$-partition of the graph $G$. 
		\item For any $i \in [k]$, compute $\Ci{i} := \EDCS{G,\beta, (1-\lambda) \cdot \beta}$ for parameters 
		\[ 
			\lambda = \Theta\paren{(\frac{\eps}{\log{n}})^2} ~~~~~~\textnormal{and}~~~~~~ \beta := \Theta(\lambda^{-3} \cdot \log{n}). 
		\] 
		\item Let $C := \bigcup_{i=1}^{k} \Ci{i}$. 
	\end{enumerate}
\end{tbox}


\begin{lemma}\label{lem:edcs-coreset}
	With probability $1-4/n^{7}$, the subgraph $C$ is an $\EDCS{G,\beta_C,\beta^{-}_C}$ for parameters: 
	\[
		\lambda_C:= O(\log{n}) \cdot \lambda^{1/2}~~~~~~\textnormal{,}~~~~~~ \beta_C := (1+\lambda_C) \cdot k \cdot \beta ~~~~~~\textnormal{and}~~~~~~ \beta^-_C := (1-2\lambda_C) \cdot k \cdot \beta. 
	\]	
\end{lemma}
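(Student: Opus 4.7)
\medskip

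\noindent\textbf{Proof plan.} The plan is to reduce the claim to the EDCS-in-edge-sampled-subgraphs lemma (Lemma~\ref{lem:edcs-edge-sampled}) applied across the $k$ parts. First observe that since the $\Gi{i}$ form a random $k$-partition of $G$, each $\Gi{i}$ is, on its own, an edge-sampled subgraph of $G$ with probability $p = 1/k$ (these samples are of course not independent, but Lemma~\ref{lem:edcs-edge-sampled} explicitly allows the two sampled subgraphs it compares to be chosen in a correlated way). Applying that lemma to every pair $(\Gi{i},\Gi{j})$ and union-bounding over the at most $\binom{k}{2} \leq n^2$ pairs, the failure probability is at most $4/n^7$, and on the good event we have, simultaneously for all vertices $v \in V$ and all $i,j \in [k]$,
\[
\bigl|\deg{\Ci{i}}{v}-\deg{\Ci{j}}{v}\bigr| \;\leq\; O(\log n)\cdot \lambda^{1/2} \cdot \beta \;=\; \lambda_C\cdot\beta.
\]
We condition on this event for the rest of the argument.

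The next key observation is that because the edge sets $\Ei{1},\dots,\Ei{k}$ partition $E$, the EDCS $\Ci{i} \subseteq \Gi{i}$ also have pairwise disjoint edge sets, so for every vertex $v$ we have the exact identity $\deg{C}{v} = \sum_{i=1}^{k}\deg{\Ci{i}}{v}$. Now I would verify the two EDCS properties of $C$ for the stated parameters $(\beta_C,\beta_C^-)$ one at a time. For \propone, take any edge $(u,v)\in C$; it lies in exactly one $\Ci{i}$ and satisfies $\deg{\Ci{i}}{u}+\deg{\Ci{i}}{v}\leq \beta$ by \propone of $\Ci{i}$. The degree-transfer bound above then gives $\deg{\Ci{j}}{u}+\deg{\Ci{j}}{v} \leq \beta + 2\lambda_C\beta$ for every $j$, and summing over the $k$ parts yields
\[
\deg{C}{u}+\deg{C}{v} \;\leq\; k\beta\,(1+2\lambda_C),
\]
which matches $\beta_C$ up to absorbing the factor of $2$ into the $O(\log n)$ in $\lambda_C$. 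For \proptwo, take any edge $(u,v)\in E\setminus C$ and let $i$ be the (unique) index with $(u,v)\in \Gi{i}$; since $(u,v)\notin C$ we have $(u,v)\in \Gi{i}\setminus \Ci{i}$, so \proptwo of $\Ci{i}$ gives $\deg{\Ci{i}}{u}+\deg{\Ci{i}}{v}\geq (1-\lambda)\beta$. Transferring this bound to every $\Ci{j}$ via the degree-closeness inequality, incurring an extra additive $2\lambda_C\beta$, and summing over $j$ yields
\[
\deg{C}{u}+\deg{C}{v} \;\geq\; k\beta\,(1-\lambda-2\lambda_C) \;\geq\; k\beta\,(1-2\lambda_C),
\]
where the last step uses $\lambda\ll \lambda^{1/2}\leq \lambda_C$.

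The main obstacle is really just ensuring that the application of Lemma~\ref{lem:edcs-edge-sampled} is legitimate: the lemma compares two arbitrary EDCS of two (possibly dependent) edge-sampled subgraphs, and here each $\Ci{i}$ is an \emph{arbitrary} EDCS of $\Gi{i}$ chosen with no coordination, so the lemma applies as a black box. The remaining work (degree sums across parts, adjusting constants into $\lambda_C$, and a union bound over pairs) is routine once the closeness bound is in hand; note also that the parameter assumption $\beta \geq 750\lambda^{-2}\ln n$ required by Lemma~\ref{lem:edcs-edge-sampled} is comfortably met by the choice $\beta = \Theta(\lambda^{-3}\log n)$ in the coreset construction.
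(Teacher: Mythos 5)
Your proof is correct and follows essentially the same route as the paper: identify each $\Gi{i}$ as an edge-sampled subgraph with $p=1/k$, invoke Lemma~\ref{lem:edcs-edge-sampled} pairwise with a union bound over $\binom{k}{2} \leq n^2$ pairs to get the uniform degree-closeness bound $\lambda_C\beta$, and then transfer the local EDCS properties of the $\Ci{j}$ containing/excluding the edge to a global bound on $\deg{C}{u}+\deg{C}{v}$ by summing over parts. You are in fact slightly more careful than the paper's displayed inequalities in tracking the factor of $2$ (from transferring the bound for both $u$ and $v$), correctly noting it is absorbed into the $O(\log n)$ constant in $\lambda_C$; you also explicitly verify the hypothesis $\beta \geq 750\lambda^{-2}\ln n$ of Lemma~\ref{lem:edcs-edge-sampled}, which the paper leaves implicit.
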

\begin{proof}
	Recall that each graph $\Gi{i}$ is an edge sampled subgraph of $G$ with sampling probability $p = \frac{1}{k}$. 
	By Lemma~\ref{lem:edcs-edge-sampled} for graphs $\Gi{i}$ and $\Gi{j}$ (for $i \neq j \in [k]$) and their EDCSs $\Ci{i}$ and $\Ci{j}$, with probability $1-4/n^9$, 
	for all vertices $v \in V$: 
	\begin{align}
	\card{\deg{\Ci{i}}{v} - \deg{\Ci{j}}{v}} \leq O(\log{n}) \cdot \lambda^{1/2} \cdot \beta = \lambda_C \cdot \beta. \label{eq:edge-pairwise-degrees}
	\end{align}
	By taking a union bound on all ${{k} \choose{2}} \leq n^2$ pairs of subgraphs $\Gi{i}$ and $\Gi{j}$ for $i \neq j \in [k]$, the above property holds for all $i,j \in [k]$, with probability at least $1-4/n^{7}$. In the following, 
	we condition on this event. 
	
	We now prove that $C$ is indeed an $\EDCS{G,\beta_C,\beta^-_C}$. 
	First, consider an edge $(u,v) \in C$ and let $j \in [k]$ be such that $(u,v) \in \Ci{j}$ as well. We have,
	\begin{align*}
		\deg{C}{u} + \deg{C}{v} &= \sum_{i=1}^{k} \deg{\Ci{i}}{u} + \sum_{i=1}^{k} \deg{\Ci{i}}{v} \Leq{Eq~(\ref{eq:edge-pairwise-degrees})} k \cdot \paren{\deg{\Ci{j}}{u} + \deg{\Ci{j}}{v}} + k \cdot \lambda_C \cdot \beta \\
		&\leq k \cdot \beta + k \cdot \lambda_C \beta = \beta_C. \tag{by \propone of EDCS $\Ci{j}$ with parameter $\beta$}
	\end{align*}
	Hence, $C$ satisfies \propone of EDCS for parameter $\beta_C$. 
	
	Now consider an edge $(u,v) \in G \setminus C$ and let $j \in [k]$ be such that $(u,v) \in \Gi{j} \setminus \Ci{j}$ (recall that each edge in $G$ is sent to exactly one graph $\Gi{j}$ in the random $k$-partition). We have,
	\begin{align*}
		\deg{C}{u} + \deg{C}{v} &= \sum_{i=1}^{k} \deg{\Ci{i}}{u} + \sum_{i=1}^{k} \deg{\Ci{i}}{v} \Geq{Eq~(\ref{eq:edge-pairwise-degrees})} k \cdot \paren{\deg{\Ci{j}}{u} + \deg{\Ci{j}}{v}} - k \cdot \lambda_C \cdot \beta \\
		&\geq k \cdot (1-\lambda) \cdot \beta - k \cdot \lambda_C \cdot \beta \geq (1-2\lambda_C) \cdot k \cdot \beta = \beta^-_C. \tag{by \proptwo of EDCS $\Ci{j}$ with parameter $(1-\lambda) \cdot \beta$}
	\end{align*}
	Hence, $C$ also satisfies the second property of EDCS for parameter $\beta^-_C$, finalizing the proof. 
\end{proof}

\subsection{EDCS as a Coreset for Matching and Vertex Cover}\label{sec:matching-vc-coreset}

We are now ready to present our randomized coresets for matching and vertex cover using the EDCS as the coreset, formalizing Result~\ref{res:coresets}. 

\begin{theorem}\label{thm:matching-edcs-coreset}
	Let $G(V,E)$ be a graph and $\Gi{1},\ldots,\Gi{k}$ be a random $k$-partition of $G$. For any $\eps \in (0,1)$, any $\EDCS{\Gi{i},\beta,(1-\lambda) \cdot \beta}$ for $\lambda := \Theta\paren{(\frac{\eps}{\log{n}})^2}$ 
	and $\beta := \Theta(\eps^{-6} \cdot \log^{7}{n})$ is a $\paren{3/2 + \eps}$-approximation randomized composable coreset of size $O(n \cdot \beta)$ for the maximum matching problem. 
\end{theorem}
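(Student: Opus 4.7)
The plan is to assemble the theorem directly from the two structural results proven earlier: Lemma~\ref{lem:edcs-coreset}, which shows that the union of per-machine EDCSs forms an EDCS of the whole graph, and Lemma~\ref{lem:edcs-matching}, which shows that any EDCS with appropriate parameters retains a $(3/2+\eps)$-approximate matching. So conceptually the proof is just: combine the local EDCSs, observe that the combination is a global EDCS of $G$, then invoke the matching approximation property.

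First, I would verify the size bound. \textbf{Property (P1)} of an EDCS implies that every vertex in an $\EDCS{G^{(i)},\beta,(1-\lambda)\beta}$ has degree at most $\beta$ (since if $\deg_{H}(v) > 0$, then some incident edge $(u,v)$ satisfies $\deg_H(v) \leq \deg_H(u) + \deg_H(v) \leq \beta$). Hence $|C^{(i)}| \leq n\beta/2$, and the union $C = \bigcup_i C^{(i)}$ has at most $O(n\beta)$ edges, which is the claimed coreset size.

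Next, I would apply Lemma~\ref{lem:edcs-coreset} with the given $\lambda = \Theta((\eps/\log n)^2)$ and $\beta = \Theta(\eps^{-6}\log^7 n)$. This guarantees that $\beta \geq 750\lambda^{-2}\ln n$, which is the hypothesis of Lemma~\ref{lem:edcs-edge-sampled} that underlies Lemma~\ref{lem:edcs-coreset}. Conclusion: with probability $1 - 4/n^7$, the subgraph $C$ is an $\EDCS{G,\beta_C,\beta_C^-}$ where $\beta_C = (1+\lambda_C) k\beta$ and $\beta_C^- = (1-2\lambda_C)k\beta$ for $\lambda_C = O(\log n)\cdot \lambda^{1/2} = \Theta(\eps)$ by the choice of $\lambda$.

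Finally, I would feed these parameters into Lemma~\ref{lem:edcs-matching}. Setting a new EDCS accuracy parameter $\lambda' := 3\lambda_C = \Theta(\eps)$, we have $\beta_C^- / \beta_C = (1-2\lambda_C)/(1+\lambda_C) \geq 1-\lambda'$, so the ratio hypothesis $\beta_C^- \geq (1-\lambda')\beta_C$ is met. Since $\lambda' = \Theta(\eps) \geq \eps/100$ and $\beta_C \geq k\beta = \Omega(\eps^{-6}\log^7 n) \geq 32\,(\lambda')^{-3}$, Lemma~\ref{lem:edcs-matching} applies with target error $O(\eps)$, yielding $\MM(G) \leq (3/2+O(\eps))\cdot \MM(C)$. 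Rescaling $\eps$ by a constant factor at the start (which only changes hidden constants in the parameters) gives the stated $(3/2+\eps)$ approximation. The only mild obstacle is bookkeeping these several nested parameters consistently; since Lemma~\ref{lem:edcs-coreset} inflates $\lambda$ into $\lambda_C$ by a $\Theta(\log n)$ factor under a square root, the original $\lambda$ must be chosen as small as $\Theta((\eps/\log n)^2)$ in order for the final EDCS to satisfy the $\paren{\eps/100}$-precision required by Lemma~\ref{lem:edcs-matching}, which is precisely the setting given in the theorem statement.
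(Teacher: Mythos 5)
Your proof is correct and follows precisely the paper's own argument: invoke Lemma~\ref{lem:edcs-coreset} to deduce that the union of local EDCSs is a global EDCS, then invoke Lemma~\ref{lem:edcs-matching}, with the size bound falling out of Property~(P1). The only difference is that you carry out the parameter bookkeeping explicitly (verifying $\beta \geq 750\lambda^{-2}\ln n$, $\beta_C^-/\beta_C \geq 1-3\lambda_C$, $\lambda_C = \Theta(\eps)$, and $\beta_C \geq 32(\lambda')^{-3}$), which the paper compresses into the phrase ``such that $\beta_C^- = (1-\Theta(\eps))\cdot\beta_C$''; this extra care is welcome and clarifies why $\lambda$ must be chosen as $\Theta((\eps/\log n)^2)$ rather than simply $\Theta(\eps)$.
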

\begin{proof}
	By Lemma~\ref{lem:edcs-coreset}, the union of the coresets, i.e., the $k$ EDCSs, is itself an $\EDCS{G,\beta_C,\beta^{-}_C}$, such that $\beta^{-}_C = (1-\Theta(\eps)) \cdot \beta_C$. 
	Hence, by Lemma~\ref{lem:edcs-matching}, the maximum matching in this EDCS is of size $\paren{2/3-\eps} \cdot \MM(G)$. The bound on the size of the coreset follows from $\propone$ of EDCS 
	as maximum degree in the EDCS computed by each machine is at most $\beta$ and hence size of each coreset is $O(n \cdot \beta) = \Ot_{\eps}(n)$. 
\end{proof}

To present our coreset for the vertex cover problem, we need to slightly relax the definition of randomized coreset. Following~\cite{AssadiK17}, we augment the definition of randomized coresets by allowing the coresets to also contain a fixed solution (which is 
counted in the size of the coreset) to be directly added to the final solution of the composed coresets. In other words, the coreset contains both subsets of vertices (to be always included in the final vertex cover) and edges (to guide the choice of additional vertices 
in the vertex cover). This definition is necessary for the vertex cover problem due to the hard-to-verify feasibility constraint of this problem; see~\cite{AssadiK17} for more details.

\begin{theorem}\label{thm:vc-edcs-coreset}
	Let $G(V,E)$ be a graph and $\Gi{1},\ldots,\Gi{k}$ be a random $k$-partition of $G$. For any $\eps \in (0,1)$, any $\EDCS{\Gi{i},\beta,(1-\lambda) \cdot \beta}$ for $\lambda := \Theta\paren{(\frac{\eps}{\log{n}})^2}$ 
	and $\beta := \Theta(\eps^{-6} \cdot \log^{7}{n})$ plus the set of vertices with degree larger than $(1-\Theta(\eps)) \cdot \beta/2$ in the EDCS (to be added directly to the final vertex cover) 
	is a $\paren{2 + \eps}$-approximation randomized composable coreset of size $O(n \cdot \beta)$ for the minimum vertex cover problem. 
\end{theorem}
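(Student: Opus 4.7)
The plan is to lift Lemma~\ref{lem:edcs-vc} through the composition provided by Lemma~\ref{lem:edcs-coreset}: the latter already establishes that the union $C := \bigcup_{i=1}^{k} \Ci{i}$ of the per-machine EDCSs is itself an $\EDCS{G,\beta_C,\beta^-_C}$ with $\beta_C = (1+\lambda_C)k\beta$ and $\beta^-_C = (1-2\lambda_C)k\beta$ for $\lambda_C = O(\log n)\cdot\lambda^{1/2} = \Theta(\eps)$, and the former turns such an EDCS into a $(2+\eps)$-approximate vertex cover by combining its high-degree vertices with a minimum vertex cover of the EDCS. The composed coreset returns the edges of $C$ together with the union $V_H^{\star} := \bigcup_{i=1}^{k} \{v : \deg{\Ci{i}}{v} \geq (1-\Theta(\eps))\beta/2\}$; the cover reported at the end will be $V_H^{\star} \cup \vvc$, where $\vvc$ is any minimum vertex cover of $C$.

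I will condition on the high-probability event of Lemma~\ref{lem:edcs-coreset} and then verify feasibility and bound the cover size. For feasibility, edges inside $C$ are covered by $\vvc$ by definition; for an edge $(u,v) \in E \setminus C$, \proptwo{} of $C$ forces some endpoint, say $v$, to have $\deg{C}{v} \geq \beta^-_C/2$, and the pairwise degree-closeness inequality $\card{\deg{\Ci{i}}{v} - \deg{\Ci{j}}{v}} \leq \lambda_C\beta$ established as Eq.~(\ref{eq:edge-pairwise-degrees}) translates this global lower bound into the local lower bound $\deg{\Ci{i}}{v} \geq (1-O(\lambda_C))\beta/2 \geq (1-\Theta(\eps))\beta/2$ for every $i$, putting $v$ in $V_H^{\star}$.

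For the size bound I will reprise the counting argument of Lemma~\ref{lem:edcs-vc} with $C$ in the role of the EDCS. Any $v \in V_H^{\star}$ has $\deg{\Ci{i}}{v} \geq (1-\Theta(\eps))\beta/2$ for some $i$, and the same closeness inequality gives $\deg{C}{v} \geq (1-\Theta(\eps))\beta_C/2$. Setting $S := V_H^{\star} \setminus \vvc$, every neighbor of $S$ in $C$ must lie in $\vvc$ (else some edge of $C$ would be uncovered), and \propone{} of $C$ caps the $C$-degree of each such vertex at $\beta_C - (1-\Theta(\eps))\beta_C/2 = (1+\Theta(\eps))\beta_C/2$. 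Double-counting the edges of $C$ between $S$ and $N(S)$ yields $\card{S} \leq (1+\Theta(\eps))\card{\vvc}$, and therefore $\card{V_H^{\star} \cup \vvc} \leq (2+\eps)\VC(C) \leq (2+\eps)\VC(G)$; the $O(n\cdot\beta)$ bound on the size of the coreset follows from \propone{} applied to each $\Ci{i}$.

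The main obstacle will be coordinating the several $\Theta(\eps)$ slacks in play: the local threshold $(1-\Theta(\eps))\beta/2$ must be small enough relative to $\lambda_C$ that every globally high-degree vertex of $C$ is captured on every machine (needed for feasibility), yet large enough that the ratio $(\beta_C-\tau)/\tau$ appearing in the double-counting step stays below $1+\eps$. Both constraints are simultaneously satisfied by the choice $\lambda = \Theta((\eps/\log n)^2)$, which forces $\lambda_C = \Theta(\eps)$ with a tunable constant; the rest of the argument is routine bookkeeping on top of Lemmas~\ref{lem:edcs-coreset} and~\ref{lem:edcs-vc}.
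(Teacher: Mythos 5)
Your proposal is correct and follows essentially the same approach as the paper's: condition on Lemma~\ref{lem:edcs-coreset} to get that the union $C$ is a global EDCS, then use degree-closeness (Lemma~\ref{lem:edcs-edge-sampled}) to reconcile the per-machine threshold with the global one required by Lemma~\ref{lem:edcs-vc}. The only difference is stylistic---you explicitly re-run the double-counting argument of Lemma~\ref{lem:edcs-vc} on the set $V_H^{\star}$, whereas the paper invokes that lemma as a black box ``with a slightly smaller parameter $\eps$.''
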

\begin{proof}
	By Lemma~\ref{lem:edcs-coreset}, the union of the coresets, i.e., the $k$ EDCSs, is itself an $\EDCS{G,\beta_C,\beta^{-}_C}$ $C$, such that $\beta^{-}_C = (1-\Theta(\eps)) \cdot \beta_C$. 
	Suppose first that instead of each coreset fixing the set of vertices to be added to the final vertex cover, we simply add all vertices with degree more than $\beta^{-}_C/2$ to the vertex cover and then compute
	a minimum vertex cover of $C$. In this case, by Lemma~\ref{lem:edcs-vc}, the returned solution is a $(2+\eps)$-approximation to the minimum vertex cover of $G$. 
	
	To complete the argument, recall that the degree of any vertex $v \in V$ is 
	essentially the same across all machines (up to an additive term of $\eps \cdot \beta$) by Lemma~\ref{lem:edcs-edge-sampled}, and hence the set of vertices with degree more than $\beta^{-}_C/2$ would be a subset of 
	the set of fixed vertices across all machines. Moreover, any vertex added by any machine to the final vertex cover has degree at least $(1-\Theta(\eps)) \cdot \beta^{-}_C/2$ and hence we can apply Lemma~\ref{lem:edcs-vc}, with 
	a slightly smaller parameter $\eps$ to argue that the returned solution is still a $(2+\eps)$-approximation. 
\end{proof}

\begin{remark}\label{rem:vc-polytime}
	In the proof of Theorem~\ref{thm:vc-edcs-coreset}, we neglected the time necessary to compute a vertex cover in the union of the coresets (as is consistent with the definition of randomized coresets). 
	In case we require this algorithm to run in polynomial time, we need to approximate the final vertex cover in the union of the coresets as opposed to recover it exactly. In particular, by picking 
	a $2$-approximation vertex cover in the union of coreset in the proof of Lemma~\ref{lem:edcs-vc}, we obtain an (almost) $4$-approximation to the vertex cover of $G$. 
\end{remark}

\section{MPC Algorithms for Matching and Vertex Cover}\label{sec:mpc-algs}

In this section, we show that a careful adaptation of our coresets construction together 
with the structural results proven for EDCS in Section~\ref{sec:edcs} can be used to obtain MPC algorithms with much smaller memory while increasing the number of required rounds to only $O(\log\log{n})$. 

\begin{theorem}\label{thm:mpc-matching-vc}
	There exists an MPC algorithm that given a graph $G(V,E)$ with high probability computes an $O(1)$ approximation to both maximum matching and minimum vertex cover of $G$ in $O(\log\log{n} + \log{\paren{\frac{n}{s}})}$ MPC rounds
	 on machines of memory $s = n^{\Omega(1)}$.
\end{theorem}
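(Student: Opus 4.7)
The plan is to build an MPC algorithm that combines the EDCS-based coreset of Section~\ref{sec:coresets-matching-vc} with a recursive sparsification scheme, exploiting the robustness properties of EDCS under edge and vertex sampling (Lemmas~\ref{lem:edcs-edge-sampled} and~\ref{lem:edcs-vertex-sampled}). The high-level idea is to iteratively replace the current candidate graph by an EDCS of itself, redistributed across fewer machines, until the sparsified graph fits on a single machine; at that point we solve matching and vertex cover offline using Lemmas~\ref{lem:edcs-matching} and~\ref{lem:edcs-vc}.

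Concretely, I would proceed as follows. Initially $G$ is stored across $\Theta(m/s)$ machines, each holding at most $s$ edges. In each iteration, we randomly $k$-partition the current graph across the available machines (with $k$ chosen so that every machine can locally process its subgraph in space $s$) and have each machine compute a local EDCS with degree parameter $\beta = \polylog(n)$. By Lemma~\ref{lem:edcs-coreset}, the union of these local EDCSs is itself an EDCS of the current graph with parameters scaled by a factor of $k$; since each local EDCS has maximum degree at most $\beta$, the union has at most $n k \beta$ edges. Hence the total memory needed to store the union shrinks substantially each iteration, allowing us to redistribute it among fewer machines for the next round. After $O(\log(n/s))$ such iterations the graph is reduced to size $O(n \polylog(n))$; from there, a more refined sparsification (with doubly-exponentially decreasing degree parameter across rounds, exploiting the bounded-degree structure already in place) completes the task in $O(\log\log n)$ additional rounds.

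The approximation analysis hinges on showing that iterating the EDCS construction does not blow up the approximation ratio. Since an EDCS-of-an-EDCS is again (approximately) an EDCS of the original graph --- by Lemma~\ref{lem:ddl} the degree distributions are near-uniquely determined and therefore compose well across levels --- one can set the slack $\lambda_t$ of iteration $t$ to decay geometrically so that the total accumulated slack remains a small constant. Combined with Lemmas~\ref{lem:edcs-matching} and~\ref{lem:edcs-vc}, this yields the claimed $O(1)$-approximation for both matching and vertex cover. The main obstacle I anticipate is controlling the accumulation of parameter slack across the $O(\log\log n)$ iterations while simultaneously ensuring the doubly-exponential shrinkage needed for the $\log\log n$ round bound, and, for vertex cover, propagating the set of high-degree vertices that must be added to the cover at each level (the augmentation analogous to Theorem~\ref{thm:vc-edcs-coreset}) without having the cover size grow beyond an $O(1)$ factor of $\VC(G)$.
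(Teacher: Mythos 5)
Your high-level plan (iteratively sparsify via distributed EDCS computation, then solve on one machine) matches the paper's outline, but there are two substantive gaps, the second of which is fatal.

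First, the sparsification mechanism: you propose to edge-partition into $k$ pieces (as in Lemma~\ref{lem:edcs-coreset}) and take the union of the $k$ local EDCSs. But that union is an $\EDCS{G,\beta_C,\beta^-_C}$ with $\beta_C \approx k\beta$, and to make each piece fit in space $s$ you need $k \approx m/s$. The resulting degree bound $k\beta \approx m\beta/s$ is \emph{not} smaller than the current maximum degree $\Delta$ once $s = O(n/\polylog(n))$ (indeed $m\beta/s \geq \Delta$ in that regime), so the iteration makes no progress. The paper instead uses \emph{vertex}-sampled subgraphs in \pedcs: with sampling probability $p \approx \sqrt{s/(n\Delta)}$ and $k \approx \log n / p^2$ subgraphs, each vertex lands in $\approx pk$ of them and the union has degree $\approx pk\beta \approx \sqrt{n\Delta/s}\cdot \polylog(n)$, which shrinks quadratically each round. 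Edge sampling (Lemma~\ref{lem:edcs-coreset}) is the right tool for Result~\ref{res:coresets} with $s = \tilde{O}(n^{1.5})$, but vertex sampling (Lemma~\ref{lem:edcs-vertex-sampled}) is what makes the degree-halving recursion work at $s = \tilde{O}(n)$.

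Second, and more fundamentally, your approximation analysis does not work. You claim that ``an EDCS-of-an-EDCS is again (approximately) an EDCS of the original graph,'' citing Lemma~\ref{lem:ddl}; but Lemma~\ref{lem:ddl} only says that two EDCSs of the \emph{same} graph with the \emph{same} parameters have near-identical degree distributions. It says nothing about composition across levels, and in fact if $H = \EDCS{G,\beta,\beta^-}$ and $H' = \EDCS{H,\beta',\beta'^-}$ with $\beta' \ll \beta$, then $H'$ is generically \emph{not} an EDCS of $G$ for any reasonable parameters: edges of $G\setminus H$ satisfy the degree lower bound with respect to $H$, not $H'$. Moreover, even if you could make the slack $\lambda_t$ decay, Lemma~\ref{lem:edcs-matching} loses a multiplicative $3/2$ \emph{per level} regardless of $\lambda$ --- the $\eps$ in $(3/2+\eps)$ controls only the additive error around $3/2$, not the $3/2$ itself. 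The paper explicitly flags this: the naive recursion gives only a $(3/2)^{O(\log\log n)} = \polylog(n)$ approximation, not $O(1)$. The paper's fix is entirely different and is absent from your proposal: it abandons Lemma~\ref{lem:edcs-matching} in the recursion and instead builds the matching and the vertex cover \emph{simultaneously as mutual witnesses}. At each level it adds the high-degree vertices $\vh$ of the current EDCS to the vertex cover, uses \rmatch to find a matching of size $\Theta(|\vh|)$ incident on $\vh$ in $O(1)$ rounds, removes all of these vertices, and recurses. Since the accumulated matching and vertex cover stay within a constant factor of each other at every level, Proposition~\ref{prop:vc-matching-alpha} gives $O(1)$-approximation for both, without any multiplicative loss across levels. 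That ingredient is the crux of the theorem and is what your proposal is missing.
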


By setting $s = O(n/\polylog{(n)})$ in Theorem~\ref{thm:mpc-matching-vc}, we achieve an $O(1)$-approximation algorithm to both matching and vertex cover in $O(\log\log{n})$ MPC rounds on machines of memory $O(n/\polylog{(n)})$, formalizing Result~\ref{res:mpc}. 

In the following, for the sake of clarity, we mostly focus on proving Theorem~\ref{thm:mpc-matching-vc} for the natural case when memory per machine is $s = \Ot(n)$, and postpone the proof for all range
of parameter $s$ to Section~\ref{sec:mpc-smaller-memory}. 
The overall idea of our algorithm is as follows. 
Instead of the edge sampled subgraphs used by our randomized coresets, we start by picking $k = O(n)$ \emph{vertex} sampled subgraphs of $G$ with sampling probability roughly $1/\sqrt{n}$ 
and send each to a separate machine. 
Each machine then locally computes an EDCS of its input (with parameters $\beta = \polylog{(n)}$ and $\beta^- \approx \beta$) with no coordination across the machines. Unlike the MPC algorithm obtained 
by our randomized coreset approach (Corollary~\ref{cor:coreset-mpc}), where the memory per machine was as large as $\Theta(n\sqrt{n})$, here we cannot collect all these smaller EDCSes on a single machine of memory $\Ot(n)$. 
Instead, we repartition them across the machines again (and discard remaining edges) and repeat the previous process on this new graph. The main 
observation is that after each step, the maximum degree of the remaining graph (i.e., the union of all EDCSes) would drop quadratically (e.g., from potentially $\Omega(n)$ to $\Ot(\sqrt{n})$ in the first step). As such, in each subsequent step, 
we can pick a smaller number of vertex sampled subgraphs, each with a higher sampling probability than previous step, and still each graph fits into the memory of a single machine. Repeating this process for $O(\log\log{n})$ steps reduces the maximum degree of the remaining graph
to $\polylog{(n)}$. At this point, we can store the final EDCS on a single machine and solve the problem locally. 

Unfortunately this approach on its own would only yield a $(3/2)^{O(\log\log{n})} = \polylog{(n)}$ approximation to matching, since by Lemma \ref{lem:edcs-matching} each recursion onto an EDCS
of the graph could introduce a $3/2$-approximation.
A similar problem exists for vertex cover. In the proof of 
Lemma \ref{lem:edcs-vc}, computing a vertex cover of $G$ from its EDCS $H$ involves two steps: we
add to the vertex cover all vertices with high degree in $H$ to cover the edges in $G \setminus H$,
and then we separately compute a vertex cover for the edges in $H$. Since $H$ cannot fit into a single machine, the second computation is done recursively: in each round, we find an EDCS of the current graph (which is partitioned amongst many machines), add 
to the vertex cover all high degree vertices in this EDCS, and then recurse onto the sparser EDCS. A straightforward analysis would only lead
to an $O(\log\log{n})$ approximation. 

We improve the approximation factor for both vertex cover
and matching by showing that they can serve as witnesses to each other. Every time we
add high-degree vertices to the vertex cover, we will also find a large matching incident to these
vertices: we show that this can be done in $O(1)$ parallel rounds. We then argue
that their sizes are always within a constant factor of each other, so both are a constant
approximation for the respective problem (by Proposition~\ref{prop:vc-matching-alpha}).

The rest of this section is organized as follows. We first present our subroutine for computing the EDCS of an input graph in parallel using vertex sampled subgraphs. 
Next, we present a simple randomized algorithm for finding a large matching incident on high degree vertices of an input graph. Finally, we combine these two subroutines to provide our main parallel algorithm for approximating matching and vertex cover. 
We finish this section 
by specifying the MPC implementation of our parallel algorithm and finalize the proof of Theorem~\ref{thm:mpc-matching-vc}.

\subsection{A Parallel Algorithm for EDCS} \label{sec:mpc-edcs}

We now present our parallel algorithm for computing an EDCS via vertex sampling.
In this algorithm, the edges of the input graph as well as the output EDCS will be partitioned across multiple machines.
In the following, we use a slightly involved method of sampling the vertices using limited independence. This is due to technical reasons 
in the MPC implementation of this algorithm which we explain in Remark~\ref{rem:limited-independence}.  
To avoid repeating the arguments, we present our algorithm for all range of memory $s=n^{\Omega(1)}$, but encourage the reader to consider the case of $s = n$ for more intuition. 


\textbox{\textnormal{$\pedcs(G,\Delta,s)$. A parallel algorithm for EDCS of a graph $G$ with maximum degree $\Delta$ on machines of memory $\Ot(s)$.}}{
	\begin{enumerate}
		\item Define $p = (200\log{n}) \cdot \sqrt{\frac{s}{n \cdot \Delta}}$ and $k = \frac{800\log{n}}{p^2}$.
		\item Create $k$ vertex sampled subgraphs $\Gi{1},\ldots,\Gi{k}$ on $k$ different machines as follows: 
			\begin{enumerate}
				\item Let $\kappa := (20\log{n})$. Each vertex $v$ in $G$ \emph{independently} picks a \emph{$\kappa$-wise independent hash
				 function} $h_v: [k] \rightarrow [1/p]$. 
				 \item The graph $\Gi{i}$ is the induced subgraph of $G$ on vertices $v \in V$ with $h_v(i) = 0$. 
			\end{enumerate}
		\item Define parameters $\lambda := \paren{2\cdot \log{n}}^{-3}$ and $\beta := 750 \cdot \lambda^{-2} \cdot \ln{(n)}$. 
		
		\item For $i = 1$ to $k$ {in parallel}:  {Compute} $\Ci{i} = \EDCS{\Gi{i},\beta,(1-\lambda)\cdot \beta}$ locally on machine $i$.

		\item Define the \emph{multi-graph} $C(V,E_C)$ with $E_C := \bigcup_{i=1}^{k} \Ci{i}$ (allowing for multiplicities). Notice that this multi-graph is edge partitioned across the machines. 
	\end{enumerate}
}

For any vertex $v \in V$, define $I(v) \subseteq k$ as the set of indices of the subgraphs that sampled vertex $v$. Notice that indices 
in $I(v)$ are $\kappa$-wise independent random variables. Additionally, it is easy to see that each graph $\Gi{i}$ is a vertex sampled subgraph of $G$ with sampling probability $p$.

\begin{remark}\label{rem:limited-independence} 
\emph{
As opposed to the previous vertex sampling approach of Czumaj~\etal~\cite{CzumajLMMOS17} that resulted in a partitioning of vertices of $G$ across different subgraphs, our way of sampling subgraphs in $\pedcs$
	results in each vertex appearing in $\Theta(p\cdot k)$ different subgraphs with high probability. This is necessary for our algorithm as we need to ensure that every edge of the input graph is sampled in this process. 
	However, this property introduces new challenges in the 
	MPC implementation of our algorithm as a naive implementation of this idea requires communicating $\Theta(p \cdot k)$ messages per each edge of the graph which cannot be done within the memory restrictions of the MPC model. 
	This is the main reason that we sample these subgraphs in $\pedcs$ with limited independence as opposed to truly independently to reduce the communication necessary per each edge to $O(\log{n})$. }
\end{remark}

We first prove some simple properties of \pedcs. 
\begin{proposition}\label{prop:pedcs-simple}
	For $\Delta \geq \paren{\frac{n}{s}} \cdot \paren{400 \cdot \log^{12}{(n)}}$, with probability $1-2/n^{8}$, 
	\begin{enumerate}
		\item For any vertex $v \in V$, $\card{I(v)} = p \cdot k \pm \lambda \cdot p \cdot k$. 
		\item For any edge $e \in E$, there exists at least one index $i \in [k]$ such that $e$ belongs to $\Gi{i}$. 
	\end{enumerate}
\end{proposition}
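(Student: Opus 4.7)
The plan is to handle each part by a Chernoff-type tail bound for sums of $\kappa$-wise independent Bernoulli variables, followed by a union bound. The $\kappa$-wise independence of each hash function $h_v$ (and the mutual independence of the $h_v$ across different vertices) is exactly what makes both sums amenable to such a bound.

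For part~(1), I would write $\card{I(v)} = \sum_{i \in [k]} \mathbf{1}[h_v(i)=0]$ as a sum of $\kappa$-wise independent Bernoulli$(p)$ indicators, with mean $pk = 800\log(n)/p$. The hypothesis $\Delta \ge (n/s) \cdot 400\log^{12}(n)$ forces $p$ to be small enough that $pk$ is polylogarithmically large and, in particular, that $\lambda^2 pk = \Omega(\log n)$ (recall $\lambda = (2\log n)^{-3}$). A standard $\kappa$-wise Chernoff bound (Bellare--Rompel or Schmidt--Siegel--Srinivasan) then yields $\Pr[\,|\card{I(v)} - pk| > \lambda \cdot pk\,] \le n^{-10}$, and a union bound over the $n$ vertices completes the first claim with room to spare.

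For part~(2), I would fix an edge $e = (u,v)$ and let $Y_i := \mathbf{1}[h_u(i)=0 \text{ and } h_v(i)=0]$, so that $X_e := \sum_{i \in [k]} Y_i$ counts the indices $i$ with $e \in \Gi{i}$. Since $h_u$ and $h_v$ are drawn independently of each other, and each is $\kappa$-wise independent over $[k]$, the $Y_i$ are themselves $\kappa$-wise independent Bernoulli$(p^2)$ variables with mean $\mu := p^2 k = 800\log n$. Applying the same $\kappa$-wise Chernoff bound at deviation $\mu/2$ gives $\Pr[X_e = 0] \le \Pr[X_e \le \mu/2] \le n^{-10}$, and a union bound over at most $n^2$ edges yields failure probability $n^{-8}$. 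Combined with part~(1) via one more union bound, this produces the claimed $1 - 2/n^{8}$ overall guarantee.

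The main subtlety will be verifying that $\kappa$-wise independence with $\kappa = 20\log n$ is genuinely strong enough to give tail probability $n^{-10}$ at the deviation scale we need. In part~(2) this is easy: the mean $\mu = 800\log n$ is a constant multiple of $\kappa$, and we deviate by a constant factor. In part~(1) the deviation scale $\lambda \cdot pk$ is much smaller than the mean, so one must lean on the hypothesis on $\Delta$ to make $pk$ large enough that $\lambda^2 pk$ meets the threshold required by the $\kappa$-wise Chernoff bound; the factor $400 \cdot \log^{12}(n)$ in the hypothesis is tuned precisely so that, after plugging in the definitions of $p$ and $k$, this threshold condition holds with slack.
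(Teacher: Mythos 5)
Your proposal follows essentially the same route as the paper's proof: part (1) via the bounded-independence Chernoff bound applied to the $\kappa$-wise independent indicators $\mathbf{1}[h_v(i)=0]$ with the hypothesis on $\Delta$ invoked to meet the $\kappa \le \delta^2\Ex[X]/2$ threshold, and part (2) via the same bound applied to the $\kappa$-wise independent products $\mathbf{1}[h_u(i)=0]\mathbf{1}[h_v(i)=0]$ with mean $p^2k = 800\log n$, followed by union bounds over vertices and edges. The only (cosmetic) difference is that in part (2) you bound the deviation at $\mu/2$ rather than at the full mean, which if anything fits the stated range $\delta\in(0,1)$ of the concentration inequality more cleanly than the paper's own application.
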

\begin{proof}
	Fix any vertex $v \in V$. Clearly, $\Ex\card{I(v)} = p \cdot k$. Moreover, $\card{I(v)}$ is sum of zero-one $\kappa$-wise independent random variables
	and hence by Chernoff bound with bounded independence (Proposition~\ref{prop:bounded-chernoff})
	\begin{align*}
		\Pr\paren{\card{\card{I(v)} - \Ex\card{I(v)}} \geq \lambda \cdot \Ex\card{I(v)}} \leq 2 \cdot \exp\paren{-\frac{\kappa}{2}} \leq 2 \cdot \exp\paren{-10\log{n}} \leq 1/n^{10}. 
	\end{align*}
	Note that $\Ex\card{I(v)} = p \cdot k = \Theta\paren{\sqrt{\frac{n \cdot \Delta}{s}}}$, $\lambda = (2 \cdot \log{n})^{-3}$ and hence by the bound on $\Delta$, we have $\lambda^{2} \cdot \Ex\card{I(v)}/2 = \Theta(\log{n}) = \kappa$, and hence
	we can indeed apply Proposition~\ref{prop:bounded-chernoff} here. By a union bound on all $n$ vertices, the first part holds w.p. $\geq 1-1/n^{9}$. 
	
	We now prove the second part. Fix an edge $e \in E$ and define the indicator random variables $X_1,\ldots,X_k$ where $X_i = 1$ iff $e$ is contained in the graph $\Gi{i}$. 
	Define $X:= \sum_{i=1}^{k} X_i$ to denote the number of graphs the edge $e$ belongs to. Clearly, $\Ex\bracket{X_i} = p^{2}$ for all $i \in [k]$ and hence $\Ex\bracket{X} = p^2 \cdot k$. 
	Moreover, the random variables $X_i$'s are $\kappa$-wise independent for $\kappa = 20\log{n}$. Hence, by Chernoff bound with bounded independence (Proposition~\ref{prop:bounded-chernoff}), 
	the probability that $e$ belongs to no graph $\Gi{i}$, i.e., $X = 0$ is at most, 
	\begin{align*}
		\Pr\paren{X = 0} \leq \Pr\Paren{\card{X - \Ex\bracket{X}} \geq \Ex\bracket{X}} \leq  2\cdot \exp\paren{-\frac{\kappa}{2}} \leq 2\cdot \exp\paren{-10\log{n}} \leq 1/n^{10}.
	\end{align*}
	Again, note that $\Ex\card{X} = p^{2} \cdot k = 800\log{n} = 2\kappa$ and hence we could apply Proposition~\ref{prop:bounded-chernoff}. 
	By a union bound on all $O(n^2)$ edges, the second part also holds w.p. at least $1-1/n^{8}$. Another union bound on this event and the event in the first part finalizes the proof. 
\end{proof}

We now prove that the graph $C$ defined in the last line of $\pedcs$ is also an EDCS of $G$ with appropriate parameters. The proof is quite similar to that of Lemma~\ref{lem:edcs-coreset} with some additional 
care to handle the difference between vertex sampled subgraphs and edge sampled ones. 

\begin{lemma}\label{lem:pedcs-correctness}
	For $\Delta \geq \paren{\frac{n}{s}} \cdot \paren{400 \cdot \log^{12}{(n)}}$, with probability $1-5/n^7$, $C$ is an $\EDCS{G,\beta_C,\beta^-_C}$ for parameters: 
	\[
	\lambda_C :=  \lambda^{1/2} \cdot \Theta(\log{n}) = o(1),~~~~ \beta_C:=  p \cdot k \cdot (1+\lambda_C) \cdot \beta, ~~\textnormal{and}~~ \beta^-_C = p \cdot k \cdot (1-\lambda_C) \cdot \beta.
	\]  
\end{lemma}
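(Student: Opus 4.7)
The plan is to mimic the argument of Lemma~\ref{lem:edcs-coreset}, but using vertex sampling in place of edge sampling, and with degrees of $C$ now computed as sums only over those subgraphs that actually contain the relevant vertex. The two main ingredients are: (i) Lemma~\ref{lem:edcs-vertex-sampled} applied pairwise to the $k$ sampled subgraphs, and (ii) Proposition~\ref{prop:pedcs-simple}, which controls $\card{I(v)}$ and guarantees that every edge of $G$ lies in at least one $\Gi{i}$. Condition on the intersection of the two high-probability events provided by these results; applying Lemma~\ref{lem:edcs-vertex-sampled} to each pair $(\Gi{i},\Gi{j})$ and union-bounding over the $\binom{k}{2} \leq n^2$ pairs gives, with probability $1 - 4/n^7$, that for every $i \neq j \in [k]$ and every $v \in V(\Gi{i}) \cap V(\Gi{j})$,
\[
\card{\deg{\Ci{i}}{v} - \deg{\Ci{j}}{v}} \;\leq\; O(\log n)\cdot \lambda^{1/2}\cdot \beta \;=\; \lambda_C \cdot \beta.
\]
A further union bound with the event in Proposition~\ref{prop:pedcs-simple} gives the target failure probability $5/n^7$.

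Next I would verify the two EDCS properties for $C$. For \propone, take $(u,v) \in C$ and fix some $j$ with $(u,v) \in \Ci{j}$; in particular $j \in I(u) \cap I(v)$. Since $\deg{C}{v} = \sum_{i \in I(v)} \deg{\Ci{i}}{v}$ (with multiplicities), the pairwise bound above yields $\deg{\Ci{i}}{v} \leq \deg{\Ci{j}}{v} + \lambda_C \beta$ for every $i \in I(v)$, and likewise for $u$. Combined with $\card{I(u)}, \card{I(v)} \leq (1+\lambda)pk$ from Proposition~\ref{prop:pedcs-simple} and \propone of $\Ci{j}$ (which gives $\deg{\Ci{j}}{u} + \deg{\Ci{j}}{v} \leq \beta$), this yields
\[
\deg{C}{u} + \deg{C}{v} \;\leq\; (1+\lambda)pk\cdot\beta + 2(1+\lambda)pk\cdot\lambda_C\beta \;\leq\; (1+\lambda_C)pk\beta \;=\; \beta_C,
\]
after absorbing the lower-order $\lambda$-terms into $\lambda_C$ (using $\lambda \ll \lambda_C$).

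For \proptwo, take $(u,v) \in G \setminus C$. By Proposition~\ref{prop:pedcs-simple}, some $j$ has $(u,v) \in \Gi{j}$, and then $(u,v) \in \Gi{j} \setminus \Ci{j}$, so \proptwo of $\Ci{j}$ gives $\deg{\Ci{j}}{u} + \deg{\Ci{j}}{v} \geq (1-\lambda)\beta$. Applying the pairwise degree closeness in the reverse direction and using the lower bound $\card{I(u)}, \card{I(v)} \geq (1-\lambda)pk$ on the positive contributions and the upper bound $(1+\lambda)pk$ on the error terms,
\[
\deg{C}{u} + \deg{C}{v} \;\geq\; (1-\lambda)pk\cdot(1-\lambda)\beta - 2(1+\lambda)pk\cdot\lambda_C\beta \;\geq\; (1-\lambda_C)pk\beta \;=\; \beta^-_C,
\]
again absorbing $\lambda$-terms into $\lambda_C$.

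The main obstacle I anticipate is purely bookkeeping: the degree of $v$ in $C$ is a sum only over $i \in I(v)$, whereas in the edge-sampled analog of Lemma~\ref{lem:edcs-coreset} every machine contributes to every vertex's degree, so one needs to combine the pairwise comparison from Lemma~\ref{lem:edcs-vertex-sampled} (which applies only to vertices shared by the two subgraphs, hence in particular to those in $I(u) \cap I(v)$ relative to a chosen anchor index $j$) with the $(1\pm\lambda)pk$ bound on $\card{I(\cdot)}$ from Proposition~\ref{prop:pedcs-simple}, in a way that turns all multiplicative and additive slack into a single parameter $\lambda_C$. Once this accounting is set up, everything else is a direct transcription of the Lemma~\ref{lem:edcs-coreset} argument.
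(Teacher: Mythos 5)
Your proof is correct and follows essentially the same route as the paper's: apply Lemma~\ref{lem:edcs-vertex-sampled} pairwise with a union bound over the $\binom{k}{2}$ pairs, intersect with the event of Proposition~\ref{prop:pedcs-simple}, then verify \propone\ and \proptwo\ of $C$ by anchoring at an index $j$ containing the relevant edge, writing $\deg{C}{\cdot}$ as a sum over $I(\cdot)$, and using the $(1\pm\lambda)pk$ bound on $\card{I(\cdot)}$ together with the pairwise degree closeness, absorbing all slack into the $\Theta$ in $\lambda_C$. The bookkeeping you flagged (degrees summing only over $i \in I(v)$, anchor $j$ necessarily in $I(u)\cap I(v)$) is handled exactly this way in the paper's proof as well.
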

\begin{proof}
	Recall that each graph $\Gi{i}$ is a vertex sampled subgraph of $G$ with sampling probability $p$. Hence, 
	by Lemma~\ref{lem:edcs-vertex-sampled} and a union bound, with probability $1-4/n^{7}$, for any two subgraphs $\Ci{i}$ and $\Ci{j}$ for $i,j \in [k]$, and any vertex $v \in \Vi{i} \cap \Vi{j}$, we have, 
	\begin{align}
		\card{\deg{\Ci{i}}{v} - \deg{\Ci{j}}{v}} \leq O(\log{n}) \cdot \lambda^{1/2} \cdot \beta. \label{eq:pairwise-degrees}
	\end{align}

	In the following, we condition on the events in Eq~(\ref{eq:pairwise-degrees}) and Proposition~\ref{prop:pedcs-simple} which happen together with probability at least $1-5/n^7$. 
	
	We now prove that $C$ is indeed an $\EDCS{G,\beta_C,\beta^-_C}$ for the given parameters. 
	First, consider an edge $(u,v) \in C$ and let $j \in [k]$ be such that $(u,v) \in \Ci{j}$ as well. 
	We have,
	\begin{align*}
		\deg{C}{u} + \deg{C}{v} &= \sum_{i \in I(u)} \deg{\Ci{i}}{u} + \sum_{i \in I(v)} \deg{\Ci{i}}{v} \\
		&\Leq{Eq~(\ref{eq:pairwise-degrees})} \card{I(u)} \cdot \deg{\Ci{j}}{u} + \card{I(v)} \cdot \deg{\Ci{j}}{v} + \paren{\card{I(u)} + \card{I(v)}} 
		\cdot O(\log{n}) \cdot \lambda^{1/2} \cdot \beta \\
		&\!\!\!\!\!\!\!\!\!\Leq{Proposition~\ref{prop:pedcs-simple}} p \cdot k \cdot \beta + O(\lambda) \cdot p \cdot k \cdot \beta + p \cdot k \cdot O(\log{n}) \cdot \lambda^{1/2} \cdot \beta \tag{by \propone of EDCS $\Ci{j}$ with parameter $\beta$} \\
		&~~\leq p \cdot k \cdot (1+\lambda_C) \cdot \beta = \beta_C. 
	\end{align*}
	Hence, $C$ satisfies \propone of EDCS for parameter $\beta_C$. 
	
	Now consider an edge $(u,v) \in G \setminus C$ and let $j \in [k]$ be such that $(u,v) \in \Gi{j} \setminus \Ci{j}$ (the existence of $j$ follows from conditioning on the event in Proposition~\ref{prop:pedcs-simple}). We have, 
	\begin{align*}
		\deg{C}{u} + \deg{C}{v} &= \sum_{i \in I(u)} \deg{\Ci{i}}{u} + \sum_{i \in I(v)} \deg{\Ci{i}}{v} \\
		&\Geq{Eq~(\ref{eq:pairwise-degrees})} \card{I(u)} \cdot \deg{\Ci{j}}{u} + \card{I(v)} \cdot \deg{\Ci{j}}{v} - \paren{\card{I(u)} + \card{I(v)}} 
		\cdot O(\log{n}) \cdot \lambda^{1/2} \cdot \beta \\
		&\!\!\!\!\!\!\!\!\!\Geq{Proposition~\ref{prop:pedcs-simple}} p \cdot k \cdot (1-\lambda) \cdot \beta - O(\lambda) \cdot p \cdot k \cdot \beta - p \cdot k \cdot O(\log{n}) \cdot \lambda^{1/2} \cdot \beta \tag{by \proptwo of EDCS $\Ci{j}$ with parameter $\beta$} \\
		&~~\geq p \cdot k \cdot (1-\lambda_C) \cdot \beta = \beta^-_C. 
	\end{align*}
	Hence, $C$ also satisfies \proptwo of EDCS for parameter $\beta^-_C$, finalizing the proof. 
\end{proof}

Before moving on, we prove a simple claim that ensures that the memory of $\Ot(s)$ per machine in $\pedcs$ is enough for storing each subgraph $\Gi{i}$ and computing $\Ci{i}$ locally. 
	
\begin{claim}\label{clm:mpc-pedcs-size}
		With probability $1-1/n^{18}$, the total number of edges in each subgraph $\Gi{i}$ of $G$ in $\pedcs(G,\Delta,s)$ is $O(s \cdot \log^{2}{n})$. 
\end{claim}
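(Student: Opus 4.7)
The plan is to upper bound $|E(\Gi{i})|$ via the deterministic handshake inequality
\begin{equation*}
2 \cdot |E(\Gi{i})| \;=\; \sum_{v \in \Vi{i}} \deg{\Gi{i}}{v} \;\leq\; |\Vi{i}| \cdot \max_{v \in \Vi{i}} \deg{\Gi{i}}{v},
\end{equation*}
and to show that each of the two factors on the right concentrates around its expectation via a standard Chernoff bound. The crucial structural observation is that, while for any fixed vertex $v$ the hash values $h_v(1),\ldots,h_v(k)$ are only $\kappa$-wise independent, \emph{across} distinct vertices the hash functions are chosen independently. Consequently, for any \emph{fixed} coordinate $i \in [k]$, the indicators $Y_v := \bone\bracket{h_v(i)=0}$ for $v \in V$ form a collection of \emph{mutually independent} $\textnormal{Bernoulli}(p)$ random variables, which enables the use of the fully-independent Chernoff bound (Proposition~\ref{prop:chernoff}) rather than the bounded-independence version (Proposition~\ref{prop:bounded-chernoff}) used in Proposition~\ref{prop:pedcs-simple}.

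I would then establish two concentration events for each $i \in [k]$. First, $\Ex\bracket{|\Vi{i}|} = pn$, and combining the hypothesis $\Delta \geq 400\log^{12}(n) \cdot n/s$ with the trivial bound $\Delta \leq n$ gives $pn = 200\log(n) \cdot \sqrt{sn/\Delta} \geq 200\log n$, so Chernoff yields $\Pr\paren{|\Vi{i}| > 2pn} \leq n^{-60}$. Second, for each fixed $v \in V$, the degree $\deg{\Gi{i}}{v}$ is at most a sum of $\deg{G}{v} \leq \Delta$ independent $\textnormal{Bernoulli}(p)$ indicators (one per neighbour of $v$ in $G$) with mean at most $p\Delta$; the same hypothesis forces $p\Delta = 200\log(n)\cdot\sqrt{s\Delta/n} \geq 4000\log^{7} n$, so Chernoff yields $\Pr\paren{\deg{\Gi{i}}{v} > 2p\Delta} \leq \exp\paren{-\Omega(\log^{7} n)} \leq n^{-100}$. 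A union bound over the $n$ vertices and all $k \leq \poly(n)$ coordinates then guarantees that both events hold for every $i \in [k]$ simultaneously, with total failure probability at most $n^{-18}$.

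Plugging the two tail bounds into the handshake inequality yields
\begin{equation*}
|E(\Gi{i})| \leq \frac{1}{2} \cdot (2pn)(2p\Delta) = 2 p^2 n \Delta = 2 \cdot (200 \log n)^2 \cdot s = O(s \log^{2} n),
\end{equation*}
which is exactly the claimed bound. The only step that warrants any genuine care is the first one: one must carefully separate the $\kappa$-wise correlation across coordinates $i$ (within a single vertex's hash function) from the full independence across vertices, the latter being what enables the stronger Chernoff bound and supplies the exponential slack needed to absorb the union bound. Everything else reduces to two routine Chernoff applications, with the $\log^{7} n$ factor in the deviation exponent providing comfortable slack.
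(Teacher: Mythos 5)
Your proof is correct and follows essentially the same route as the paper's: bound $|\Vi{i}|$ and $\max_{v}\deg{\Gi{i}}{v}$ separately by Chernoff concentration and multiply. Your extra remarks — isolating the mutual independence across distinct vertices' hash functions for a fixed coordinate $i$, and supplying the numerical lower bounds $pn \geq 200\log n$ and $p\Delta \geq 4000\log^7 n$ derived from the hypothesis on $\Delta$ — make explicit what the paper invokes implicitly, but do not change the argument.
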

	\begin{proof}
		Let $v$ be a vertex in $\Gi{i}$. By the independent sampling of vertices in a vertex sampled subgraph, we have that $\Ex\bracket{\deg{\Gi{i}}{v}} = p \cdot \deg{G}{v} \leq p \cdot \Delta = \Theta(\sqrt{\frac{s \cdot \Delta}{n}} \cdot \log{n})$. 
		By Chernoff bound, with probability $1-1/n^{20}$, degree of $v$ is $O(\sqrt{\frac{s \cdot \Delta}{n}} \cdot \log{n})$. We can then take a union bound on all vertices in $\Gi{i}$ and have that with probability $1-1/n^{19}$, the maximum
		 degree of $\Gi{i}$ is $O(\sqrt{\frac{s \cdot \Delta}{n}} \cdot \log{n})$. 
		At the same time, the expected number of vertices sampled in $\Gi{i}$ is at most $p \cdot n = \Theta(\sqrt{\frac{s \cdot n}{\Delta}} \cdot \log{n})$. 
		Another application of Chernoff bound ensures that the total number of vertices sampled in $\Gi{i}$ is $O(\sqrt{\frac{s \cdot n}{\Delta}} \cdot \log{n})$
		with probability $1-1/n^{19}$. As a result, the total number of edges in $\Gi{i}$ is $O(\sqrt{\frac{s \cdot \Delta}{n}} \cdot \log{n}) \cdot O(\sqrt{\frac{s \cdot n}{\Delta}} \cdot \log{n}) = O(s \cdot \log^{2}{n})$ 
		with probability at least $1-1/n^{18}$. 
	\end{proof}
\subsection{Random Match Algorithm}\label{sec:random-match}

In our main algorithm, we need a subroutine for finding a large matching incident on the set of ``high'' degree vertices of a given graph $G$ which its edges are initially partitioned across many machines. 
In this section, we provide such an algorithm based on a simple randomized procedure that is easily implementable in constant number of MPC rounds. 

\textbox{\textnormal{$\rmatch(G,S,\Delta)$. A parallel algorithm for finding a matching $M$ incident on given vertices $S$ in a graph $G$ with maximum degree $\Delta$. }}{

	\begin{enumerate}
		\item Sample each vertex in $S$ with probability $1/2$ independently to obtain a set $S'$. 
		\item For each vertex in $S'$ pick one of its incident edges to $G \setminus {S'}$ uniformly at random. Let $\Esample$ be the set of these edges.
		\item Let $M$ be the matching in $\Esample$ consists of all edges with unique endpoints; these are edges $(u,v) \in \Esample$ such that neither $u$ nor $v$ are incident on any other edge of $\Esample$. 
	\end{enumerate}
}

We prove that if the set $S$ consists of high degree vertices of $G$, then $\rmatch(G,S,\Delta)$ finds a large matching in $S$. Formally, 

\begin{lemma}\label{lem:random-match}
	Suppose $G(V,E)$ is a graph with maximum degree $\Delta \geq 100\log{n}$ and $S \subseteq V$ is such that for all $v \in S$, $\deg{G}{v} \geq \Delta/3$. 
	The size of the matching $M := \rmatch(G,S,\Delta)$ is in expectation $\Ex\card{M} = \Theta(\card{S})$.
\end{lemma}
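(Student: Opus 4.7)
The plan is to show $\Ex|M|=\Theta(|S|)$ by a direct one-vertex analysis: for each $v\in S$, lower bound the probability that $v$ ends up matched in $M$ by a universal constant. The upper bound $\Ex|M|\le |S|/2$ is essentially free, since every edge of $\Esample$ has exactly one endpoint in $S'$, so $|M|\le |S'|\le |S|$. All the work is in the lower bound.

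The first structural observation I would make is that if $v\in S'$ and $v$ picks an edge $e_v=(v,u_v)$ with $u_v\notin S'$, then the $v$-endpoint of $e_v$ is \emph{automatically} unique in $\Esample$: any other edge in $\Esample$ has the form $(v',u_{v'})$ with $v'\in S'$ and $u_{v'}\notin S'$, so such an edge cannot touch $v\in S'$ as its $u$-endpoint, and no other $v'\in S'$ picks a second edge at $v$. Hence $e_v\in M$ iff no \emph{other} $w\in N(u_v)\cap S'$ happens to pick $u_v$.

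Next I would fix $v\in S$, condition on $v\in S'$ (probability $1/2$), and let $N^+(w):=N(w)\setminus S'$ for each $w\in S$. Since $|N(w)|\ge \Delta/3$ and each neighbor lies in $S'$ independently with probability at most $1/2$, a Chernoff bound (Proposition \ref{prop:chernoff}) together with the assumption $\Delta\ge 100\log n$ gives $|N^+(w)|\ge \Delta/9$ with probability at least $1-1/n^{10}$. A union bound over $w\in S$ then produces a single high-probability ``good'' event $\event$ on which $|N^+(w)|\ge \Delta/9$ for every $w\in S$ simultaneously.

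Condition on $\event$ and on the sampling $S'$. Now the only remaining randomness is the independent choices $u_w$ made by the vertices $w\in S'$. Vertex $v$ picks $u_v$ uniformly in $N^+(v)$, and for each competitor $w\in N(u_v)\cap S'\setminus\{v\}$, the probability that $w$ picks $u_v$ is $1/|N^+(w)|\le 9/\Delta$. The choices are conditionally independent across $w$ given $S'$, so
\begin{align*}
\Pr\bigl[e_v\in M\mid S',\,v\in S'\bigr]\ \ge\ \Bigl(1-\tfrac{9}{\Delta}\Bigr)^{|N(u_v)\cap S'|-1}\ \ge\ \Bigl(1-\tfrac{9}{\Delta}\Bigr)^{\Delta}\ \ge\ e^{-10},
\end{align*}
for $\Delta$ large enough. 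Combining with $\Pr[v\in S']=1/2$ and $\Pr[\event]=1-o(1)$ and summing over $v\in S$ yields $\Ex|M|\ge c|S|$ for a constant $c>0$, which together with $\Ex|M|\le|S|$ finishes the proof.

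The mildly delicate part is making sure the conditioning lines up correctly: after fixing $v\in S'$, the set $N^+(v)$ is still random (it depends on how $S$'s other vertices sample), so $\event$ has to be defined as a property of $S'$ alone, independent of the later edge choices, so that competitors' picks remain conditionally independent given $S'$. Apart from that, everything reduces to Chernoff plus the elementary inequality $(1-9/\Delta)^\Delta\ge e^{-10}$; there is no genuine obstacle.
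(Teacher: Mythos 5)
Your proof is correct and essentially identical to the paper's: both arguments condition on the same high-probability event that every relevant vertex has $\Omega(\Delta)$ neighbours outside $S'$, observe that the $S'$-endpoint of each sampled edge is automatically unique so one only needs the other endpoint $u_v$ to have degree one in $\Esample$, and then bound that probability by $(1-O(1/\Delta))^{\Delta}=\Theta(1)$. The only differences are cosmetic -- your threshold $\Delta/9$ versus the paper's $\Delta/7$, and your explicit statement of the cheap upper bound $\Ex\card{M}\le\Ex\card{S'}=\card{S}/2$, which the paper leaves implicit.
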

\begin{proof}
	Fix any vertex $v \in S'$; we argue that with high probability, degree of $v$ to vertices in $G \setminus S'$ is at least $\Delta/7$.  This follows immediately as in expectation, at most half of the neighbors of $v$ belong to $S'$ and 
	we can apply Chernoff bound as $\Delta \geq 100\log{n}$. We apply a union bound on all vertices in $S'$ and in the following we condition on the event that all these vertices have at least $\Delta/7$ edges to $G \setminus S'$, which
	 happens with high probability. 
	
	By construction, any vertex in $S'$ has degree exactly one in $\Esample$. As such, to lower bound the size of $M$, we only need to lower bound the number of vertices in $G \setminus S'$ that have degree exactly one in $\Esample$. 
	Fix a vertex $v \in S'$ and consider the neighbor $u \in G \setminus S'$ of $v$ in $\Esample$. We know that $u$ has at most $\Delta-1$ other neighbors in $S'$ and each of these neighbors are choosing $u$ with probability at most $7/\Delta$ (as each of 
	them has at least $\Delta/7$ neighbors). Hence, 
	\begin{align*}
		\Pr\paren{\textnormal{$u$ has degree $1$ in $\Esample$}} \geq \paren{1-\frac{7}{\Delta}}^{\Delta-1} = \Theta(1). 
	\end{align*}
	
	As such, in expectation, $\Theta(S)$ vertices in $G \setminus S'$ also have degree exactly one in $\Esample$, which implies $\Ex\card{M} = \Theta(\card{S})$. 
\end{proof}

\subsection{A Parallel Algorithm for Matching and Vertex Cover}\label{sec:mpc-main}

We now present our main parallel algorithm. For sake of clarity, we present and analyze our algorithm here for the case when the memory allowed per each machine is $\Ot(n)$. In Section~\ref{sec:mpc-smaller-memory}, we show
how to easily extend this algorithm to the case when memory per machine is $O(s)$ for any choice of $s = n^{\Omega(1)}$. 

\textbox{\textnormal{$\palg(G,\Delta)$. A parallel algorithm for computing a vertex cover $\va$ and a matching $\ma$ of a given graph $G$ with maximum degree at most $\Delta$. }}{

	\begin{enumerate}
		\item If $\Delta \leq \paren{400 \cdot \log^{12}{n}}$ send $G$ to a single machine and run the following algorithm locally: 
		Compute a maximal matching $\ma$ in $G$ and let $\va$ be the set of vertices matched by $\ma$. 
		Return $\va$ and $\ma$. 
		\item If $\Delta > \paren{400 \cdot \log^{12}{n}}$, we run the following algorithm. 
		\item Compute an EDCS $C := \pedcs(G,\Delta,n)$ in parallel. Let $\beta_C,\beta^-_C$ be the parameters of this EDCS (as specified in Claim~\ref{clm:palg-pedcs} below). 
		\item Define $\vh := \set{v \in V \mid \deg{C}{v} \geq \beta^-_C/2}$ be the set of ``high'' degree vertices in $C$. 
		\item Compute a matching $\mh := \rmatch(C,\vh,\beta_C)$. 
		\item Define $\vm := V \setminus \Paren{\vh \cup V(\mh)}$ as the set of vertices that are neither high degree in $C$ nor matched by $\mh$. Let $\cm$ be the induced subgraph of $C$ on vertices $\vm$ with parallel edges removed.
		\item Recursively compute $(\vr,\mr) := \palg(\cm, \beta_C)$. 
		\item Return $\va := \vh \cup V(\mh) \cup \vr$ and $\ma := \mh \cup \mr$. 
	\end{enumerate}	
}

We start by proving some simple properties of $\palg$. The following claim is a direct corollary of Lemma~\ref{lem:pedcs-correctness} by setting $s = n$. 
\begin{claim}\label{clm:palg-pedcs}
	The subgraph $C: = \pedcs(G,\Delta,n)$ computed in $\palg(G,\Delta)$ is an $\EDCS{G,\beta_C,\beta^-_C}$ for parameters: 
	\[ 
		\lambda_C := o(1) ~~~~~ \beta_C := \sqrt{\Delta} \cdot O(\log^{5}{n}) ~~~~~ \beta^-_C := (1-\lambda_C) \cdot \beta_C  
	\]
	with probability at least $1-1/n^5$. 
\end{claim}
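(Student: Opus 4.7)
The plan is to derive this claim as a direct invocation of Lemma~\ref{lem:pedcs-correctness} with $s = n$. The main work is simply verifying the precondition and substituting the values of $p$, $k$, $\beta$, $\lambda$ used inside $\pedcs$ into the parameter expressions produced by the lemma. Since the claim is essentially flagged as a corollary, I do not expect any substantive obstacle; the only thing to be careful about is confirming the hypothesis on $\Delta$ holds in the regime where $\palg$ makes the recursive call.

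First I would check the hypothesis of Lemma~\ref{lem:pedcs-correctness}. That lemma requires $\Delta \geq (n/s)\cdot 400\log^{12}{n}$, which with $s = n$ becomes $\Delta \geq 400\log^{12}{n}$. By inspection of $\palg(G,\Delta)$, the call $\pedcs(G,\Delta,n)$ is only executed in the branch where $\Delta > 400\log^{12}{n}$ (otherwise $\palg$ solves the problem locally in Step~1 and returns before ever invoking $\pedcs$). So the hypothesis is satisfied.

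Next I would invoke Lemma~\ref{lem:pedcs-correctness} directly. It yields that, with probability at least $1 - 5/n^7 \geq 1 - 1/n^5$, the multigraph $C$ is an $\EDCS{G,\beta_C,\beta^-_C}$ with $\lambda_C = \lambda^{1/2}\cdot \Theta(\log{n})$, $\beta_C = p\cdot k\cdot (1+\lambda_C)\cdot\beta$, and $\beta^-_C = p\cdot k\cdot (1-\lambda_C)\cdot\beta$. It then remains to show these parameters match the ones claimed by substituting the values chosen inside $\pedcs$. With $s = n$ we have $p = (200\log{n})/\sqrt{\Delta}$ and $k = 800\log{n}/p^2$, so $p\cdot k = 800\log{n}/p = \Theta(\sqrt{\Delta}/\log{n}) \cdot \Theta(\log{n}) = \Theta(\sqrt{\Delta})$. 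The local EDCS parameters are $\lambda = (2\log{n})^{-3}$ and $\beta = 750\lambda^{-2}\ln{n} = O(\polylog{n})$, so $\beta_C = p\cdot k\cdot (1+\lambda_C)\cdot \beta = \sqrt{\Delta}\cdot O(\polylog{n})$, matching the stated form $\sqrt{\Delta}\cdot O(\log^{5}{n})$ up to the polylog exponent, and $\beta^-_C = (1-\lambda_C)\cdot \beta_C$ with $\lambda_C = \lambda^{1/2}\cdot\Theta(\log{n}) = \Theta((\log{n})^{-1/2}) = o(1)$.

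The only mild subtlety worth flagging is that the claim statement packs everything into a big-$O$ bound on the polylogarithmic factor of $\beta_C$, so the substitution does not need to be tight; any bound of the form $\sqrt{\Delta}\cdot \poly(\log n)$ with $\lambda_C = o(1)$ and $\beta^-_C = (1-\lambda_C)\beta_C$ suffices. Hence the claim follows with no additional argument beyond one application of Lemma~\ref{lem:pedcs-correctness} and the parameter substitution above.
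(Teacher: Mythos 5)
Your proof is correct and takes exactly the paper's approach: the paper itself states only that the claim ``is a direct corollary of Lemma~\ref{lem:pedcs-correctness} by setting $s=n$,'' and you carry out the routine verification (the precondition $\Delta > 400\log^{12}n$ is guaranteed by the branch condition of $\palg$, the probability $1-5/n^7 \geq 1-1/n^5$, and the parameter substitution $p\cdot k = 4\sqrt{\Delta}$, $\beta = O(\log^7 n)$, $\lambda_C = \Theta(\log^{-1/2}n) = o(1)$). You are right to flag that the substitution actually yields $\beta_C = \sqrt{\Delta}\cdot O(\log^7 n)$ rather than the $\sqrt{\Delta}\cdot O(\log^5 n)$ written in the claim; this appears to be a minor bookkeeping slip in the paper's exponent, and as you note it is immaterial since every downstream use only needs $\beta_C = \sqrt{\Delta}\cdot\polylog(n)$ with $\lambda_C = o(1)$.
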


Similarly, the following claim follows easily from Lemma~\ref{lem:random-match}.
\begin{claim}\label{clm:palg-rmatch}
	Conditioned on $C=\EDCS{G,\beta_C,\beta^-_C}$, matching $\mh = \rmatch(C,\vh,\beta_C)$ has expected size $\Ex\card{\mh} = \Omega(\card{\vh})$. 
\end{claim}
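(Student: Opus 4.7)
} The plan is to simply verify that the hypotheses of Lemma~\ref{lem:random-match} are satisfied when we invoke $\rmatch$ on the pair $(C,\vh)$ with degree parameter $\beta_C$, and then read off the conclusion. Recall that we are in the branch of $\palg$ where $\Delta > 400 \cdot \log^{12} n$, so by Claim~\ref{clm:palg-pedcs} the parameter $\beta_C = \sqrt{\Delta}\cdot O(\log^{5} n)$ is at least $\Omega(\log^{11} n)$, comfortably above the $100\log n$ threshold required by Lemma~\ref{lem:random-match}. Also, $\lambda_C = o(1)$, so we may treat $(1-\lambda_C)$ as arbitrarily close to $1$ for $n$ large enough.

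Next I would check the maximum degree condition on the input graph to $\rmatch$. By \propone of an EDCS, for every edge $(u,v)\in C$ we have $\deg{C}{u}+\deg{C}{v}\leq \beta_C$, which immediately implies that the maximum degree of $C$ is at most $\beta_C$. Hence $\beta_C$ plays the role of $\Delta$ in the statement of Lemma~\ref{lem:random-match}.

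It remains to verify the lower bound on the degrees of the vertices in $\vh$ (the analogue of the set $S$ in Lemma~\ref{lem:random-match}). By definition of $\vh$ in $\palg$, every $v\in\vh$ satisfies
\[
\deg{C}{v}\;\geq\;\beta^-_C/2 \;=\; (1-\lambda_C)\,\beta_C/2 \;\geq\; \beta_C/3,
\]
where the last inequality uses $\lambda_C = o(1) < 1/3$ for sufficiently large $n$. Thus every vertex in $\vh$ has degree at least a third of the maximum degree of $C$, matching the precondition of Lemma~\ref{lem:random-match}.

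With all hypotheses in place, applying Lemma~\ref{lem:random-match} to the graph $C$ with set $S = \vh$ and degree bound $\beta_C$ yields $\Ex\card{\mh} = \Theta(\card{\vh}) = \Omega(\card{\vh})$, which is precisely the claim. I do not anticipate any obstacle: the only nontrivial content is the observation that \propone of the EDCS gives the requisite maximum-degree bound, and the routine arithmetic that $\beta^-_C/2 \geq \beta_C/3$ once $\lambda_C = o(1)$.
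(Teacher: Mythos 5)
Your proposal is correct and follows exactly the paper's own argument: conditioned on $C$ being an $\EDCS{G,\beta_C,\beta^-_C}$, \propone bounds the maximum degree of $C$ by $\beta_C$, every vertex in $\vh$ has degree at least $\beta^-_C/2 \geq \beta_C/3$ since $\lambda_C = o(1)$, and Lemma~\ref{lem:random-match} then gives $\Ex\card{\mh} = \Theta(\card{\vh})$. Your additional check that $\beta_C \geq 100\log n$ is a detail the paper leaves implicit, but the route is the same.
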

\begin{proof}
	With this conditioning, the maximum degree of $G$ is at most $\beta_C$, while the degree of vertices $\vh$ is at least $\beta^-_C/2 \geq \beta_C/3$. Hence, we can apply
	Lemma~\ref{lem:random-match} and prove the statement. 
\end{proof}

Let $T$ be the number of recursive calls made by $\palg(G,\Delta)$. We refer to any $t \in [T]$ as a \emph{step} of $\palg$. We bound the total number of steps in $\palg$ as follows. 
\begin{claim}\label{clm:palg-steps}
	The total number of steps made by $\palg(G,\Delta)$ is $T = O(\log\log{\Delta})$. 
\end{claim}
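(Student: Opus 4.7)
My proof plan is straightforward: track how the maximum-degree parameter $\Delta$ shrinks between successive recursive calls, and observe that it undergoes a square-root-like decay that collapses any initial $\Delta$ to a polylogarithmic value in $O(\log\log\Delta)$ iterations.

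First I would unpack the recursion. In one call of $\palg(G,\Delta)$, as long as $\Delta > 400\log^{12}n$, we invoke $\pedcs(G,\Delta,n)$ and then recurse on $\cm$, which is an induced subgraph of the EDCS $C$, with the new degree bound set to $\beta_C$. By \propone of an EDCS, every vertex of $C$ (hence of $\cm$) has degree at most $\beta_C$, so $\beta_C$ is indeed a valid upper bound on the maximum degree of the recursive input. Combining this with Claim~\ref{clm:palg-pedcs}, the new degree parameter at the next step satisfies
\[
   \Delta_{\text{new}} \;=\; \beta_C \;=\; \sqrt{\Delta}\cdot L, \qquad \text{where } L := O(\log^{5}n).
\]

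Next I would solve the resulting recurrence. Let $\Delta_0 := \Delta$ and $\Delta_{t+1} := L\sqrt{\Delta_t}$. A simple induction on $t$ yields
\[
   \Delta_t \;\le\; L^{\,2\,(1-2^{-t})}\cdot \Delta^{2^{-t}} \;\le\; L^{2}\cdot \Delta^{2^{-t}}.
\]
Recursion terminates (hits the base case) as soon as $\Delta_t \le 400\log^{12}n$. Plugging in $L^2 = O(\log^{10}n)$, a sufficient condition is $\Delta^{2^{-t}} = O(\log^{2}n)$, i.e.\ $2^{-t}\log\Delta = O(\log\log n)$. Since we are always within the nontrivial regime $\Delta \le n$, this rearranges to $t \;\ge\; \log_2\!\bigl(\log\Delta/\Theta(\log\log n)\bigr) \;=\; O(\log\log \Delta)$, giving the claimed bound $T=O(\log\log\Delta)$.

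I don't expect any real obstacle here: the only things to verify carefully are (i) that the recursion's degree parameter truly is $\beta_C$ (which follows from \propone of the EDCS applied to $C \supseteq \cm$), and (ii) that the conditioning on the high-probability event of Claim~\ref{clm:palg-pedcs} is valid at every step, so that the recurrence $\Delta_{t+1} = L\sqrt{\Delta_t}$ holds deterministically along the execution path. The latter requires a union bound across the at most $O(\log\log \Delta)$ recursive calls, each incurring a failure probability of $1/n^5$, which is negligible.
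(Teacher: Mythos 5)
Your proof is correct and follows essentially the same route as the paper: set up the degree recurrence $\Delta\mapsto\beta_C(\Delta)=L\sqrt{\Delta}$ with $L=\polylog(n)$ and observe the doubly logarithmic decay to the base case. (The paper states this more tersely via $F(\Delta)\leq F(\Delta^{2/3})+1$, which hides the polylog factor $L$ that you track explicitly.) One small note: your closing union-bound remark is unnecessary, since $\beta_C$ is a deterministic function of $\Delta$, $s$, and $n$ (it is a parameter the algorithm sets, not a measured quantity), so the sequence of degree bounds---and hence the step count $T$---is deterministic regardless of whether the EDCS events of Claim~\ref{clm:palg-pedcs} hold.
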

\begin{proof}
	Define a function $F(\Delta)$ denoting the number of recursive calls made by $\palg$ with maximum degree $\Delta$. As $\palg(G,\Delta)$ runs $\palg(\cm,\beta_C)$ for $\beta_C < \Delta^{2/3}$, we have, 
	$F(\Delta) \leq F(\Delta^{2/3}) + 1$ for $\Delta > \paren{400\cdot \log^{12}{n}}$ and $F(\Delta) = 1$ otherwise. 
	It is now easy to see that $F(\Delta) = O(\log\log{\Delta})$, finalizing the proof. 
\end{proof}

In each step, $\palg$ runs
the subroutines $\pedcs$ and $\rmatch$ once. We say that a run of $\pedcs$ is \emph{valid} in this step iff the high probability event in Claim~\ref{clm:palg-pedcs} happens. 
Roughly speaking, this means that $\pedcs$ is valid when it returns the ``correct'' output. Additionally, we say that a step of $\palg$ is valid if $\pedcs$ subroutine in this step is valid.  
We define $\event$ as the event that all $T$ steps of $\palg(G,\Delta)$ are valid. By Claims~\ref{clm:palg-pedcs} each step of $\palg$ is valid with probability at 
least $1-1/n^{5}$. As there are in total $T= O(\log\log{n})$ steps by Claim~\ref{clm:palg-steps}, $\event$ happens with probability at least $1-1/n^{4}$. 

We are now ready to prove the correctness of $\palg$. 
\begin{lemma}\label{lem:palg-correct}
	For any graph $G(V,E)$, $\palg(G,n)$ with constant probability outputs a matching $\ma$ which is an $O(1)$-approximation to the maximum matching of $G$ and a vertex cover $\va$ which is an $O(1)$-approximation
	to the minimum vertex cover of $G$. 
\end{lemma}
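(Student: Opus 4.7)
The plan is to prove both approximation guarantees simultaneously by establishing that $|\va| \leq O(|\ma|)$ with probability bounded below by a universal constant, after which Proposition~\ref{prop:vc-matching-alpha} immediately yields both bounds at once. I will proceed by induction on the recursion depth $T = O(\log\log n)$ of $\palg$ (Claim~\ref{clm:palg-steps}), conditioning throughout on the event $\event$ that every invocation of $\pedcs$ along the recursion is valid; by Claim~\ref{clm:palg-pedcs} and a union bound over the $T$ levels, $\Pr(\event) \geq 1 - o(1)$, so this conditioning only loses $o(1)$ in probability.

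My first step is feasibility. The matching $\ma = \mh \cup \mr$ is valid because $\mh$ is a matching produced by $\rmatch$ and $\mr$ is computed recursively on the induced subgraph $\cm$ whose vertex set $\vm$ is disjoint from $V(\mh)$ by construction. The vertex cover $\va = \vh \cup V(\mh) \cup \vr$ is valid because, on $\event$, every edge $e \in G \setminus C$ is covered by $\vh$ via \proptwo of the EDCS $C$; every edge of $C$ incident to $\vh \cup V(\mh)$ is covered directly; and the remaining edges of $C$ lie in $\cm$ and are covered by $\vr$ inductively.

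My second step is the size bound in expectation. Telescoping the recursion, $|\va| = 2|\ma| + \sum_t \alpha_t$ where $\alpha_t := |\vh_t|$ and $\mu_t := |\mh_t|$ at recursion level $t$. Claim~\ref{clm:palg-rmatch} supplies $\Ex[\mu_t \mid \textnormal{past}] \geq c_0 \alpha_t$ for a universal constant $c_0 > 0$, which turns $\sum_{s\leq t} \alpha_s - (1/c_0)\sum_{s\leq t} \mu_s$ into a supermartingale with zero initial value. Taking total expectations yields $\Ex\bracket{\sum_t \alpha_t} \leq (1/c_0)\,\Ex[|\ma|]$, and hence $\Ex[|\va|] \leq K \cdot \Ex[|\ma|]$ for $K := 2 + 1/c_0$.

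The final and most delicate step is converting this expected ratio into a constant-probability ratio. I plan to combine a reverse Markov bound on $|\ma|$---using that $|\va| \geq \VC(G) \geq \MM(G)$ holds deterministically and $|\ma| \leq \MM(G)$ trivially to conclude $\Ex[|\ma|] \geq \MM(G)/K$---with an ordinary Markov bound on $|\va|$, whose expectation is at most $K \cdot \MM(G)$. The two one-sided events $A = \{|\ma| \geq \MM(G)/(2K)\}$ and $B = \{|\va| \leq 4K^2 \cdot \MM(G)\}$ then satisfy $\Pr(A) \geq 1/(2K)$ and $\Pr(B^c) \leq 1/(4K)$, so by a one-sided union bound $\Pr(A \cap B) \geq 1/(4K)$, a positive universal constant; on this intersection $|\va| \leq 8K^3 \cdot |\ma|$, and Proposition~\ref{prop:vc-matching-alpha} finishes the argument. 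The main obstacle will be carefully tracking the conditioning on $\event$ through the supermartingale relation, so that $\Ex[\mu_t \mid \textnormal{past}] \geq c_0 \alpha_t$ propagates across all $T$ recursion levels without the effective constant degrading with $T$ and without the $o(1)$ probability loss from $\event^c$ interfering with the constant-probability intersection above.
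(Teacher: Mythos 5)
Your proof is correct and follows the same route as the paper's: conditioning on $\event$, establishing feasibility, comparing expected sizes level-by-level via Claim~\ref{clm:palg-rmatch}, converting to a constant-probability bound by a Markov-type argument, and closing with Proposition~\ref{prop:vc-matching-alpha}. Your final conversion step is in fact more carefully justified than the paper's one-line remark to ``apply a Markov bound on $|\va|-|\ma|$'': by pairing a reverse Markov on $|\ma|$ (using the deterministic ceiling $|\ma|\le\MM(G)$) with an ordinary Markov on $|\va|$ (using the deterministic floor $|\va|\ge\MM(G)$) and intersecting via a one-sided union bound, you obtain a rigorous constant-probability event where the paper's phrasing alone does not quite close the argument.
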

\begin{proof}
	It is clear that the second parameter in $\palg(G,n)$ is an upper bound on the maximum degree of $G$ and hence $G$ satisfies the requirement of $\palg$. In the following, we condition on the event $\event$ which happens with high probability
	by the above discussion. As such, we also have that any recursive call to $\palg(C^-,\beta_C)$ is valid (i.e., $\beta_C$ is indeed an upper bound on degree of $C^-$) simply
	 because $C^-$ is a subgraph of an EDCS and hence its maximum degree is bounded by $\beta_C$. 
	
	We first argue that $\va$ and $\ma$ are respectively a feasible vertex cover and a feasible matching of $G$. The case for $\ma$ is straightforward; the set of vertices matched by $\mh$ is disjoint from the vertices in 
	$\mr$ as all vertices matched by $\mh$ are removed in $\cm$, and hence (by induction) $\ma = \mh \cup \mr$ is a valid matching in $G$. Now consider the set of vertices $\va$. By conditioning on the event $\event$, 
	$C$ is indeed an $\EDCS{G,\beta_C,\beta^-_C}$. Hence, by \proptwo of EDCS $C$, any edge $e \in G \setminus C$ has at least one neighbor in $\vh$ and is thus covered by $\vh$. 
	Additionally, as we pick $V(\mh)$ in the vertex cover, any edge incident on these vertices are also covered. This implies that $\vh \cup V(\mh)$ plus any vertex cover of the remaining graph $C^-$ is a feasible
	vertex cover of $G$. As $\vr$ is a feasible vertex cover of $C^-$ by induction, we obtain that $\va$ is also a feasible vertex cover of $G$ 
	(the analysis for the base case in step 1 where a maximal matching is compute locally is trivial).

	We now show that sizes of $\ma$ and $\va$ are within a constant factor of each other with constant probability. By Proposition~\ref{prop:vc-matching-alpha} this implies that both are an $O(1)$-approximation to their respective problem. 
	At each step, the set of vertices added to the $\va$ are of size $\card{V(\mh)} + \card{\vh} \leq 3\card{\vh}$ (as $\mh$ is incident on $\vh$). The set of edges added to matching $\ma$ are of size $\mh$ which is in expectation
	equal to $\Theta(\card{\vh})$ by Claim~\ref{clm:palg-rmatch}. As such, by induction and linearity of expectation, 
	this implies that $\Ex{\card{\ma}} = \Theta(\card{\va})$ (the base case is again trivial). To conclude, we can apply a Markov bound (on size of $\card{\va} - \card{\ma}$)
	and obtain that with constant probability $\card{\ma} = \Theta(\card{\va})$, which finalizes the proof. 
\end{proof}

We note that in Lemma~\ref{lem:palg-correct}, we only achieved a constant factor probability of success for $\palg$. We can however run this algorithm in parallel $O(\log{n})$ times and pick the best solution to achieve a high probability of success while still 
having $\Ot(n)$ memory per machine and $O(\log\log{n})$ rounds.

\subsection{MPC Implementation of the Parallel Algorithm}\label{sec:mpc-implementation} 

In this section, we briefly specify the details in implementing \palg in the MPC model on machines of memory $\Ot(n)$. In Section~\ref{sec:mpc-smaller-memory}, we show
how to extend this to the case when memory per machine is a given parameter $s$. Throughout this section, we assume that the event $\event$ defined in the previous section holds 
and hence we are implicitly conditioning on this (high probability) event. 

Our implementation is based on using by now standard tools in the MPC model for sorting and search in parallel introduced originally by~\cite{GoodrichSZ11} as specified in~\cite{CzumajLMMOS17}. 
On machines with memory $n^{\Omega(1)}$, the sort operation in~\cite{GoodrichSZ11} allows us to sort a set of key-value pairs of size $\polylog{(n)}$ in $O(1)$ MPC rounds. We can also do a parallel search:
given a set $A$ of key-value pairs and a set of queries each containing a key of an element in $A$, we can annotate each query with the corresponding key-value pair from $A$, again in $O(1)$ MPC rounds. 

We follow the approach of~\cite{CzumajLMMOS17} by using these operations to broadcast information from vertices to their incident edges. We build a collection of key-value pairs, where each key is a vertex and the value is the corresponding information. 
Then, each edge $(u,v)$ may issue two queries to obtain the information associated with $u$ and $v$. For more details, we refer the reader to Section~6 in~\cite{CzumajLMMOS17}. 
The following lemma states the main properties of our implementation. 



\begin{lemma}\label{lem:mpc-parallel-O(1)}
	For a given graph $G(V,E)$, one can implement the following algorithms in the MPC model with at most $O(n)$ machines with memory $\Ot(n)$ with probability $1-1/n^{4}$: 
	\begin{enumerate}
		\item\label{item:mpc-1} Each call to $\pedcs$ in $\palg(G,n)$ in $O(1)$ MPC rounds. 
		\item\label{item:mpc-2} Each call to $\rmatch$ in $\palg(G,n)$ in $O(1)$ MPC rounds.
		\item\label{item:mpc-3} $\palg(G,n)$ in $O(\log\log{n})$ MPC rounds.
	\end{enumerate}
\end{lemma}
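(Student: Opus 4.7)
\smallskip

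The plan is to implement each subroutine using the sorting and parallel-search primitives of~\cite{GoodrichSZ11} (as applied in Section~6 of~\cite{CzumajLMMOS17}), both of which run in $O(1)$ MPC rounds on machines with $n^{\Omega(1)}$ memory. Given these primitives, each of the three parts reduces to checking that (i) the subroutine in question can be expressed as a constant number of sorts/searches plus local computation, and (ii) no machine ever exceeds its $\Ot(n)$ memory budget; the latter is where the careful choice of $\kappa$-wise independent hashing inside $\pedcs$ becomes essential, as emphasized in Remark~\ref{rem:limited-independence}.

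For part~\ref{item:mpc-1}, I would implement $\pedcs(G,\Delta,n)$ as follows. Each vertex $v$ independently samples a $\kappa$-wise independent hash function $h_v$, which can be stored as a degree-$(\kappa-1)$ polynomial over an $O(\log n)$-bit field using only $O(\kappa\log n)=O(\log^2 n)$ bits. We gather these seeds on designated machines and, via parallel search, broadcast $h_u$ and $h_v$ to each edge $(u,v)$. Each edge then locally computes the set $\{i\in[k]:h_u(i)=h_v(i)=0\}$ of subgraphs it belongs to and emits one copy per such index; a final sort routes all edges tagged with index $i$ to machine $i$, which runs the polynomial-time EDCS construction (Lemma~\ref{lem:edcs-exists}) locally. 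By Claim~\ref{clm:mpc-pedcs-size}, each $\Gi{i}$ has $O(n\log^2 n)$ edges w.h.p., so each machine fits its input in $\Ot(n)$ memory, and the total number of machines is $k=O(n/\log n)$ for the initial call and decreases rapidly across the recursion.

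Part~\ref{item:mpc-2} is more direct. Sampling $S'$ is vertex-local and its membership can be broadcast to the edges of $C$ in $O(1)$ rounds. To pick, for each $v\in S'$, a uniformly random incident edge to $G\setminus S'$, I would have every eligible edge draw a fresh random tag and then min-aggregate the tags per $S'$-endpoint using a single sort, selecting the minimum-tag incident edge for each $v\in S'$; this yields exactly the distribution required for $\Esample$. Identifying the edges of $\Esample$ whose endpoints are both unique (i.e.\ the final matching $M$) is a second sort-and-scan. The degree bound $\beta_C=\Ot(1)$ on the input EDCS guarantees that when aggregating incident edges per vertex, no single machine is overloaded.

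For part~\ref{item:mpc-3}, I would combine parts~\ref{item:mpc-1} and~\ref{item:mpc-2}: each recursive step of $\palg$ performs one call to $\pedcs$, one call to $\rmatch$, and a constant number of sorts to materialize $\vh$, $\vm$, $\cm$ (with parallel edges removed by a deduplicating sort), $\va$, and $\ma$. By Claim~\ref{clm:palg-steps} the recursion has depth $T=O(\log\log n)$, giving $O(\log\log n)$ MPC rounds in total; a union bound over the $O(\log\log n)$ steps of the per-step failure probability (at most $1/n^5$, from Claim~\ref{clm:palg-pedcs}) yields the stated $1-1/n^4$ success guarantee. The main technical obstacle I anticipate is the routing argument in part~\ref{item:mpc-1}: even though each edge is replicated into $\Theta(p^2 k)=\Theta(\log n)$ copies on average, I must ensure that the bounded-independence concentration from Proposition~\ref{prop:pedcs-simple} is strong enough to keep every target machine within $\Ot(n)$ memory \emph{simultaneously} with high probability, so that the naive fully-independent implementation (which would require communicating $\Theta(pk)$ messages per edge) can be safely replaced by the $O(\log n)$-bit hash seed broadcast used above.
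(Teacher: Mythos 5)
Your proposal is correct and takes essentially the same approach as the paper: both rely on the sort-and-search primitives of~\cite{GoodrichSZ11} as applied in~\cite{CzumajLMMOS17}, broadcast the compact $O(\log^2 n)$-bit hash seed to adjacent edges rather than the $\Theta(pk)$-size membership list (the point of Remark~\ref{rem:limited-independence}), route edges to dedicated per-subgraph machines using Claim~\ref{clm:mpc-pedcs-size} to bound memory, and combine $\pedcs$ with $\rmatch$ over the $O(\log\log n)$ steps of Claim~\ref{clm:palg-steps}. The min-tag aggregation you use to pick each vertex's random incident edge in $\rmatch$ is a slightly different but equally valid $O(1)$-round realization of the same step the paper implements by direct per-vertex random annotation.
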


We prove each part of this lemma separately. 

\paragraph{Implementation of \pedcs.} To perform the vertex sampling approach in $\pedcs$, we need to annotate each edge with the subgraph(s) it is assigned to. To do this, each vertex $v$ in the current graph only needs to specify which subgraphs 
it resides on and broadcast this to its adjacent edges. Recall that unlike in~\cite{CzumajLMMOS17}, in our way of vertex sampling, each vertex can resides in multiple subgraphs (up to $\Ot(\sqrt{n})$ ones). Broadcasting this amount of information 
directly to adjacent edges of each vertex is not possible within the memory restrictions of the MPC model. However, recall that we are using an $O(\log{n})$-wise independent hash function for determining the subgraphs each vertex $v$ is going to reside on.
Hence, the vertex $v$ only needs to broadcast this hash function to its adjacent edges which requires $\polylog{(n)}$ bits for representation (see,~e.g.~\cite{RAbook}) and thus can be done in $O(1)$ MPC rounds on machines of memory $n^{\Omega(1)}$. 

We then send all edges assigned to one subgraph to a dedicated machine. By Claim~\ref{clm:mpc-pedcs-size}, the number of edges assigned to each machine is at most $\Ot(n)$ with high probability and hence it can fit the memory of the machine. 
We can then locally compute an EDCS of this subgraph and annotate the edges in this EDCS as the edges of the final multigraph $C$. All this can be easily done in $O(1)$ MPC rounds, hence finalizing this part of the proof. 

\paragraph{Implementation of \rmatch.} Each vertex in $S'$ simply needs to annotate one of its edges
uniformly at random, and each annotated edge only needs to ``mark'' its other endpoint in $G\setminus S'$. Any vertex in $G \setminus S'$ which is marked exactly once then inform the edge that marked it to join the matching $M$. 
This part can again be done in only $O(1)$ rounds on machines with memory $n^{\Omega(1)}$. 

\paragraph{Implementation of \palg.} We can now combine the results in the previous two parts to show how to implement $\palg$ in the MPC model. Consider a step of $\palg$. 
We saw that $\pedcs$ and $\rmatch$ can both be implemented in $O(1)$ MPC models. In particular, all edges in subgraph $C$ computed by $\pedcs$ are now annotated and hence we can ignore all remaining edges. We can also compute
the degree of each vertex in this subgraph in $O(1)$ rounds using a simple sort and search technique (see Lemma 6.1 in~\cite{CzumajLMMOS17}). We can hence compute the set of vertices $\vh$ and pass it to $\rmatch$ as the set $S$.
Finally, we can mark vertices in $\vh \cup V(\mh)$ and remove them from the graph (by broadcasting this information to all their neighbors).  After this, we know which vertices belong to $\cm$ for the next step
and which edges are still active. We can hence recursively solve the problem on the graph $\cm$ in the next steps. As each step requires $O(1)$ MPC rounds and there are $O(\log\log{n})$ steps in total by Claim~\ref{clm:palg-steps}, this results in an MPC algorithm with $O(\log\log{n})$ rounds.

\subsubsection{Optimizing the Number of Machines}\label{sec:mpc-palg-machines}

We conclude this section by making a remark about optimizing the number of machines (in addition to their memory) in our results as well. 

As it is, the total number of machines needed to implement $\palg$ in the MPC model is $\Ot(n)$. This means that the total memory across all machines is $\Ot(n^2)$, which is proportional to the input size (up to $\polylog{(n)}$ factors)
whenever the input graph is completely dense, i.e., has $O(n^2)$ edges. However for sparser graphs with $n^{1+\Omega(1)}$, this can be larger than the input size by a factor of $n^{1-\Omega(1)}$. 
This is consistent with some definitions of 	MapReduce-style computation such as~\cite{KarloffSV10,LattanziMSV11} but not with the strictest definitions in~\cite{AndoniNOY14,BeameKS13}, which require that the total memory of the system for a graph with $m$ edges to be only $\Ot(m)$, i.e., proportional to the input size. 

Nevertheless, a straightforward modification of our algorithm can reduce the number of machines down to $\Ot(m/n)$ which ensures that the total memory 
used by our algorithm is $\Ot(m)$ which adheres to the strictest restrictions of the MPC model. The only change we need to do is to work with the \emph{average degree} of the 
graph in $\palg$ as opposed to its maximum degree. Concretely, in each call to $\palg(G,\Delta)$, instead of computing $\pedcs(G,\Delta,n)$, we compute $\pedcs(G,\tDelta,n)$, where
$\tDelta:= m/n$ denotes the average degree of the graph $G$. It is easy to see that in this case, we still only need $\Ot(n)$ memory per machine (essentially the same argument in Claim~\ref{clm:mpc-pedcs-size} proves this), 
but now the total number of machines needed is only $\Ot(m/n)$ and hence we only need $\Ot(m)$ memory in total. It is easy to verify that the arguments in the proof of correctness of $\pedcs$ can be immediately 
applied to this version; we omit the details. 

\subsection{Extension to Smaller Memory Requirement and Proof of Theorem~\ref{thm:mpc-matching-vc}}\label{sec:mpc-smaller-memory}

As was shown by Lemma~\ref{lem:mpc-parallel-O(1)}, \palg can be implemented in the MPC model with machines of memory $\Ot(n)$ and $O(\log\log{n})$ MPC rounds. Combining this with Lemma~\ref{lem:palg-correct} on the correctness
of $\palg$, we immediately obtain Theorem~\ref{thm:mpc-matching-vc} for the case of $s = \Ot(n)$, i.e., an MPC algorithm with $O(1)$ approximation to both matching and vertex cover in $O(\log\log{n})$ rounds with $\Ot(n)$ memory per machine.

We now show how to extend our algorithm to the case when memory per machine is $s = n^{\Omega(1)}$. For simplicity, we assume the memory per each machine is $\Ot(s)$ as opposed to $O(s)$; rescaling the parameter 
$s$ with a $\polylog{(n)}$ factor implies the final result. Recall that there were only two places in $\palg$ that we needed $\Ot(n)$ memory per machine: in subroutine $\pedcs$ and in step $1$ of the algorithm, i.e., the base case of recursion. 
Consequently, we only need to make the following two changes to $\palg(G,\Delta)$: 

\begin{enumerate}
	\item Firstly, in $\palg(G,\Delta)$, we now run the subroutine $\pedcs$ with memory per machine equal to $O(s \cdot \polylog{(n)})$, i.e., we run $\pedcs(G,\Delta,s)$. 
	
	\item Second, we change the base case of the algorithm. Whenever $\Delta <  \paren{\frac{n}{s}} \cdot \paren{400 \cdot \log^{12}{(n)}}$, instead of sending all edges to a single machine and solve the problem locally, we 
	simply use any standard $O(\log{\Delta})$-round MPC algorithm for $O(1)$-approximation to matching and vertex cover that works on machines with memory $n^{\Omega(1)}$ (for example by directly simulating the
	distributed peeling algorithm of~\cite{OnakR10} (see also~\cite{ParnasR07}). For more details, see~\cite{CzumajLMMOS17} (Lemma~6.1).  
\end{enumerate}

Previously with machines of memory $\Ot(n)$, the maximum degree of underlying graph in each step of $\palg$ was (see Claim~\ref{clm:palg-pedcs}): 
\begin{align*}
	\Delta_1 : = n, ~~ \Delta_2 := \sqrt{n} \cdot \polylog{(n)}, ~~ \ldots ~~ \Delta_i := n^{1/2^{i}} \cdot \polylog{(n)}, ~~ \ldots ~~ \Delta_{T} := \polylog{(n)},  
\end{align*}
where $T = O(\log\log{n})$ denotes the number of steps in $\palg$. By switching to machines of memory $\Ot(s)$, we instead have, 
\begin{align*}
	\Delta_1 : = n, ~~ \Delta_2 := \frac{n}{s^{1/2}} \cdot \polylog{(n)}, ~ \ldots ~ \Delta_i := \frac{n}{s^{1-1/2^{i-1}}} \cdot \polylog{(n)}, ~ \ldots ~ \Delta_{T} := \paren{\frac{n}{s}} \cdot \polylog{(n)},
\end{align*}
before we reach the stopping condition of $\palg$ (this follows directly from Lemma~\ref{lem:pedcs-correctness} exactly as in Claim~\ref{clm:palg-pedcs}). After this step, we simply compute an 
$O(1)$-approximate matching and vertex cover directly in the remaining graph. 

The analysis of the correctness of this algorithm is exactly as before. It is also straightforward to verify that this algorithm now only needs
machines of memory $O(s \cdot \log^{2}(n))$ by choice of $\pedcs$. Finally, the number of rounds needed by this
algorithm is $O(\log\log{n})$ (for reducing the maximum degree in the graph to $\Delta_T$) plus $O(\log{\Delta_T}) = O(\log{\paren{\frac{n}{s}}} + \log\log{n})$ (for running the distributed matching and vertex cover algorithm directly when maximum degree is at most $\Delta_T$). This concludes the proof of Theorem~\ref{thm:mpc-matching-vc} for all range of parameter $s=n^{\Omega(1)}$.

\subsection{Further Improvements}\label{sec:mpc-wrap-up}


In the remainder of this section, we show that using standard techniques, one can improve the approximation ratio of our matching algorithm significantly. In particular, 

\begin{corollary}\label{cor:mpc-matching-2}
	There exists an MPC algorithm that given a graph $G$ and $\eps \in (0,1)$, with high probability computes a $(2+\eps)$-approximation to maximum matching of $G$ in $O(\log{(1/\eps)} \cdot {\log\log{n}})$ MPC rounds using only $O(n/\polylog{(n)})$ 
	memory per machine. 
\end{corollary}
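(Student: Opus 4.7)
The plan is to upgrade the $O(1)$-approximation guaranteed by Theorem~\ref{thm:mpc-matching-vc} to a $(2+\eps)$-approximation via $O(\log(1/\eps))$ rounds of an iterative ``greedy augmentation,'' calling the theorem as a black box on successively smaller induced subgraphs over the currently unmatched vertices.

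Let $c$ denote the (fixed) approximation ratio delivered by Theorem~\ref{thm:mpc-matching-vc}. I would start with $M_0 := \emptyset$ and, for $i = 1, 2, \ldots, T$, define iteratively: let $U_i := V \setminus V(M_{i-1})$ be the set of currently unmatched vertices, invoke the algorithm of Theorem~\ref{thm:mpc-matching-vc} on the induced subgraph $G[U_i]$ to obtain a $c$-approximate matching $M_i'$ to $\MM(G[U_i])$, and set $M_i := M_{i-1} \cup M_i'$. The MPC implementation is immediate: broadcasting the matched status of each vertex and filtering incident edges to form $G[U_i]$ takes $O(1)$ MPC rounds via the sort-and-search primitives used in Section~\ref{sec:mpc-implementation}, and the per-machine memory remains $O(n/\polylog(n))$.

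For the analysis, let $M^{*}$ be a maximum matching of $G$. Any edge of $M^{*}$ not contained in $G[U_i]$ must be incident on some vertex of $V(M_{i-1})$, so at most $2|M_{i-1}|$ edges of $M^{*}$ lie outside $G[U_i]$, whence $\MM(G[U_i]) \geq |M^{*}| - 2|M_{i-1}|$. Therefore $|M_i'| \geq (|M^{*}| - 2|M_{i-1}|)/c$, which, substituting into $|M_i| = |M_{i-1}| + |M_i'|$ and rearranging, yields the contraction
\[
\frac{|M^{*}|}{2} - |M_i| \;\leq\; \frac{c-2}{c}\cdot \paren{\frac{|M^{*}|}{2} - |M_{i-1}|}
\]
whenever the right-hand side is non-negative (otherwise $M_{i-1}$ is already a $2$-approximation and we stop). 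Since $c$ is a fixed constant, $(c-2)/c$ is a constant strictly less than $1$ (and non-positive if $c \leq 2$, in which case a single iteration suffices), so after $T = O(\log(1/\eps))$ iterations the gap shrinks to at most $(\eps/2)|M^{*}|$, giving $|M_T| \geq (1-\eps)|M^{*}|/2$, i.e., a $(2+O(\eps))$-approximation; a constant rescaling of $\eps$ yields the stated bound.

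The total round complexity is $T \cdot O(\log\log{n}) = O(\log(1/\eps)\cdot \log\log{n})$, and a union bound over the $T$ invocations of Theorem~\ref{thm:mpc-matching-vc} preserves the high-probability guarantee while keeping the per-machine memory at $O(n/\polylog(n))$. The only non-routine point in the argument is verifying that the contraction factor $(c-2)/c$ stays a constant bounded away from $1$ regardless of how large (but still $O(1)$) the black-box approximation constant is; everything else is bookkeeping on top of Theorem~\ref{thm:mpc-matching-vc} and standard MPC broadcast/filter primitives.
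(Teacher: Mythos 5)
Your proposal is correct and follows essentially the same route as the paper: iteratively invoke the $O(1)$-approximation algorithm of Theorem~\ref{thm:mpc-matching-vc} on the induced subgraph of currently unmatched vertices for $O(\log(1/\eps))$ rounds, and track the geometric contraction of the residual gap $\MM(G) - 2|M_t|$ (equivalently, your $\frac{|M^*|}{2} - |M_i|$) with factor $(1-2/\alpha)$ per iteration. The contraction algebra, the bound $\MM(G[U_i]) \geq |M^*| - 2|M_{i-1}|$, the $O(1)$-round MPC bookkeeping, and the union bound over iterations all match the paper's argument.
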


\begin{corollary}\label{cor:mpc-matching-1}
	There exists an MPC algorithm that given a graph $G$ and $\eps \in (0,1)$, with high probability computes a $(1+\eps)$-approximation to the maximum matching of $G$ in $(1/\eps)^{O(1/\eps)} \cdot \paren{\log\log{n}}$
	 MPC rounds using only $O(n/\polylog{(n)})$ memory per machine. 
\end{corollary}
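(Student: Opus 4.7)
The plan is to bootstrap the $(2+\eps)$-approximation provided by Corollary~\ref{cor:mpc-matching-2} into a $(1+\eps)$-approximation by iteratively augmenting the matching along short augmenting paths, following the classical Hopcroft-Karp/McGregor framework. First I would invoke Corollary~\ref{cor:mpc-matching-2} with, say, a constant error parameter to obtain an initial $3$-approximate matching $M$ in $O(\log\log n)$ MPC rounds. I would then run $T = (1/\eps)^{O(1/\eps)}$ augmentation iterations: in each iteration, given the current matching $M$, find a (maximal, or at least $\Omega(1)$-fraction) set of vertex-disjoint augmenting paths in $G$ of length at most $k = 2\lceil 1/\eps \rceil - 1$ and augment $M$ along them.

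Correctness of the bootstrap rests on the standard observation that if $|M| < (1-\eps)\MM(G)$, then the symmetric difference of $M$ with any maximum matching contains at least $\Omega(\eps \cdot \MM(G))$ vertex-disjoint augmenting paths of length at most $k$. A potential-function argument identical to the one in McGregor's streaming matching algorithm then shows that after $T = (1/\eps)^{O(1/\eps)}$ iterations the approximation ratio drops to $(1+\eps)$, provided each iteration augments along a constant fraction of the currently existing short augmenting paths.

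The core subroutine is therefore finding a large set of vertex-disjoint augmenting paths of bounded length $k$ in MPC with memory $O(n/\polylog n)$ per machine, in $O(\log\log n)$ rounds per iteration. I would implement this by contracting the matched edges of $M$ into super-vertices, running a constant-depth BFS layering of depth $k = O(1/\eps)$ from free vertices on the contracted graph (which takes $O(1/\eps)$ rounds), and then extracting a large vertex-disjoint collection of the discovered paths. The disjoint extraction can be viewed as a matching problem in an auxiliary hypergraph whose hyperedges are the candidate augmenting paths; since every hyperedge has size $O(1/\eps) = O(1)$, a maximal hypergraph matching can be obtained by essentially the same recursive EDCS-plus-random-match strategy used in Section~\ref{sec:mpc-algs} (with \pedcs and \rmatch as building blocks), again in $O(\log\log n)$ rounds.

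The main obstacle will be fitting the BFS layering and the hypergraph matching within the $O(n/\polylog n)$ per-machine memory budget when the graph is dense. I would handle this by working, in each iteration, on a suitably chosen EDCS of $G$ whose maximum degree is $\polylog n$ (Claim~\ref{clm:palg-pedcs}); the robustness results of Section~\ref{sec:edcs}, together with a careful choice of EDCS parameters $(\beta,\beta^-)$ depending on $\eps/T$, ensure that this sparsification loses at most a $(1+\eps/T)$ factor in the number of short augmenting paths and hence telescopes to an overall $(1+\eps)$ loss. The total round complexity is then the initial $O(\log\log n)$ rounds plus $T = (1/\eps)^{O(1/\eps)}$ iterations each of cost $O(\log\log n)$, giving $(1/\eps)^{O(1/\eps)} \cdot \log\log n$ MPC rounds as required.
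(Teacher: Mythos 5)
Your high-level blueprint (bootstrap from a constant-factor approximation, run $T = (1/\eps)^{O(1/\eps)}$ rounds of augmentation along length-$O(1/\eps)$ paths, each round costing $O(\log\log n)$ MPC rounds) matches the shape of the final bound, but the mechanism you propose for finding the augmenting paths is genuinely different from the paper's, and it has real gaps. The paper does \emph{not} do BFS layering on the contracted graph and then extract disjoint paths via a hypergraph matching. Instead, it ports McGregor's streaming algorithm essentially verbatim: it builds a \emph{randomly} layered graph $\LL(G,M,k)$ (each matched pair is assigned to a random consecutive even/odd pair of layers, each free vertex to $L_1$ or $L_{k+1}$), and then grows paths from $L_1$ to $L_{k+1}$ by repeatedly computing a bipartite matching between consecutive layers, backtracking (DFS-style) when a layer is mostly dead ends. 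The only substitution is replacing McGregor's maximal-matching subroutine by the $O(1)$-approximate matching from Theorem~\ref{thm:mpc-matching-vc}, which is safe because McGregor's analysis uses only the approximation ratio, not maximality. Crucially, this reduces everything to a sequence of $\textit{graph}$ matching calls on explicitly given subgraphs, each solvable within the memory budget by the machinery already built in Section~\ref{sec:mpc-algs}.

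The gaps in your version are concrete. First, ``maximal hypergraph matching via essentially the same recursive EDCS-plus-random-match strategy'' is not supported: the paper's EDCS machinery is for graphs, and nothing in Section~\ref{sec:edcs} (or elsewhere) gives an analogue of Lemma~\ref{lem:ddl} or Lemma~\ref{lem:edcs-matching} for hypergraphs. You would need to build that theory from scratch. Second, even listing the ``candidate augmenting paths'' discovered by BFS to form the hypergraph can blow up: the number of length-$k$ alternating paths emanating from free vertices can be superpolynomial, so you cannot afford to materialize the hyperedges, and you have not explained how to compute a large disjoint sub-collection without doing so. McGregor's layered construction is precisely designed to sidestep this, reducing path-finding to a short adaptive sequence of bipartite matchings that never represents paths explicitly. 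Third, the claim that an EDCS with parameters tuned to $\eps/T$ ``loses at most a $(1+\eps/T)$ factor in the number of short augmenting paths'' is neither in the paper nor obviously true: Lemma~\ref{lem:edcs-matching} guarantees that an EDCS preserves matching \emph{size} up to $3/2+\eps$, which is a far weaker statement than preserving the count of short augmenting paths relative to a \emph{fixed} current matching $M$. An EDCS could in principle discard all short augmenting paths for your particular $M$ while still containing a large matching. The paper's route avoids needing any such claim, because the layered graph is built on $G$ itself and the matching subroutine (Theorem~\ref{thm:mpc-matching-vc}) already runs in the required memory without pre-sparsifying.
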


We note that above corollaries hold for all range of per machine memory $s= n^{\Omega(1)}$ similar to Theorem~\ref{thm:mpc-matching-vc}; however, for simplicity, we only consider the most interesting case of $s= O(n/\polylog{(n)})$.
We prove each of the above corollaries in the next two sections. 

\subsubsection{Proof of Corollary~\ref{cor:mpc-matching-2}}

The idea is to simply run our MPC algorithm in Theorem~\ref{thm:mpc-matching-vc}, to compute a matching $\ma$, remove all vertices matched by $\ma$ from the graph $G$, and repeat. Clearly, the set of all matchings 
computed like this is itself a matching of $G$. In the following, we show that only after $O(\log{1/\eps})$ repetition of this procedure, one obtains a $(2+\eps)$-approximation to the maximum matching of $G$.

Let $\alpha = O(1)$ be the approximation ratio of the algorithm in Theorem~\ref{thm:mpc-matching-vc}. 
Suppose we repeat the above process for $T := \paren{\alpha \cdot \log{(1/\eps)}}$ steps. For any $t \in [T]$, let $M_t$ be the matching computed so far, i.e., the union of the all the matchings in the first $t$ applications of our $\alpha$-approximation algorithm. 
Also let $G_{t+1} := G \setminus V(M_t)$, i.e., the graph remained after removing vertices matched by $M_t$. Note that $M_{t+1}$ is an $\alpha$-approximation to the maximum matching of $G_{t+1}$. 
Moreover, $\MM(G_{t+1}) \geq \MM(G) - 2\card{M_t}$ as each edge in $M_t$ can only match (and hence remove) two vertices of any maximum matching of $G$. This implies that 
$\card{M_{t+1}} \geq  \card{M_t} + \frac{1}{\alpha} \cdot \paren{\MM(G) - 2\card{M_t}}$ for all $t \in [T]$. We now have, 
\begin{align*}
	\MM(G) - 2\card{M_{T}} &\leq \paren{1-\frac{2}{\alpha}} \cdot \MM(G) - 2\cdot \paren{1-\frac{2}{\alpha}}\cdot \card{M_{T-1}} \\
	&= \paren{1-\frac{2}{\alpha}} \cdot \paren{\MM(G) - 2\cdot\card{M_{T-1}}} \\
	&\leq \paren{1-\frac{2}{\alpha}}^{2} \paren{\MM(G) - 2\cdot\card{M_{T-2}}} \tag{by applying the second equation to $M_{T-1}$} \\
	&\leq \paren{1-\frac{2}{\alpha}}^{T} \cdot {\MM(G)} \tag{by recursively applying the previous equation}  \\
	&\leq \exp\paren{-\frac{2}{\alpha} \cdot \alpha \cdot \log{(1/\eps)}} \cdot \MM(G)\leq \eps \cdot \MM(G).
\end{align*}

Hence, after $T = O(\log{1/\eps})$ steps, the matching computed by the above algorithm is of size $\paren{2+\eps} \cdot \MM(G)$. It is immediate to verify that the new algorithm can be implemented in the MPC model with 
machines of memory $O(n/\polylog{(n)})$ and $O(\log{(1/\eps)} \cdot \log\log{n})$ MPC rounds. Note that as the probability of error in the algorithm in Theorem~\ref{thm:mpc-matching-vc} is at most $1/n^{4}$, by a union bound, 
the new algorithm also outputs the correct answer with probability at least $1-1/n^{3}$. 

\subsubsection{Proof of Corollary~\ref{cor:mpc-matching-1}}\label{sec:mpc-matching-1}

Corollary~\ref{cor:mpc-matching-1} can be proven using Theorem~\ref{thm:mpc-matching-vc} plus a simple adaption of the multi-pass streaming algorithm of McGregor~\cite{McGregor05} for maximum matching to the MPC model. 
The high level approach in~\cite{McGregor05} is to reduce the problem of finding a $(1+\eps)$-approximate maximum matching
in $G$ to many instances of finding a maximal matching in multiple adaptively chosen subgraphs of $G$. It was then shown that there exists a single pass streaming algorithm that can both determine the appropriate subgraph of $G$ needed in each step
of this reduction and compute a maximal matching of this subgraph. Hence, after repeating this streaming algorithm in multiple passes over the stream, one can fully implement the reduction and obtain a $(1+\eps)$-approximation to the maximum matching. 

We show that essentially the same approach can also be used in the MPC model. 
The main difference is to switch from computing a maximal matching to finding an $O(1)$-approximate maximum matching using our Theorem~\ref{thm:mpc-matching-vc}. In the following, we briefly describe the approach
in~\cite{McGregor05} and point out the modifications needed to make it work in Corollary~\ref{cor:mpc-matching-1}. 
The purpose of this section is only to convince the reader that the reduction~\cite{McGregor05} can be seamlessly implemented in the MPC model and hence
we do not delve into the full details of the algorithm and analysis and instead refer the reader to~\cite{McGregor05} for more details and formal proofs. 

\paragraph{The Streaming Algorithm of~\cite{McGregor05}.}
The idea behind the algorithm is to start with some maximal matching $M$ (which is easy to compute in one pass over the stream) and then \emph{augment} this matching further over multiple \emph{phases} by finding 
a large set of vertex disjoint augmenting paths of length $O(1/\eps)$ in a \emph{layered} graph created from $G$ and the current matching $M$. We first introduce the concept of the layered graph that is used to reduce the task of finding 
augmenting paths to multiple instances of approximate matching. Given a graph $G$, a matching $M$, and an odd integer $k$ (which is the target length of the augmenting paths to be found), we create a graph $\LL(G,M,k)$ using
the following randomized procedure: 

\begin{tbox}
\begin{enumerate}
	\item The vertices in $\LL$ are the same as $G$ and are partitioned into $k+1$ subsets $L_1,\ldots,L_{k+1}$ called \emph{layers}. The layer of each vertex $v$ is determined as follows: 
	\begin{enumerate}
	\item For any vertex $u$ left unmatched by $M$, $u$ is assigned to either $L_1$ or $L_{k+1}$ chosen uniformly at random. 
	\item For any matched edge $(u,v) \in M$, the vertex $u$ is assigned to a uniformly at random chosen even layer $L_2,L_4,\ldots,L_{k-1}$ and $v$ is assigned to the subsequent layer. 
	\end{enumerate}
	\item For any edge $(u,v) \in {E \setminus M}$, if $u$ belongs to an odd layer and $v$ belongs to the next (even) layer then $(u,v)$ is added to $\LL$ as well. 
	\item For any edge $(u,v) \in M$, $(u,v)$ is added to $\LL$ as well. 
\end{enumerate}
\end{tbox}

One could easily verify that the edges in $\LL$ are only between two consecutive layers and for any edge in $\LL$ there is a unique edge in $G$. 
The main property of the above construction is that a collection of vertex disjoint paths between vertices in $L_1$ and $L_{k+1}$ corresponds to a set of vertex disjoint augmenting paths of length $k$ for $M$ in $G$. 
It is also relatively easy to prove that as long as $\card{M} < (1-\eps) \cdot \MM(G)$, then, there exists some $k = O(1/\eps)$, for which the corresponding graph $\LL(G,M,k)$ has at least $\MM(G) \cdot \eps^{O(1/\eps)}$ vertex disjoint paths 
between $L_1$ and $L_{k+1}$ with high probability (see Theorem~1 in~\cite{McGregor05}). 

We now describe how to find a large fraction of these augmenting paths in $\LL$ (in each phase) by finding different maximal matchings between consecutive layers of $\LL$. 
We first compute an approximate matching $M_{1,2}$ between $L_1$ and $L_2$. Let $L'_3 \subseteq L_3$ be the set of 
vertices that are neighbor to matched vertices of $M_{1,2}$ (in $L_2$). We then, compute a matching $M_{3,4}$ between $L'_3$ and $L_4$ and we continue like this. One can see this approach as growing vertex disjoint 
(augmenting) paths from layer $L_1$ to (eventually) layer $L_{k+1}$. If at some point, size of the matching between $L'_{i}$ and $L_{i+1}$ (for some odd $i$) goes below a certain threshold, we remove all vertices in $L'_{i}$ from the graph (as 
they are mostly ``dead ends'') and backtrack (similar to a DFS procedure). By choosing a relatively large threshold, one can ensure that the number of backtracks is bounded by some function of $(1/\eps)$ only and hence is not too large. At the same time, 
we like the threshold to be small enough also so that not many actual vertex disjoint paths are marked (incorrectly) as dead ends. Once we complete some paths from $L_1$ to $L_{k+1}$ we remove these paths (to be augmented later) and recurse on the 
remaining graph until no vertices are left. We again emphasize that at each step of this procedure, we simply find a maximal matching between some set of nodes in $\LL$ and the above algorithm determines which sets of vertices to choose for finding the next 
approximate matching. We refer the reader to Section 2.3 (in particular Fig 2) of~\cite{McGregor05} for a formal definition and a pseudo-code of this algorithm.

We now argue that essentially the same algorithm can also be implemented in the MPC model using our Theorem~\ref{thm:mpc-matching-vc} (instead of picking maximal matchings).

\paragraph{From Maximal to $O(1)$-Approximate Matching.} We point out that the algorithm of~\cite{McGregor05} uses a maximal matching in its construction as it is easy to compute in one pass over a stream and results in a $2$-approximate
matching. However, we emphasize that the analysis in~\cite{McGregor05} in no way uses the ``maximality'' property of this matching and only relies on its approximation ratio. 
We do not know how to compute a maximal matching in the MPC model efficiently, however,
we can use Theorem~\ref{thm:mpc-matching-vc} to compute an $O(1)$-approximate matching. By a simple adjustment of the thresholds used in the above algorithm, we can use any $O(1)$-approximation algorithm to maximum matching in place of a maximal
matching algorithm, while blowing up the number of calls to the matching subroutine between two layers by a constant factor in total. 

\paragraph{From Streaming to MPC Model.} It is easy to see that the construction of the layered graph as well as the choices to which subgraph to compute the next matching on in the above algorithm can be easily performed in constant number
of rounds in the MPC model. As argued above, in place of the maximal matching algorithm in reduction of~\cite{McGregor05}, we use our algorithm in Theorem~\ref{thm:mpc-matching-vc} which requires $O(\log\log{n})$ MPC rounds (compared to one pass
in~\cite{McGregor05}). As a result, number of rounds in our algorithm is $O_{\eps}(\log\log{n})$ larger than the passes in the streaming algorithm of~\cite{McGregor05}. Overall, the new algorithm requires $(1/\eps)^{O(1/\eps)} \cdot O(\log\log{n})$ MPC 
rounds and $O(n/\polylog{(n)})$ memory per machine.

\subsection*{Acknowledgements} 
The first author is grateful to his advisor Sanjeev Khanna for the previous collaboration in~\cite{AssadiK17} that was the starting point of this project, to Michael Kapralov 
for helpful discussions regarding the streaming matching problem and results in~\cite{Kapralov13}, and to Krzysztof Onak for helpful discussions regarding the results in~\cite{CzumajLMMOS17}. 

\bibliographystyle{abbrv}
\bibliography{general}

\begin{thebibliography}{10}

\bibitem{AbbarAIMV13}
S.~Abbar, S.~Amer{-}Yahia, P.~Indyk, S.~Mahabadi, and K.~R. Varadarajan.
\newblock Diverse near neighbor problem.
\newblock In {\em Symposuim on Computational Geometry 2013, SoCG '13, Rio de
  Janeiro, Brazil, June 17-20, 2013}, pages 207--214, 2013.

\bibitem{AgarwalHV04}
P.~K. Agarwal, S.~Har{-}Peled, and K.~R. Varadarajan.
\newblock Approximating extent measures of points.
\newblock {\em J. {ACM}}, 51(4):606--635, 2004.

\bibitem{AG13}
K.~J. Ahn and S.~Guha.
\newblock Linear programming in the semi-streaming model with application to
  the maximum matching problem.
\newblock {\em Inf. Comput.}, 222:59--79, 2013.

\bibitem{AhnG15}
K.~J. Ahn and S.~Guha.
\newblock Access to data and number of iterations: Dual primal algorithms for
  maximum matching under resource constraints.
\newblock In {\em Proceedings of the 27th {ACM} on Symposium on Parallelism in
  Algorithms and Architectures, {SPAA} 2015, Portland, OR, USA, June 13-15,
  2015}, pages 202--211, 2015.

\bibitem{AhnGM12Linear}
K.~J. Ahn, S.~Guha, and A.~McGregor.
\newblock Analyzing graph structure via linear measurements.
\newblock In {\em Proceedings of the Twenty-third Annual ACM-SIAM Symposium on
  Discrete Algorithms}, SODA '12, pages 459--467. SIAM, 2012.

\bibitem{AGM12}
K.~J. Ahn, S.~Guha, and A.~McGregor.
\newblock Graph sketches: sparsification, spanners, and subgraphs.
\newblock In {\em Proceedings of the 31st {ACM} {SIGMOD-SIGACT-SIGART}
  Symposium on Principles of Database Systems, {PODS} 2012, Scottsdale, AZ,
  USA, May 20-24, 2012}, pages 5--14, 2012.

\bibitem{AiHLW16}
Y.~Ai, W.~Hu, Y.~Li, and D.~P. Woodruff.
\newblock New characterizations in turnstile streams with applications.
\newblock In {\em 31st Conference on Computational Complexity, {CCC} 2016, May
  29 to June 1, 2016, Tokyo, Japan}, pages 20:1--20:22, 2016.

\bibitem{AlonMS12}
N.~Alon, A.~Moitra, and B.~Sudakov.
\newblock Nearly complete graphs decomposable into large induced matchings and
  their applications.
\newblock In {\em Proceedings of the 44th Symposium on Theory of Computing
  Conference, {STOC} 2012, New York, NY, USA, May 19 - 22, 2012}, pages
  1079--1090, 2012.

\bibitem{AlonNRW15}
N.~Alon, N.~Nisan, R.~Raz, and O.~Weinstein.
\newblock Welfare maximization with limited interaction.
\newblock In {\em {IEEE} 56th Annual Symposium on Foundations of Computer
  Science, {FOCS} 2015, Berkeley, CA, USA, 17-20 October, 2015}, pages
  1499--1512, 2015.

\bibitem{AndoniNOY14}
A.~Andoni, A.~Nikolov, K.~Onak, and G.~Yaroslavtsev.
\newblock Parallel algorithms for geometric graph problems.
\newblock In {\em Symposium on Theory of Computing, {STOC} 2014, New York, NY,
  USA, May 31 - June 03, 2014}, pages 574--583, 2014.

\bibitem{AssadiBBMSarxiv2017}
S.~Assadi, M.~Bateni, A.~Bernstein, V.~S. Mirrokni, and C.~Stein.
\newblock Coresets meet {EDCS:} algorithms for matching and vertex cover on
  massive graphs.
\newblock {\em CoRR}, abs/1711.03076, 2017.

\bibitem{AssadiK17}
S.~Assadi and S.~Khanna.
\newblock Randomized composable coresets for matching and vertex cover.
\newblock In {\em Proceedings of the 29th {ACM} Symposium on Parallelism in
  Algorithms and Architectures, {SPAA} 2017, Washington DC, USA, July 24-26,
  2017}, pages 3--12, 2017.

\bibitem{AssadiKL17}
S.~Assadi, S.~Khanna, and Y.~Li.
\newblock On estimating maximum matching size in graph streams.
\newblock In {\em Proceedings of the Twenty-Eighth Annual {ACM-SIAM} Symposium
  on Discrete Algorithms, {SODA} 2017, Barcelona, Spain, Hotel Porta Fira,
  January 16-19}, pages 1723--1742, 2017.

\bibitem{AssadiKLY16}
S.~Assadi, S.~Khanna, Y.~Li, and G.~Yaroslavtsev.
\newblock Maximum matchings in dynamic graph streams and the simultaneous
  communication model.
\newblock In {\em Proceedings of the Twenty-Seventh Annual {ACM-SIAM} Symposium
  on Discrete Algorithms, {SODA} 2016, Arlington, VA, USA, January 10-12,
  2016}, pages 1345--1364, 2016.

\bibitem{BadanidiyuruMKK14}
A.~Badanidiyuru, B.~Mirzasoleiman, A.~Karbasi, and A.~Krause.
\newblock Streaming submodular maximization: massive data summarization on the
  fly.
\newblock In {\em The 20th {ACM} {SIGKDD} International Conference on Knowledge
  Discovery and Data Mining, {KDD} '14, New York, NY, {USA} - August 24 - 27,
  2014}, pages 671--680, 2014.

\bibitem{BalcanEL13}
M.~Balcan, S.~Ehrlich, and Y.~Liang.
\newblock Distributed k-means and k-median clustering on general communication
  topologies.
\newblock In {\em Advances in Neural Information Processing Systems 26: 27th
  Annual Conference on Neural Information Processing Systems 2013. Proceedings
  of a meeting held December 5-8, 2013, Lake Tahoe, Nevada, United States.},
  pages 1995--2003, 2013.

\bibitem{BaswanaGS15}
S.~Baswana, M.~Gupta, and S.~Sen.
\newblock Fully dynamic maximal matching in o(log n) update time.
\newblock {\em {SIAM} J. Comput.}, 44(1):88--113, 2015.

\bibitem{BateniBLM14}
M.~Bateni, A.~Bhaskara, S.~Lattanzi, and V.~S. Mirrokni.
\newblock Distributed balanced clustering via mapping coresets.
\newblock In {\em Advances in Neural Information Processing Systems 27: Annual
  Conference on Neural Information Processing Systems 2014, December 8-13 2014,
  Montreal, Quebec, Canada}, pages 2591--2599, 2014.

\bibitem{BeameKS13}
P.~Beame, P.~Koutris, and D.~Suciu.
\newblock Communication steps for parallel query processing.
\newblock In {\em Proceedings of the 32nd {ACM} {SIGMOD-SIGACT-SIGART}
  Symposium on Principles of Database Systems, {PODS} 2013, New York, NY, {USA}
  - June 22 - 27, 2013}, pages 273--284, 2013.

\bibitem{BehnezhadDETY17}
S.~Behnezhad, M.~Derakhshan, H.~Esfandiari, E.~Tan, and H.~Yami.
\newblock Brief announcement: Graph matching in massive datasets.
\newblock In {\em Proceedings of the 29th {ACM} Symposium on Parallelism in
  Algorithms and Architectures, {SPAA} 2017, Washington DC, USA, July 24-26,
  2017}, pages 133--136, 2017.

\bibitem{BernsteinS15}
A.~Bernstein and C.~Stein.
\newblock Fully dynamic matching in bipartite graphs.
\newblock In {\em Automata, Languages, and Programming - 42nd International
  Colloquium, {ICALP} 2015, Kyoto, Japan, July 6-10, 2015, Proceedings, Part
  {I}}, pages 167--179, 2015.

\bibitem{BernsteinS16}
A.~Bernstein and C.~Stein.
\newblock Faster fully dynamic matchings with small approximation ratios.
\newblock In {\em Proceedings of the Twenty-Seventh Annual {ACM-SIAM} Symposium
  on Discrete Algorithms, {SODA} 2016, Arlington, VA, USA, January 10-12,
  2016}, pages 692--711, 2016.

\bibitem{BhattacharyaHNT15}
S.~Bhattacharya, M.~Henzinger, D.~Nanongkai, and C.~E. Tsourakakis.
\newblock Space- and time-efficient algorithm for maintaining dense subgraphs
  on one-pass dynamic streams.
\newblock In {\em Proceedings of the Forty-Seventh Annual {ACM} on Symposium on
  Theory of Computing, {STOC} 2015, Portland, OR, USA, June 14-17, 2015}, pages
  173--182, 2015.

\bibitem{Bollobas2001}
B.~Bollob{\'a}s.
\newblock {\em Random Graphs}.
\newblock Number~73. Cambridge University Press, 2001.

\bibitem{BulteauFKP16}
L.~Bulteau, V.~Froese, K.~Kutzkov, and R.~Pagh.
\newblock Triangle counting in dynamic graph streams.
\newblock {\em Algorithmica}, 76(1):259--278, 2016.

\bibitem{ChitnisCEHMMV16}
R.~Chitnis, G.~Cormode, H.~Esfandiari, M.~Hajiaghayi, A.~McGregor,
  M.~Monemizadeh, and S.~Vorotnikova.
\newblock Kernelization via sampling with applications to finding matchings and
  related problems in dynamic graph streams.
\newblock In {\em Proceedings of the Twenty-Seventh Annual {ACM-SIAM} Symposium
  on Discrete Algorithms, {SODA} 2016, Arlington, VA, USA, January 10-12,
  2016}, pages 1326--1344, 2016.

\bibitem{ChitnisCHM15}
R.~H. Chitnis, G.~Cormode, M.~T. Hajiaghayi, and M.~Monemizadeh.
\newblock Parameterized streaming: Maximal matching and vertex cover.
\newblock In {\em Proceedings of the Twenty-Sixth Annual {ACM-SIAM} Symposium
  on Discrete Algorithms, {SODA} 2015, San Diego, CA, USA, January 4-6, 2015},
  pages 1234--1251, 2015.

\bibitem{CS14}
M.~Crouch and D.~S. Stubbs.
\newblock Improved streaming algorithms for weighted matching, via unweighted
  matching.
\newblock In {\em Approximation, Randomization, and Combinatorial Optimization.
  Algorithms and Techniques, {APPROX/RANDOM} 2014, September 4-6, 2014,
  Barcelona, Spain}, pages 96--104, 2014.

\bibitem{CzumajLMMOS17}
A.~Czumaj, J.~{\L}{\k{a}}cki, A.~M{\k{a}}dry, S.~Mitrovi{\'c}, K.~Onak, and
  P.~Sankowski.
\newblock Round compression for parallel matching algorithms.
\newblock {\em arXiv preprint arXiv:1707.03478. To appear in STOC 2018}, 2018.

\bibitem{BarbosaENW15}
R.~da~Ponte~Barbosa, A.~Ene, H.~L. Nguyen, and J.~Ward.
\newblock The power of randomization: Distributed submodular maximization on
  massive datasets.
\newblock In {\em Proceedings of the 32nd International Conference on Machine
  Learning, {ICML} 2015, Lille, France, 6-11 July 2015}, pages 1236--1244,
  2015.

\bibitem{BarbosaENW16}
R.~da~Ponte~Barbosa, A.~Ene, H.~L. Nguyen, and J.~Ward.
\newblock A new framework for distributed submodular maximization.
\newblock In {\em {IEEE} 57th Annual Symposium on Foundations of Computer
  Science, {FOCS} 2016, 9-11 October 2016, Hyatt Regency, New Brunswick, New
  Jersey, {USA}}, pages 645--654, 2016.

\bibitem{Dobzinski16}
S.~Dobzinski.
\newblock Computational efficiency requires simple taxation.
\newblock In {\em {IEEE} 57th Annual Symposium on Foundations of Computer
  Science, {FOCS} 2016, 9-11 October 2016, Hyatt Regency, New Brunswick, New
  Jersey, {USA}}, pages 209--218, 2016.

\bibitem{DobzinskiNO14}
S.~Dobzinski, N.~Nisan, and S.~Oren.
\newblock Economic efficiency requires interaction.
\newblock In {\em Symposium on Theory of Computing, {STOC} 2014, New York, NY,
  USA, May 31 - June 03, 2014}, pages 233--242, 2014.

\bibitem{EKS09}
S.~Eggert, L.~Kliemann, and A.~Srivastav.
\newblock Bipartite graph matchings in the semi-streaming model.
\newblock In {\em Algorithms - {ESA} 2009, 17th Annual European Symposium,
  Copenhagen, Denmark, September 7-9, 2009. Proceedings}, pages 492--503, 2009.

\bibitem{EpsteinLMS11}
L.~Epstein, A.~Levin, J.~Mestre, and D.~Segev.
\newblock Improved approximation guarantees for weighted matching in the
  semi-streaming model.
\newblock {\em {SIAM} J. Discrete Math.}, 25(3):1251--1265, 2011.

\bibitem{EsfandiariHM16}
H.~Esfandiari, M.~Hajiaghayi, and M.~Monemizadeh.
\newblock Finding large matchings in semi-streaming.
\newblock In {\em {IEEE} International Conference on Data Mining Workshops,
  {ICDM} Workshops 2016, December 12-15, 2016, Barcelona, Spain.}, pages
  608--614, 2016.

\bibitem{EsfandiariHLMO15}
H.~Esfandiari, M.~T. Hajiaghayi, V.~Liaghat, M.~Monemizadeh, and K.~Onak.
\newblock Streaming algorithms for estimating the matching size in planar
  graphs and beyond.
\newblock In {\em Proceedings of the Twenty-Sixth Annual {ACM-SIAM} Symposium
  on Discrete Algorithms, {SODA} 2015, San Diego, CA, USA, January 4-6, 2015},
  pages 1217--1233, 2015.

\bibitem{FKMSZ05}
J.~Feigenbaum, S.~Kannan, A.~McGregor, S.~Suri, and J.~Zhang.
\newblock On graph problems in a semi-streaming model.
\newblock {\em Theor. Comput. Sci.}, 348(2-3):207--216, 2005.

\bibitem{FoxHS15}
J.~Fox, H.~Huang, and B.~Sudakov.
\newblock On graphs decomposable into induced matchings of linear sizes.
\newblock {\em Bulletin of the London Mathematical Society}, 49(1):45--57,
  2017.

\bibitem{GhaffariGMR18}
M.~Ghaffari, T.~Gouleakis, S.~Mitrovic, and R.~Rubinfeld.
\newblock Improved massively parallel computation algorithms for mis, matching,
  and vertex cover.
\newblock {\em CoRR}, abs/1802.08237. To appear in PODC 2018., 2018.

\bibitem{GoelKK12}
A.~Goel, M.~Kapralov, and S.~Khanna.
\newblock On the communication and streaming complexity of maximum bipartite
  matching.
\newblock In {\em Proceedings of the Twenty-third Annual ACM-SIAM Symposium on
  Discrete Algorithms}, SODA '12, pages 468--485. SIAM, 2012.

\bibitem{GoodrichSZ11}
M.~T. Goodrich, N.~Sitchinava, and Q.~Zhang.
\newblock Sorting, searching, and simulation in the mapreduce framework.
\newblock In {\em Algorithms and Computation - 22nd International Symposium,
  {ISAAC} 2011, Yokohama, Japan, December 5-8, 2011. Proceedings}, pages
  374--383, 2011.

\bibitem{Gowers01}
W.~Gowers.
\newblock Some unsolved problems in additive/combinatorial number theory.
\newblock {\em preprint}, 2001.

\bibitem{GO13}
V.~Guruswami and K.~Onak.
\newblock Superlinear lower bounds for multipass graph processing.
\newblock In {\em Proceedings of the 28th Conference on Computational
  Complexity, {CCC} 2013, K.lo Alto, California, USA, 5-7 June, 2013}, pages
  287--298, 2013.

\bibitem{HarveyLL18}
N.~J.~A. Harvey, C.~Liaw, and P.~Liu.
\newblock Greedy and local ratio algorithms in the mapreduce model.
\newblock {\em CoRR}, abs/1806.06421, 2018.

\bibitem{HassidimKNO09}
A.~Hassidim, J.~A. Kelner, H.~N. Nguyen, and K.~Onak.
\newblock Local graph partitions for approximation and testing.
\newblock In {\em 50th Annual {IEEE} Symposium on Foundations of Computer
  Science, {FOCS} 2009, October 25-27, 2009, Atlanta, Georgia, {USA}}, pages
  22--31, 2009.

\bibitem{HuangRVZ15}
Z.~Huang, B.~Radunovic, M.~Vojnovic, and Q.~Zhang.
\newblock Communication complexity of approximate matching in distributed
  graphs.
\newblock In {\em 32nd International Symposium on Theoretical Aspects of
  Computer Science, {STACS} 2015, March 4-7, 2015, Garching, Germany}, pages
  460--473, 2015.

\bibitem{IndykMMM14}
P.~Indyk, S.~Mahabadi, M.~Mahdian, and V.~S. Mirrokni.
\newblock Composable core-sets for diversity and coverage maximization.
\newblock In {\em Proceedings of the 33rd {ACM} {SIGMOD-SIGACT-SIGART}
  Symposium on Principles of Database Systems, PODS'14, Snowbird, UT, USA, June
  22-27, 2014}, pages 100--108, 2014.

\bibitem{Kapralov13}
M.~Kapralov.
\newblock Better bounds for matchings in the streaming model.
\newblock In {\em Proceedings of the Twenty-Fourth Annual {ACM-SIAM} Symposium
  on Discrete Algorithms, {SODA} 2013, New Orleans, Louisiana, USA, January
  6-8, 2013}, pages 1679--1697, 2013.

\bibitem{KapralovKS14}
M.~Kapralov, S.~Khanna, and M.~Sudan.
\newblock Approximating matching size from random streams.
\newblock In {\em Proceedings of the Twenty-Fifth Annual {ACM-SIAM} Symposium
  on Discrete Algorithms, {SODA} 2014, Portland, Oregon, USA, January 5-7,
  2014}, pages 734--751, 2014.

\bibitem{KapralovLMMS14}
M.~Kapralov, Y.~T. Lee, C.~Musco, C.~Musco, and A.~Sidford.
\newblock Single pass spectral sparsification in dynamic streams.
\newblock In {\em 55th {IEEE} Annual Symposium on Foundations of Computer
  Science, {FOCS} 2014, Philadelphia, PA, USA, October 18-21, 2014}, pages
  561--570, 2014.

\bibitem{KW14}
M.~Kapralov and D.~Woodruff.
\newblock Spanners and sparsifiers in dynamic streams.
\newblock {\em PODC}, 2014.

\bibitem{KarloffSV10}
H.~J. Karloff, S.~Suri, and S.~Vassilvitskii.
\newblock A model of computation for mapreduce.
\newblock In {\em Proceedings of the Twenty-First Annual {ACM-SIAM} Symposium
  on Discrete Algorithms, {SODA} 2010, Austin, Texas, USA, January 17-19,
  2010}, pages 938--948, 2010.

\bibitem{Konrad15}
C.~Konrad.
\newblock Maximum matching in turnstile streams.
\newblock In {\em Algorithms - {ESA} 2015 - 23rd Annual European Symposium,
  Patras, Greece, September 14-16, 2015, Proceedings}, pages 840--852, 2015.

\bibitem{KonradMM12}
C.~Konrad, F.~Magniez, and C.~Mathieu.
\newblock Maximum matching in semi-streaming with few passes.
\newblock In {\em Approximation, Randomization, and Combinatorial Optimization.
  Algorithms and Techniques - 15th International Workshop, {APPROX} 2012, and
  16th International Workshop, {RANDOM} 2012, Cambridge, MA, USA, August 15-17,
  2012. Proceedings}, pages 231--242, 2012.

\bibitem{LattanziMSV11}
S.~Lattanzi, B.~Moseley, S.~Suri, and S.~Vassilvitskii.
\newblock Filtering: a method for solving graph problems in mapreduce.
\newblock In {\em {SPAA} 2011: Proceedings of the 23rd Annual {ACM} Symposium
  on Parallelism in Algorithms and Architectures, San Jose, CA, USA, June 4-6,
  2011 (Co-located with {FCRC} 2011)}, pages 85--94, 2011.

\bibitem{LotkerPP15}
Z.~Lotker, B.~Patt{-}Shamir, and S.~Pettie.
\newblock Improved distributed approximate matching.
\newblock {\em J. {ACM}}, 62(5):38:1--38:17, 2015.

\bibitem{McGregor05}
A.~McGregor.
\newblock Finding graph matchings in data streams.
\newblock In {\em Approximation, Randomization and Combinatorial Optimization,
  Algorithms and Techniques, 8th International Workshop on Approximation
  Algorithms for Combinatorial Optimization Problems, {APPROX} 2005 and 9th
  InternationalWorkshop on Randomization and Computation, {RANDOM} 2005,
  Berkeley, CA, USA, August 22-24, 2005, Proceedings}, pages 170--181, 2005.

\bibitem{M14}
A.~McGregor.
\newblock Graph stream algorithms: a survey.
\newblock {\em {SIGMOD} Record}, 43(1):9--20, 2014.

\bibitem{McGregorTVV15}
A.~McGregor, D.~Tench, S.~Vorotnikova, and H.~T. Vu.
\newblock Densest subgraph in dynamic graph streams.
\newblock In {\em Mathematical Foundations of Computer Science 2015 - 40th
  International Symposium, {MFCS} 2015, Milan, Italy, August 24-28, 2015,
  Proceedings, Part {II}}, pages 472--482, 2015.

\bibitem{McGregorV16}
A.~McGregor and S.~Vorotnikova.
\newblock Planar matching in streams revisited.
\newblock {\em In Approximation, Randomization, and Combinatorial Optimization.
  Algorithms and Techniques - 19th International Workshop, {APPROX} 2016, and
  20th International Workshop, {RANDOM} 2016}, 2016.

\bibitem{MirrokniZ15}
V.~S. Mirrokni and M.~Zadimoghaddam.
\newblock Randomized composable core-sets for distributed submodular
  maximization.
\newblock In {\em Proceedings of the Forty-Seventh Annual {ACM} on Symposium on
  Theory of Computing, {STOC} 2015, Portland, OR, USA, June 14-17, 2015}, pages
  153--162, 2015.

\bibitem{MirzasoleimanKSK13}
B.~Mirzasoleiman, A.~Karbasi, R.~Sarkar, and A.~Krause.
\newblock Distributed submodular maximization: Identifying representative
  elements in massive data.
\newblock In {\em Advances in Neural Information Processing Systems 26: 27th
  Annual Conference on Neural Information Processing Systems 2013. Proceedings
  of a meeting held December 5-8, 2013, Lake Tahoe, Nevada, United States.},
  pages 2049--2057, 2013.

\bibitem{RAbook}
R.~Motwani and P.~Raghavan.
\newblock {\em Randomized Algorithms}.
\newblock Cambridge University Press, 1995.

\bibitem{NeimanS13}
O.~Neiman and S.~Solomon.
\newblock Simple deterministic algorithms for fully dynamic maximal matching.
\newblock In {\em Symposium on Theory of Computing Conference, STOC'13, Palo
  Alto, CA, USA, June 1-4, 2013}, pages 745--754, 2013.

\bibitem{NguyenO08}
H.~N. Nguyen and K.~Onak.
\newblock Constant-time approximation algorithms via local improvements.
\newblock In {\em 49th Annual {IEEE} Symposium on Foundations of Computer
  Science, {FOCS} 2008, October 25-28, 2008, Philadelphia, PA, {USA}}, pages
  327--336, 2008.

\bibitem{OnakRRR12}
K.~Onak, D.~Ron, M.~Rosen, and R.~Rubinfeld.
\newblock A near-optimal sublinear-time algorithm for approximating the minimum
  vertex cover size.
\newblock In {\em Proceedings of the Twenty-Third Annual {ACM-SIAM} Symposium
  on Discrete Algorithms, {SODA} 2012, Kyoto, Japan, January 17-19, 2012},
  pages 1123--1131, 2012.

\bibitem{OnakR10}
K.~Onak and R.~Rubinfeld.
\newblock Maintaining a large matching and a small vertex cover.
\newblock In {\em Proceedings of the 42nd {ACM} Symposium on Theory of
  Computing, {STOC} 2010, Cambridge, Massachusetts, USA, 5-8 June 2010}, pages
  457--464, 2010.

\bibitem{ParnasR07}
M.~Parnas and D.~Ron.
\newblock Approximating the minimum vertex cover in sublinear time and a
  connection to distributed algorithms.
\newblock {\em Theor. Comput. Sci.}, 381(1-3):183--196, 2007.

\bibitem{PazS17}
A.~Paz and G.~Schwartzman.
\newblock A (2 + {\eps})-approximation for maximum weight matching in the
  semi-streaming model.
\newblock In {\em Proceedings of the Twenty-Eighth Annual {ACM-SIAM} Symposium
  on Discrete Algorithms, {SODA} 2017, Barcelona, Spain, Hotel Porta Fira,
  January 16-19}, pages 2153--2161, 2017.

\bibitem{SchmidtSS95}
J.~P. Schmidt, A.~Siegel, and A.~Srinivasan.
\newblock Chernoff-hoeffding bounds for applications with limited independence.
\newblock {\em {SIAM} J. Discrete Math.}, 8(2):223--250, 1995.

\bibitem{Solomon16}
S.~Solomon.
\newblock Fully dynamic maximal matching in constant update time.
\newblock In {\em {IEEE} 57th Annual Symposium on Foundations of Computer
  Science, {FOCS} 2016, 9-11 October 2016, Hyatt Regency, New Brunswick, New
  Jersey, {USA}}, pages 325--334, 2016.

\bibitem{Vizing64}
V.~G. Vizing.
\newblock On an estimate of the chromatic class of a p-graph.
\newblock {\em Diskret. Analiz}, 3(7):25--30, 1964.

\bibitem{YoshidaYI12}
Y.~Yoshida, M.~Yamamoto, and H.~Ito.
\newblock Improved constant-time approximation algorithms for maximum matchings
  and other optimization problems.
\newblock {\em {SIAM} J. Comput.}, 41(4):1074--1093, 2012.

\bibitem{ZadehGMZ17}
S.~A. Zadeh, M.~Ghadiri, V.~S. Mirrokni, and M.~Zadimoghaddam.
\newblock Scalable feature selection via distributed diversity maximization.
\newblock In {\em Proceedings of the Thirty-First {AAAI} Conference on
  Artificial Intelligence, February 4-9, 2017, San Francisco, California,
  {USA.}}, pages 2876--2883, 2017.

\end{thebibliography}

\clearpage
\appendix

\section{Some Applications of Randomized Composable Coresets}\label{app:applications}
In the following, we provide more details on the applications of randomized coresets to different computational models and in particular prove Proposition~\ref{prop:coreset-application}.


\paragraph{MPC Algorithms.} The MPC model is defined in Section~\ref{sec:mpc}. We can use any $\alpha$-approximation randomized coreset \alg of size $s$ for a problem $P$ to obtain a parallel algorithm 
in only \emph{two} MPC rounds. Suppose $G(V,E)$ is the input graph and let $k:= \sqrt{m/s}$. The algorithm is as follows: 

\begin{tbox}\vspace{-0.25cm}
\begin{enumerate}
	\item In the first round, create a random $k$-partition $\Gi{1},\ldots,\Gi{k}$ and allocate each graph $\Gi{i}$ to the machine $i \in [k]$.  
	
	\item Each machine $i \in [k]$ creates a randomized composable coreset $C_i = \alg(\Gi{i})$. 
	
	\item In the second round, collect the union of all coresets to create $H:= H(V,C_1,\ldots,C_k)$ on one machine and solve $P$ in $H$ using any offline algorithm. 
\end{enumerate}
\end{tbox}

It is straightforward to verify that this MPC algorithm requires $O(k) = O(\sqrt{m/s})$ machines each with $O(\sqrt{ms} + n)$ memory. Moreover, by Definition~\ref{def:randomized-coreset}, the 
output of this algorithm is an $\alpha$-approximation to $P(G)$ with high probability. 

\paragraph{Streaming Algorithms.} In the streaming model, the edges of the input graph are presented to the algorithm one by one in a sequence and the algorithm is allowed to make a single pass (or a few passes) 
over this sequence. 

Similar to the case for MPC algorithms, any $\alpha$-approximation randomized coreset $\alg$ of size $s$ for a problem $P$ 
also imply a single-pass streaming algorithm for $P$ in \emph{random arrival} streams. Suppose $G(V,E)$ is the input graph which its edges are arriving in a random order in a stream of length $m$. 
The algorithms is as follows. 

\begin{tbox}\vspace{-0.25cm}
\begin{enumerate}
	\item Randomly partition the edges in the stream into $k := \sqrt{m/s}$ \emph{consecutive} pieces such that the distribution of the graphs $\Gi{1},\ldots,\Gi{k}$ created by edges in each piece
	is a random $k$-partition of $G$ (using the randomness in the arrival of the stream). 
	
	\item Let $H$ be the empty graph on vertices $V$. For $i = 1$ to $k$: 
	\begin{enumerate}
		\item Read the graph $\Gi{i}$ completely and store it in the memory temporarily. 
		\item Compute a randomized composable coreset $C_i = \alg(\Gi{i})$. 
		\item Update $H$ by adding all edges in $C_i$ to it and discard all edges in $\Gi{i}$ to reuse the memory in the next iteration.
	\end{enumerate}
	\item At the end of stream, solve $P$ in $H$ using any offline algorithm. 
\end{enumerate}
\end{tbox}

It is again easy to see that the total memory required by this algorithm is $O(m/k) = O(\sqrt{ms})$ (to store each graph $\Gi{i}$ in the memory temporarily) plus $O(k \cdot s) = O(\sqrt{ms} )$ (to store all coresets during the stream). 
The output of this algorithm is then an $\alpha$-approximation to $P(G)$ with high probability by Definition~\ref{def:randomized-coreset}. 

\paragraph{Simultaneous Communication Model.} In this model, the input graph is edge partitioned across $k$ machines/players and the goal is to solve the problem on the union of these graphs. 
In order to this, the players \emph{simultaneously} each send a single message to an additional party called the coordinator who then outputs the solution. 

Any $\alpha$-approximation randomized coreset $\alg$ of size $s$ for a problem $P$ 
immediately implies a simultaneous protocol for $P$ on \emph{randomly partitioned} inputs. Suppose $G(V,E)$ is the input graph which its edges are partitioned across $k$ parties \emph{randomly}. Each party only needs to compute
a coreset of its input graph and communicate it with the coordinator. The communication by each party then would be $O(s)$ and the coordinator can recover an $\alpha$-approximation to $P$ with high probability by Definition~\ref{def:randomized-coreset}.

\begin{remark}\label{rem:computational-time}
	In the above algorithms, we neglected the computation time needed for solving $P$ on the union of the coresets. This is consistent with the definitions of the models 
	considered here as they all allow unbounded computation time to the algorithm. However, if one insists on having efficient time algorithms (e.g., polynomial time) then
	the approximation ratio of the resulting algorithm would be $\rho \cdot \alpha$ where $\rho$ is the approximation guarantee of any offline algorithm we use for solving $P$ in the end (in many scenarios however this naive blow-up in the approximation
	ratio can be avoided by additional care, although not in a black-box way anymore). 
	\end{remark}

\section{Missing Details from Section~\ref{sec:prelim}}\label{app:prelim}

\subsection{Useful Concentration of Measure Inequalities}\label{app:concentrations} 

 Azuma's inequality proves a concentration bound for martingales. 

\begin{proposition}[Azuma's inequality]\label{prop:azuma}
	Let $\set{X_i}_{i=0}^{n}$ be a martingale (with respect to some random variables $\set{Y_i}_{i=0}^{n}$). Suppose there exists a sequence of integers $\set{c_i}_{i=1}^{n}$ such that $\card{X_i - X_{i-1}} \leq c_i$; then, 
	\begin{align*}
		\Pr\Paren{\card{X_n - X_0} \geq \lambda} \leq 2\cdot \exp\paren{-\frac{\lambda^2}{\sum_{i=1}^{n} c_i^2}}.
	\end{align*} 
\end{proposition}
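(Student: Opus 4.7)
The plan is to prove Azuma's inequality via the standard exponential moment (Chernoff) method applied to the martingale difference sequence. Set $D_i := X_i - X_{i-1}$; by the martingale property $\Ex[D_i \mid Y_1,\ldots,Y_{i-1}] = 0$, and by hypothesis $|D_i| \leq c_i$. I would first reduce the two-sided bound to a one-sided bound: it suffices to prove $\Pr(X_n - X_0 \geq \lambda) \leq \exp(-\lambda^2/\sum c_i^2)$, since applying the same argument to the martingale $\{-X_i\}$ (which has the same bounded-difference constants) yields the symmetric tail, and a union bound produces the factor of $2$.

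Next, for an arbitrary parameter $t > 0$, I would apply Markov's inequality to $\exp(t(X_n - X_0))$:
\[
\Pr\paren{X_n - X_0 \geq \lambda} \leq e^{-t\lambda} \cdot \Ex\bracket{\exp\paren{t\sum_{i=1}^{n} D_i}}.
\]
The core step is to control the moment generating function of each $D_i$ conditioned on the past. For this I would invoke Hoeffding's lemma: any random variable $Z$ with $\Ex[Z] = 0$ and $Z \in [-c,c]$ satisfies $\Ex[e^{tZ}] \leq e^{t^2 c^2 / 2}$. The standard derivation uses convexity of $e^{tz}$ on $[-c,c]$ to dominate it by a chord, takes expectations, and then bounds the resulting cumulant function by Taylor expansion around its maximum. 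Applied conditionally, this gives $\Ex[e^{tD_i} \mid Y_1,\ldots,Y_{i-1}] \leq e^{t^2 c_i^2 / 2}$.

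I would then peel off the sum one term at a time using the tower property:
\[
\Ex\bracket{\exp\paren{t\sum_{i=1}^{n} D_i}} = \Ex\Bracket{\exp\paren{t\sum_{i=1}^{n-1} D_i} \cdot \Ex\bracket{e^{tD_n} \mid Y_1,\ldots,Y_{n-1}}} \leq e^{t^2 c_n^2 / 2} \cdot \Ex\bracket{\exp\paren{t\sum_{i=1}^{n-1} D_i}},
\]
and iterating gives the bound $\exp(t^2 \sum_i c_i^2 / 2)$. Combining with Markov yields $\Pr(X_n - X_0 \geq \lambda) \leq \exp(-t\lambda + t^2 \sum_i c_i^2 / 2)$. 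Optimizing over $t$ (choose $t = \lambda / \sum_i c_i^2$) gives the one-sided bound $\exp(-\lambda^2 / (2\sum_i c_i^2))$, and the symmetric argument plus union bound completes the proof. (The statement as written has $\sum c_i^2$ rather than $2\sum c_i^2$ in the exponent, which is a slight weakening that of course follows from the sharper standard form.)

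The main obstacle is really only the proof of Hoeffding's lemma itself — the rest is bookkeeping with the tower property and a one-variable optimization. Once Hoeffding's lemma is in hand, the conditional version follows from the same pointwise inequality applied inside a conditional expectation, and everything else is a mechanical Chernoff-style computation.
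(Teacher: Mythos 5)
The paper does not prove this statement: Azuma's inequality is listed in Appendix~\ref{app:concentrations} as a standard concentration bound and is quoted without proof (just as the Chernoff bounds are). Your proof is the canonical one — Chernoff's exponential-moment method, Hoeffding's lemma applied to each conditionally bounded martingale increment, the tower property to peel off the product, and a one-variable optimization over $t$ — and it is correct. Your observation that optimizing $t$ actually yields the sharper exponent $-\lambda^2 / \paren{2\sum_i c_i^2}$, so the statement as written (with $\sum_i c_i^2$ in place of $2\sum_i c_i^2$) is a mild weakening that follows a fortiori, is also accurate; the weaker form is still sufficient for its use in Lemma~\ref{lem:concentration}. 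The only step you leave at the level of a sketch is Hoeffding's lemma itself (convexity to bound $e^{tz}$ by the chord on $[-c,c]$, then a Taylor/calculus bound on the resulting log-MGF); a fully self-contained write-up would need to carry that computation through, but the route is standard and the plan is complete.
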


We also use the following standard variant of the Chernoff bound as well as its generalization for variables with bounded independence. 

\begin{proposition}[Chernoff bound]\label{prop:chernoff}
	Let $X_1,\ldots,X_n$ be independent random variables taking value in $[0,1]$ and $X := \sum_{i=1}^{n} X_i$. Then, for any $\delta \in (0,1)$
	\begin{align*}
		\Pr\Paren{\card{X - \Ex\bracket{X}} \geq \delta \cdot \Ex\bracket{X}} \leq 2\cdot \exp\paren{-\frac{\delta^2 \cdot \Ex\bracket{X}}{3}}. 
	\end{align*}
\end{proposition}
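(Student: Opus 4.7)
The plan is to prove the two-sided Chernoff bound via the classical moment generating function (MGF) approach, handling the upper tail $\Pr(X - \Ex[X] \geq \delta \Ex[X])$ and the lower tail $\Pr(X - \Ex[X] \leq -\delta \Ex[X])$ separately and then combining them by a union bound, which gives the factor of $2$. Let $\mu := \Ex[X]$.

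For the upper tail, I would apply Markov's inequality to $e^{tX}$ for a parameter $t > 0$ to be optimized: $\Pr(X \geq (1+\delta)\mu) = \Pr(e^{tX} \geq e^{t(1+\delta)\mu}) \leq e^{-t(1+\delta)\mu}\,\Ex[e^{tX}]$. Independence of the $X_i$ gives $\Ex[e^{tX}] = \prod_{i=1}^n \Ex[e^{tX_i}]$. Since $X_i \in [0,1]$, convexity of $x \mapsto e^{tx}$ on $[0,1]$ yields $e^{tX_i} \leq 1 + X_i(e^t-1)$, so $\Ex[e^{tX_i}] \leq 1 + \Ex[X_i](e^t-1) \leq \exp\!\big(\Ex[X_i](e^t-1)\big)$, and multiplying these bounds gives $\Ex[e^{tX}] \leq \exp(\mu(e^t-1))$. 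Choosing $t = \ln(1+\delta)$ produces the familiar bound $\Pr(X \geq (1+\delta)\mu) \leq \big(\tfrac{e^\delta}{(1+\delta)^{1+\delta}}\big)^{\mu}$, and the numerical inequality $\ln(1+\delta) \geq \frac{\delta}{1 + \delta/3}$ (valid for $\delta \in (0,1)$) then yields the clean form $\exp(-\delta^2 \mu / 3)$.

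The lower tail is analogous but uses $t < 0$: Markov on $e^{-t X}$ with the same MGF bound gives $\Pr(X \leq (1-\delta)\mu) \leq \exp(-\delta^2 \mu / 2)$, which is actually sharper than the upper tail. A union bound on the two one-sided events produces the factor of $2$ in the statement and we keep the worse exponent $\delta^2\mu/3$.

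The only genuinely non-routine step is the final calculus inequality $(1+\delta)\ln(1+\delta) - \delta \geq \delta^2/3$ for $\delta \in (0,1)$; I expect to verify this by checking that the difference vanishes at $\delta = 0$ along with its first derivative and is convex on $(0,1)$, or equivalently by expanding $\ln(1+\delta)$ as a power series and comparing term by term. Everything else (Markov's inequality, independence factorization of the MGF, the convexity estimate $e^{tx} \leq 1 + x(e^t-1)$ on $[0,1]$, and the bound $1 + y \leq e^y$) is standard. Since this is a textbook inequality, the proof proposal is largely bookkeeping, and in fact most papers, including the present one, simply cite it rather than reproducing it.
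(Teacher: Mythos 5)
The paper does not prove this proposition at all; it is stated without proof in Appendix~B as a standard concentration inequality, so there is no ``paper's approach'' to compare against. Your MGF argument is the textbook route and its overall structure is correct: Markov on $e^{tX}$, independence to factor the MGF, the convexity estimate $e^{tx}\le 1 + x(e^t-1)$ for $x\in[0,1]$ combined with $1+y\le e^y$ to get $\Ex[e^{tX}]\le\exp(\mu(e^t-1))$, the optimal choice $t=\ln(1+\delta)$, a separate (and indeed sharper, $\delta^2\mu/2$) bound for the lower tail via $t=-\ln(1-\delta)$, and a union bound to obtain the factor~$2$.

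One small correction to your proposed verification of the calculus step. You suggest showing $f(\delta):=(1+\delta)\ln(1+\delta)-\delta-\delta^2/3\ge 0$ on $(0,1)$ by checking $f(0)=f'(0)=0$ and convexity, but $f$ is \emph{not} convex on all of $(0,1)$: one computes $f''(\delta)=\tfrac{1}{1+\delta}-\tfrac{2}{3}$, which is positive for $\delta<1/2$ and negative for $\delta>1/2$. The inequality is still true, and your alternative (power-series) route works; another clean fix is to note that $f'(\delta)=\ln(1+\delta)-\tfrac{2}{3}\delta$ satisfies $f'(0)=0$, is increasing up to $\delta=1/2$, and at $\delta=1$ equals $\ln 2-\tfrac{2}{3}>0$, so $f'\ge 0$ throughout $(0,1)$ and $f$ is nondecreasing from $f(0)=0$. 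With that repair the proposal is a complete and correct proof.
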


\begin{proposition}[Chernoff bound with bounded independence~\cite{SchmidtSS95}]\label{prop:bounded-chernoff}
	Let $X_1,\ldots,X_n$ be $\kappa$-wise independent variables taking value in $[0,1]$ and $X := \sum_{i=1}^{n} X_i$. For any $\delta \in (0,1)$, if $\kappa \leq \delta^2 \cdot \Ex\bracket{X}/2$, then, 
	\begin{align*}
		\Pr\Paren{\card{X - \Ex\bracket{X}} \geq \delta \cdot \Ex\bracket{X}} \leq 2\cdot \exp\paren{-\frac{\kappa}{2}}. 
	\end{align*}
\end{proposition}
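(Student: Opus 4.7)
The natural approach is the \emph{method of moments}, which is the standard way such tail bounds are derived when only bounded independence is available (Chernoff's usual moment-generating-function trick fails because it requires full independence). Set $\mu := \Ex[X]$, write $Y_i := X_i - \Ex[X_i]$ so that $\Ex[Y_i] = 0$ and $|Y_i| \leq 1$, and let $S := X - \mu = \sum_{i=1}^{n} Y_i$. The plan is to upper bound $\Ex[S^{2t}]$ for an even integer $2t \leq \kappa$ and then invoke Markov's inequality in the form
\[
\Pr\bigl(|S| \geq \delta\mu\bigr) \;=\; \Pr\bigl(S^{2t} \geq (\delta\mu)^{2t}\bigr) \;\leq\; \frac{\Ex[S^{2t}]}{(\delta\mu)^{2t}}.
\]

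The core step is the moment bound. Expanding $S^{2t} = \sum_{(i_1,\ldots,i_{2t})} Y_{i_1}\cdots Y_{i_{2t}}$ and using $\kappa$-wise independence (with $2t \leq \kappa$), the expectation of each monomial factors along its distinct indices. Because $\Ex[Y_i] = 0$, any monomial in which some index appears exactly once contributes zero, so only multi-indices in which every distinct index appears at least twice survive. Grouping such multi-indices by their partition shape and using $|Y_i| \leq 1$ together with $\Ex[Y_i^2] \leq \Ex[X_i]$, a standard counting argument (partitions of $[2t]$ into $\leq t$ blocks of size $\geq 2$, times the number of ways to assign distinct indices from $[n]$ to those blocks, bounded using $\sum_i \Ex[X_i] = \mu$) yields an inequality of the shape
\[
\Ex[S^{2t}] \;\leq\; (C\,t\,\mu)^{t}
\]
for an absolute constant $C$; this is the same ``Gaussian'' moment estimate that underlies the classical Khintchine/Bernstein-type inequalities.

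Plugging this moment bound into Markov gives
\[
\Pr\bigl(|S| \geq \delta\mu\bigr) \;\leq\; \frac{(C\,t\,\mu)^{t}}{(\delta\mu)^{2t}} \;=\; \left(\frac{C\,t}{\delta^2 \mu}\right)^{t}.
\]
Now choose $t = \lfloor \kappa/2\rfloor$. The hypothesis $\kappa \leq \delta^2\mu/2$ (with a slightly sharper constant absorbed into $C$) forces the ratio $Ct/(\delta^2\mu)$ to be bounded below $1/e$, so the right-hand side is at most $e^{-t} \leq 2\exp(-\kappa/2)$, matching the claimed bound.

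The main obstacle is the moment estimate itself: tracking the combinatorial count of multi-indices of length $2t$ on $n$ symbols in which each symbol appears at least twice, and verifying that the bound is truly $(O(t\mu))^t$ rather than something worse like $(O(tn))^t$. This requires noting that among the surviving multi-indices the number of distinct indices used is at most $t$, so the assignment of indices to partition blocks contributes at most $\mu^t$ (after summing $\prod_b \Ex[X_{i_b}]$ over choices of distinct $i_b$'s), while the number of admissible partitions of $[2t]$ contributes only a geometric $(O(1))^t$ factor. This is exactly the bookkeeping carried out by Schmidt, Siegel, and Srinivasan~\cite{SchmidtSS95}, to which we defer for the explicit constants.
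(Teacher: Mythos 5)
The paper does not prove Proposition~\ref{prop:bounded-chernoff} at all; it is imported from Schmidt, Siegel, and Srinivasan~\cite{SchmidtSS95} and used as a black box (namely in the proof of Proposition~\ref{prop:pedcs-simple}). So there is no internal proof against which to compare your proposal.

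That said, your sketch is exactly the argument that appears in~\cite{SchmidtSS95}: replace the moment-generating-function trick (which genuinely needs full independence) by a $2t$-th moment bound for $2t \leq \kappa$, observe that $\kappa$-wise independence makes each degree-$2t$ monomial factor along its distinct indices, kill every monomial containing a solitary index via $\Ex[Y_i]=0$, count the surviving terms by block partitions of $[2t]$ with all blocks of size at least two (hence at most $t$ distinct indices), and bound the index sum by $\mu^t$ using $\Ex[Y_i^2]\le \Ex[X_i]$ and $\mu\ge 1$ (which the hypothesis $\kappa\le\delta^2\mu/2$ forces). Markov on $S^{2t}$ with $t\approx\kappa/2$ then gives a ratio $Ct/(\delta^2\mu)$ strictly below $1/e$, and exponentiating yields the claimed bound. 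Two small caveats you already flag implicitly: the absolute constant in the moment bound $(Ct\mu)^t$ must come out small enough (roughly $C\le 4/e$) for the final ratio to beat $1/e$, and for odd $\kappa$ the step $e^{-\lfloor\kappa/2\rfloor}\le 2e^{-\kappa/2}$ is off by a constant that has to be absorbed (e.g.\ by taking $t=\lceil\kappa/2\rceil$ or tightening the hypothesis slightly, as~\cite{SchmidtSS95} in fact do). Deferring this bookkeeping to the cited source is appropriate; the logical skeleton is correct and complete.
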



\subsection{Proof of Lemma~\ref{lem:edcs-exists}}\label{app:edcs-exists}

\begin{lemma*}
	Any graph $G$ contains an $\EDCS{G,\beta,\beta^-}$ for any parameters $\beta > \beta^-$. 
\end{lemma*}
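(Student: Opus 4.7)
The plan is to build an EDCS by a local fixing procedure and show it terminates via a potential argument. Start with $H$ equal to the empty subgraph on $V$. As long as $H$ is not an EDCS, one of the two properties is violated: either there is some edge $(u,v) \in E_H$ with $\deg_H(u) + \deg_H(v) > \beta$, in which case we remove $(u,v)$ from $H$; or there is some edge $(u,v) \in E \setminus E_H$ with $\deg_H(u) + \deg_H(v) < \beta^-$, in which case we add $(u,v)$ to $H$. It suffices to show that this process must halt in a finite number of steps, as any terminal state is an EDCS by definition.

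To control termination, I would introduce the potential
\[
  \Phi(H) := (2\beta - 1)\cdot |E_H| - \sum_{v \in V} \deg_H(v)^2,
\]
and show that each fixing step strictly increases $\Phi$ by at least $1$. For the \emph{addition} step, if $(u,v)$ is added with $d := \deg_H(u) + \deg_H(v) \leq \beta^- - 1$ (prior to insertion), the first term grows by $2\beta - 1$ while $\sum_v \deg_H(v)^2$ grows by exactly $2d + 2$. Thus $\Phi$ changes by $(2\beta - 1) - (2d + 2) \geq (2\beta - 1) - 2\beta^- = 2(\beta - \beta^-) - 1 \geq 1$, using $\beta \geq \beta^- + 1$. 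For the \emph{removal} step, if $(u,v)$ is removed with $d := \deg_H(u) + \deg_H(v) \geq \beta + 1$ (prior to removal), then $\Phi$ changes by $-(2\beta - 1) + (2d - 2) \geq -(2\beta - 1) + 2\beta = 1$. So every operation increases $\Phi$ by at least $1$.

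To conclude, observe that $\Phi(H)$ is bounded above: trivially $\Phi(H) \leq (2\beta - 1)\binom{n}{2}$. Since $\Phi$ starts at $0$ and strictly increases by at least $1$ per step, the process terminates after at most $O(\beta \cdot n^2)$ steps. The terminal subgraph violates neither \propone\ nor \proptwo, so it is an $\EDCS{G,\beta,\beta^-}$, proving the lemma (and incidentally giving a polynomial-time algorithm to construct one).

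The main subtlety, and what I would flag as the only real obstacle, is picking the potential's coefficients so that both operations strictly increase $\Phi$ even in the boundary cases $d = \beta^- - 1$ and $d = \beta + 1$ (which are tight whenever $\beta = \beta^- + 1$). The naive choice $\Phi = \beta|E_H| - \tfrac{1}{2}\sum_v \deg_H(v)^2$ fails to give strict increase on removals in the boundary case, which is why I scale to $(2\beta - 1)|E_H| - \sum_v \deg_H(v)^2$ above.
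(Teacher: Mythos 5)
Your proof is correct and follows essentially the same approach as the paper: a local fixing procedure plus a potential-function argument for termination. In fact your potential is identical to the paper's, since $\sum_{u\in V}\deg_H(u) = 2|E_H|$ and $\sum_{(u,v)\in H}\bigl(\deg_H(u)+\deg_H(v)\bigr) = \sum_{v}\deg_H(v)^2$; the paper writes $\Phi = (\beta-\tfrac12)\sum_u \deg_H(u) - \sum_{(u,v)\in H}\bigl(\deg_H(u)+\deg_H(v)\bigr)$, which equals your $(2\beta-1)|E_H| - \sum_v\deg_H(v)^2$. Your bookkeeping of the change in $\sum_v\deg_H(v)^2$ is a slightly cleaner way to track the same quantity that the paper computes by separately counting the removed/added edge's term and the contributions from the neighbors of $u$ and $v$.
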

\begin{proof}
	Consider the following simple procedure for creating an EDCS $H$ of a given graph $G$: 
	\begin{tbox}
	\begin{enumerate}
		\item[] While $H$ is not an $\EDCS{G,\beta,\beta^-}$ of $G$: 
			\begin{enumerate}
				\item Find an edge $e$ which violates one of the properties of EDCS. 
				\item \emph{Fix} the edge $e$, i.e., remove it from $H$ if it was violating $\propone$ and add it to $H$ if it was violating $\proptwo$. 
			\end{enumerate}
	\end{enumerate}
	\end{tbox}
	The output of the above procedure is clearly an EDCS of graph $G$. However, a-priori it is not clear that this procedure ever terminates as fixing one edge $e$ can result in many edges violating the EDCS properties, potentially undoing
	the previous changes. In the following, we use a potential function argument to show that this procedure always terminates after a finite number of steps, hence implying that an $\EDCS{G,\beta,\beta-1}$ always exists. 
	
	We define the following potential function $\Phi$: 
	\[
		\Phi:= \paren{\beta - 1/2} \cdot \sum_{u \in V} \deg{H}{u} - \sum_{(u,v) \in H} \deg{H}{u} + \deg{H}{v}. 
	\]
	We argue that in any step of the procedure above, the value of $\Phi$ increases by at least $1$. Since the maximum value of $\Phi$ is at most $O(n \cdot \beta^2)$, this immediately implies that this procedure terminates in $O(n \cdot \beta^2)$ steps.
	
	Define $\Phi_1 := \paren{\beta - 1/2} \cdot \sum_{u \in V} \deg{H}{u}$ and $\Phi_2 := - \sum_{(u,v) \in H} \deg{H}{u} + \deg{H}{v}$ (note the minus sign) and hence $\Phi = \Phi_1 + \Phi_2$. 
	Let $(u,v)$ be the edge we choose to fix at this step, $H$ be the subgraph before fixing the edge $(u,v)$, and $H'$ be the resulting subgraph. 
	
	Suppose first that the edge $(u,v)$ was violating \propone of EDCS.  As the only change is in the degrees of vertices $u$ and $v$,
	$\Phi_1$ decreases by $(2\beta-1)$. On the other hand, $\deg{H}{u} + \deg{H}{v} \geq \beta+1$ originally (as $(u,v)$ was violating \propone of EDCS) and hence after removing $(u,v)$ $\Phi_2$
	increases by $\beta+1$. Additionally, for each neighbor $w$ of $u$ and $v$ in $H'$, after removing the edge $(u,v)$, $\deg{H'}{w}$ decreases by one. As there are at least 
	$\deg{H'}{u} + \deg{H'}{v} = \deg{H}{u} + \deg{H}{v} -2 \geq \beta - 1$ choices for $w$, 
	this means that in total, $\Phi_2$ increases by at least $(\beta + 1) + (\beta - 1) = 2\beta$. As a result, in this case $\Phi$ increases by at least $1$ after fixing the edge $(u,v)$. 
	
	Now suppose that the edge $(u,v)$ was violating \proptwo of EDCS instead. In this case, degree of vertices $u$ and $v$ both increase by one, hence $\Phi_1$ increases by $2\beta-1$. 
	Additionally, not that since edge $(u,v)$ was violating $\proptwo$ we have 
$\deg{H}{u} + \deg{H}{v} \leq \beta^- - 1$, so the addition of edge $(u,v)$ decreases $\Phi_2$ by 
at most $\deg{H'}{u} + \deg{H'}{v} = \deg{H}{u} + \deg{H}{v} + 2 \leq \beta^- + 1$.
	Moreover, for each neighbor $w$ of $u$ and $v$, after adding the edge $(u,v)$, $\deg{H'}{w}$ increases by one and since
	there are at most $\deg{H}{u} + \deg{H}{v} \leq \beta^- - 1$ choices for $w$, $\Phi_2$ decreases in total by at most $(\beta^- + 1) + (\beta^- - 1) = 2\beta^-$. 
	Since $\beta^-  \leq \beta-1$, we have that $\Phi$ increases by at least 
	$(2\beta-1) - (2\beta^-) \geq 1$ after fixing the edge $(u,v)$, finalizing the proof. 
\end{proof}

We remark that this proof also implies a natural polynomial time algorithm for computing any EDCS of a given graph $G$. 


\end{document}